\newfont{\bbb}{msbm10 scaled 500}
\newfont{\bb}{msbm10 scaled 1100}
\newcommand{\CC}{\mbox{\bb C}}
\newcommand{\RR}{\mbox{\bb R}}
\newcommand{\EE}{\mbox{\bb E}}
\newcommand{\NN}{\mbox{\bb N}}
\newcommand{\hv}{{\bf h}}
\newcommand{\mv}{{\bf m}}
\newcommand{\nv}{{\bf n}}
\newcommand{\uv}{{\bf u}}
\newcommand{\wv}{{\bf w}}
\newcommand{\xv}{{\bf x}}
\newcommand{\yv}{{\bf y}}
\newcommand{\zv}{{\bf z}}
\newcommand{\zerov}{{\bf 0}}
\newcommand{\Am}{{\bf A}}
\newcommand{\Bm}{{\bf B}}
\newcommand{\Dm}{{\bf D}}
\newcommand{\Fm}{{\bf F}}
\newcommand{\Hm}{{\bf H}}
\newcommand{\Id}{{\bf I}}
\newcommand{\Pm}{{\bf P}}
\newcommand{\Qm}{{\bf Q}}
\newcommand{\Rm}{{\bf R}}
\newcommand{\Tm}{{\bf T}}
\newcommand{\Um}{{\bf U}}
\newcommand{\Wm}{{\bf W}}
\newcommand{\Xm}{{\bf X}}
\newcommand{\Zm}{{\bf Z}}
\newcommand{\Cc}{{\cal C}}
\newcommand{\Nc}{{\cal N}}
\newcommand{\Oc}{{\cal O}}
\newcommand{\Rc}{{\cal R}}
\newcommand{\Sc}{{\cal S}}
\newcommand{\Lambdam}{\hbox{\boldmath$\Lambda$}}
\newcommand{\diag}{{\hbox{diag}}}
\newcommand{\trace}{{\hbox{tr}}}
\renewcommand{\Re}{{\rm Re}}
\renewcommand{\Im}{{\rm Im}}
\newcommand{\htp}{^{\sf H}} %Hermitian transpose
\newcommand{\tp}{^{\sf T}}  %transpose
\def\LSB{\left[}        %Left square bracket
\def\RSB{\right]}       %Right square bracket
\def\LB{\left(}         %Left bracket
\def\RB{\right)}        %Right bracket
\newcommand{\asto}[1]{\xrightarrow[#1]{\text{a.s.}}}
\newcommand{\Exp}{{\mathbb{E}}}
\newcommand{\B}{{\bf B}}
\newcommand{\A}{{\bf A}}
\newcommand{\F}{{\bf F}}
\newcommand{\G}{{\bf G}}
\newcommand{\X}{{\bf X}}
\newcommand{\W}{{\bf W}}
\renewcommand{\P}{{\bf P}}
\newcommand{\T}{{\bf T}}
\newcommand{\V}{{\bf V}}
\newcommand{\R}{{\bf R}}
\renewcommand{\H}{{\bf H}}
\newcommand{\I}{{\bf I}}
\newcommand{\x}{{\bf x}}
\renewcommand{\v}{{\bf v}}
\newcommand{\y}{{\bf y}}
\newcommand{\z}{{\bf z}}
\newcommand{\Z}{{\bf Z}}
\newcommand{\h}{{\bf h}}
\newcommand{\n}{{\bf n}}
\newcommand{\w}{{\bf w}}
\newcommand{\oh}{{\frac{1}{2}}}
\renewcommand{\asto}{\overset{\rm a.s.}{\longrightarrow}}
\newcommand{\herm}{{\sf H}}
\DeclareMathOperator{\tr}{tr}
\newtheorem{theorem}{Theorem}
\newcounter{ccorollary}
\newtheorem{corollary}[ccorollary]{Corollary}
\newcounter{clemma}
\newtheorem{lemma}[clemma]{Lemma}
\newcounter{cdefinition}
\newtheorem{definition}[cdefinition]{Definition}
\newcounter{cremark}
\newtheorem{remark}[cremark]{Remark}
\begin{document}
\bibliographystyle{IEEEtran}

\title{Random Beamforming over \\ Quasi-Static and Fading Channels:\\ A Deterministic Equivalent Approach\thanks{This paper was presented in part at the IEEE International Conference on Communications, Kyoto, Japan, 2011, under the title ``Deterministic Equivalents for the Performance Analysis of Isometric Random Precoded Systems''.}}

\author{Romain~Couillet$^{\dag}$\thanks{$^\dag$Department of Telecommunications, Sup\'elec, 3 rue Joliot Curie, 91192 Gif sur Yvette, France.}, Jakob~Hoydis$^{\ddag}$\thanks{$^\ddag$Bell Labs, Alcatel-Lucent, Lorenzstr. 10, 70435 Stuttgart, Germany.}, and M\'erouane~Debbah$^\star$\thanks{$^\star$Alcatel Lucent Chair on Flexible Radio, Sup\'elec, 3 rue Joliot Curie, 91192 Gif sur Yvette, France.}\thanks{\{\scriptsize\texttt{romain.couillet,merouane.debbah\}@supelec.fr, jakob.hoydis@alcatel-lucent.com}}}

\maketitle

\begin{abstract}
In this work, we study the performance of random isometric precoders over quasi-static and correlated fading channels. We derive deterministic approximations of the mutual information and the signal-to-interference-plus-noise ratio (SINR) at the output of the minimum-mean-square-error (MMSE) receiver and provide simple provably converging fixed-point algorithms for their computation. Although these approximations are only proven exact in the asymptotic regime with infinitely many antennas at the transmitters and receivers, simulations suggest that they closely match the performance of small-dimensional systems. We exemplarily apply our results to the performance analysis of multi-cellular communication systems, multiple-input multiple-output multiple-access channels (MIMO-MAC), and MIMO interference channels. The mathematical analysis is based on the Stieltjes transform method. This enables the derivation of deterministic equivalents of functionals of large-dimensional random matrices. In contrast to previous works, our analysis does not rely on arguments from free probability theory which enables the consideration of random matrix models for which asymptotic freeness does not hold. Thus, the results of this work are also a novel contribution to the field of random matrix theory and applicable to a wide spectrum of practical systems. 
\end{abstract}

\clearpage
\section{Introduction}
\label{sec:intro}
Consider the following discrete time wireless channel model
\begin{align}
	\label{eq:channel}
 \y = \sum_{k=1}^K \H_k\W_k\P_k^\frac12\x_k + \n
\end{align}
where 
\begin{itemize}
\item[(i)] $\y\in\CC^N$ is the channel output vector,
\item[(ii)] $\H_k\in \CC^{N\times N_k},\, k\in\{1,\dots,K\}$, are complex channel matrices, satisfying either of the following properties:
	\begin{itemize}
		\item[(ii-a)] The matrix $\H_k$ is deterministic. In this case, we will denote $\R_k=\H_k\H_k^\herm$.
		\item[(ii-b)] The matrix $\H_k$ is a random channel matrix whose $j$th column vector $\h_{kj}\in\CC^{N}$ is modeled as
\begin{align}
	\label{eq:channelmodel}
	\h_{kj} = \Rm_{kj}^{\frac12}\z_{kj},\qquad j\in\{1,\dots,N_k\} 
\end{align}
where $\Rm_{kj}\in\CC^{N\times N}$ are Hermitian nonnegative definite matrices and the vectors $\z_{kj}\in\CC^{N}$ have independent and identically distributed (i.i.d.) elements with zero mean, variance $1/N$ and $4+\epsilon$ moment of order $\Oc(1/N^{2+\varepsilon/2})$, for some common $\epsilon>0$.
	\end{itemize}
\item[(iii)] $\W_k\in \CC^{N_k\times n_k},\, k\in\{1,\dots,K\}$, are complex (signature or precoding) matrices which contain each $n_k< N_k$ orthonormal columns of independent $N_k\times N_k$ Haar-distributed random unitary matrices,\footnote{We recall that a Haar random matrix $\W_k\in\CC^{N_k\times N_k}$ is defined by $\W_k=\X_k(\X_k\htp\X_k)^{-\frac12}$ for $\X_k$ a random matrix with independent entries $\mathcal{CN}(0,1)$ entries.}
\item[(iv)] $\P_k\in\RR^{n_k\times n_k},\,k\in\{1,\dots,K\}$, are diagonal (power loading) matrices with nonnegative entries,
\item[(v)] $\x_k\sim \mathcal{CN}(0,\I_{n_k})$, $k\in\{1,\dots,K\}$, are random independent transmit vectors,
\item[(vi)] $\n\sim \mathcal{CN}(0,{\sigma^2}\I_{N})$ is a noise vector.
\end{itemize}
In addition, we define the ratios of the matrix dimensions $c_i\triangleq \frac{n_i}{N_i}$ and $\bar{c}_i\triangleq \frac{N_i}{N}$ for $i\in\{1,\dots,K\}$.

\bigskip
\begin{remark}\label{rem:kronecker}
The statistical model \eqref{eq:channelmodel} of the channel $\H_k$ under assumption (ii-b) generalizes several well-known fading channel models of interest (see \cite{WAG10,hoydis2011} for examples). These models comprise in particular the {\it Kronecker} channel model with transmit and receive correlation matrices \cite{CHU02,COU09}, where the matrices $\H_k$ are given by
\begin{align}\label{eq:kronecker}
\H_k = \Rm_k^\frac12\Z_k\T_k^\frac12  
\end{align}
with $\Z_k\in\CC^{N\times N_k}$ a random matrix whose elements are independent $\mathcal{CN}(0,1/N)$ and $\Rm_k\in\CC^{N\times N}$, $\T_k\in\CC^{N_k\times N_k}$ antenna correlation matrices. Since both $\Z_k$ and $\W_k$ are unitarily invariant, we can assume without loss of generality for the statistical properties of $\y$ that $\T_k=\diag(t_{k1},\ldots,t_{kN_k})$. Defining the matrices $\Rm_{kj}=t_{kj}\R_k$ for $j\in\{1,\dots,N_k\}$, we fall back to the channel model in \eqref{eq:channelmodel}. 
Taking instead all $\Rm_{kj}$ to be diagonal matrices makes the entries of $\H_k$ independent with $[\H_k]_{ij}$ of zero mean and variance $[\Rm_{kj}]_{ii}/N$. This corresponds to a centered variance profile model, studied extensively in \cite{HAC07,HAC08,DUM10}.
\end{remark}
\bigskip

The objective of this work is to study the performance of the communication channel \eqref{eq:channel} in the large dimensional regime where $N,N_1,\ldots,N_K,n_1,\ldots,n_K$ are simultaneously large. In the following, we will consider both the {\it quasi-static channel scenario} which assumes hypotheses (i), (ii-a), (iii)-(vi), and the {\it fading channel scenario} which assumes (i), (ii-b), (iii)-(vi). The study of the latter naturally arises as an extension of the study of the quasi-static channel scenario. The respective application contexts and an overview of related works for both scenarios are summarized below.

\subsection{Quasi-static channel scenario (hypothesis (ii-a))}
Possible applications of the channel model \eqref{eq:channel} under assumptions (i), (ii-a), (iii)-(vi) arise in the study of direct-sequence (DS) or multi-carrier (MC) code-division multiple-access (CDMA) systems with isometric signatures over frequency-selective fading channels or space-division multiple-access (SDMA) systems with isometric precoding matrices over flat-fading channels. More precisely, for DS-CDMA systems, the matrices $\H_k$ are either Toeplitz or circulant matrices (if a cyclic prefix is used) constructed from the channel impulse response; for MC-CDMA, the matrices $\H_k$ are diagonal and represent the channel frequency response on each sub-carrier; for flat fading SDMA systems, the matrices $\H_k$ can be of arbitrary form and their elements represent the complex channel gains between the transmit and receive antennas. In all cases, the diagonal entries of the matrices $\P_k$ determine the transmit power of each signature (CDMA) or transmit stream (SDMA).

The large system analysis of random i.i.d. and random orthogonal precoded systems with optimal and sub-optimal linear receivers has been the subject of numerous publications. The asymptotic performance of minimum-mean-square-error (MMSE) receivers for the channel model \eqref{eq:channel} for the case $K=1, \P_1=\I_{n_1}$, and $\H_1$ diagonal with i.i.d. elements has been studied in \cite{DEB03} relying on results from free probability theory. This result was extended to frequency-selective fading channels and sub-optimal receivers in \cite{HAC02}. Although not published, the associated mutual information was evaluated in \cite{HAC00} (this result is recalled in \cite[Theorem 4.11]{COUbook}). The case of i.i.d. and isometric MC-CDMA over Rayleigh fading channels with multiple signatures per user terminal, i.e., $K\ge 1$ and $\H_k$ diagonal with i.i.d. complex Gaussian entries, was considered in \cite{PEA04}, where approximate solutions of the signal-to-noise-plus-interference-ratio (SINR) at the output of the MMSE receiver were provided. Asymptotic expressions for the spectral efficiency of the same model were then derived in \cite{PEA06}. 
DS-CDMA over flat-fading channels, i.e., $K\ge 1$, $n_k=N$, and $\H_k=\I_N$ for all $k$, was studied in \cite{COU09d}, where the authors derived deterministic equivalents of the Shannon- and $\eta$-transform based on the asymptotic freeness \cite[Section 3.5]{COUbook} of the matrices $\W_k\P_k\W_k^\herm$. Besides, a sum-rate maximizing power-allocation algorithm was proposed.
Finally, a different approach via incremental matrix expansion \cite{PEA08} led to the exact characterization of the asymptotic SINR of the MMSE receiver for the general channel model \eqref{eq:channel}. However, the previously mentioned works share the underlying assumption that the spectral distributions of the matrices $\H_k$ and $\P_k$ converge to some limiting distributions or that the matrices $\H_k\H_k^\herm$ are jointly diagonalizable.\footnote{That is, there exists a unitary matrix $\V$ such that $\V\H_k\H_k^\herm\V^\herm$ is diagonal for all $k$.} In addition, the computation of the asymptotic SINR requires the computation of rather complicated implicit equations. These can be solved in most cases by standard fixed-point algorithms but a proof of convergence to the correct solution was not provided. Finally, a closed-form expression for the asymptotic spectral efficiency is missing, although an approximate solution which requires numerical integration was presented in \cite{PEA06}. Alternative combinatoric methods also exist, such as the diagrammatic approach \cite{MOU08}, to evaluate the successive moments of the limiting eigenvalue distribution of such matrix models.

The above results assume non-random communication channels $\H_k$ and can only be applied to the performance analysis of static or slow fading channels. Turning the matrices $\H_k$ into random matrices instead allows for the study of the ergodic performance of fast fading channels with isometric precoders. The next section discusses the practical applications in this broader context.

\subsection{Fading channel scenario (hypothesis (ii-b))}
The second scenario considers the channel model \eqref{eq:channel} under assumptions (i), (ii-b), (iii)-(vi). In contrast to the first scenario, the $\H_k$ matrices are now assumed to be random. Thus, we aim at evaluating both the instantaneous performance for a random channel realization and the ergodic performance. These are appropriate performance measures in fast fading environments. 

Of particular interest in this setting is the evaluation of the multiple-input multiple-output (MIMO) channel capacity under random beamforming. In point-to-point MIMO channels, the ergodic channel capacity has been the object of numerous works and is by now well understood \cite{TEL99,DUM06}. However, the ergodic sum-rate of more involved models, such as the MIMO multiple access channel (MIMO-MAC) \cite{COU09} under individual or sum power constraints, has been studied only recently within the scope of random matrix theory.  Another important aspect is the capacity of MIMO channels with co-channel interference, for which much less is known about the optimal transmission strategies \cite{blum2002,blum2003}.
The first interesting question relates to the problem of how many antennas should be used for transmission and how many independent data streams should be sent, which are the same problem when the channels have i.i.d.\@ entries. With transmit antenna correlation, however, it makes a difference which antennas are selected for transmission and the question of the optimal number of antennas to be used becomes a combinatorial problem. To circumvent this issue, random beamforming can be used. The remaining question is then how many orthogonal streams should be sent, using {\it all} available antennas. We will address this problem later in this article, as our results enable the evaluation of the sum-rate of systems composed of multiple transmitter-receiver pairs, each applying random isotropic beamforming.\bigskip

In summary, regardless of the specific application scenario of the model \eqref{eq:channel}, unitary precoders have gained significant interest in wireless communications \cite{tse2002} (see also the recent work on spatial multiplexing systems \cite{LOV05} and limited feedback beamforming solutions in future wireless standards \cite{LEE09}). Thus, the performance evaluation of isometric precoded systems is compulsory and a field of active research \cite{HUA09}.

\subsection{Contributions}
The object of this article is to propose a new framework for the analysis of large random matrix models involving Haar matrices using the {\it Stieltjes-transform method} initiated by Pastur and fully exploited by Bai and Silverstein \cite{MAR67,SIL06}. This method is considered today as one of the most practical and powerful tools for handling large random matrices in wireless communications research. Our analysis is fundamentally based on a trace lemma for Haar matrices first provided in \cite{DEB03} and recalled in Lemma \ref{le:trace_Haar} (Appendix~\ref{sec:fundlemmas}). Unlike previous contributions, we dismiss most of the practical constraints of free probability theory, combinatorial and incremental matrix expansion methods, such as the need for spectral limits of the deterministic matrices in the model to exist, or the need for the matrices $\H_k\H_k^\herm$ to be diagonalizable in a common eigenvector basis. The expressions we derive appear to be very similar to previously derived expressions when the precoding matrices $\W_k$ have i.i.d.\@ entries instead of being Haar distributed (see in particular Remark~\ref{rem:iid}). This allows for a unified understanding of both models with i.i.d.\@ or Haar matrices. As a consequence, we believe that the generality of the theoretical results presented in this article, supported by a large scope of application contexts, might stimulate further related research. We also mention that an alternative method to prove the results of this paper could be based on the integration by parts formula for Gaussian random matrices developed by Pastur \cite{PAS11}.\bigskip

Before summarizing our main contributions, we introduce some definitions which will be of repeated use. The central object of interest is the matrix $\B_N\in\CC^{N\times N}$, defined as 
\begin{align*}
\Bm_N = \sum_{k=1}^K \Hm_k\Wm_k\Pm_k\Wm_k\htp\Hm_k\htp.
\end{align*}
We denote by $I_N({\sigma^2})$ the normalized mutual information of the channel \eqref{eq:channel}, given by \cite{COV06}
\begin{align*}
 I_N({\sigma^2}) = \frac{1}{N}\log\det\LB \Id_N +\frac{1}{{\sigma^2}}\Bm_N\RB \quad (\text{nats/s/Hz}).
\end{align*}
 We further denote by $\gamma^N_{kj}(\sigma^2)$ the SINR at the output of the linear MMSE detector for the $j$th component of the transmit vector $\xv_k$, which reads \cite{verdubook}
\begin{align*}
 \gamma^N_{kj}({\sigma^2}) = p_{kj}\wv_{kj}\htp\Hm_k\htp\LB{\Bm_N}_{(k,j)} + {\sigma^2}\Id_N\RB^{-1}\Hm_k\wv_{kj}
\end{align*} 
where ${\Bm_N}_{(k,j)} = \Bm_N - p_{kj}\Hm_k\wv_{kj}\wv_{kj}\htp\Hm_k\htp$ and $\wv_{kj}$ is the $j$th column of $\Wm_k$.
We then define the normalized sum-rate with MMSE detection as
\begin{align*}
 R_N({\sigma^2}) = \frac1N\sum_{k=1}^K\sum_{j=1}^{n_k}\log\LB1+\gamma^N_{kj}({\sigma^2})\RB.
\end{align*} 
Depending on whether we consider the quasi-static channel scenario (ii-a) or the fading channel scenario (ii-b), we rename $I_N({\sigma^2})$ by $I^{(a)}_N({\sigma^2})$ and $I^{(b)}_N({\sigma^2})$, the mutual information under hypothesis (ii-a) and (ii-b), respectively. The same holds for $\gamma^N_{kj}({\sigma^2})$ and $ R_N({\sigma^2})$.

The technical contributions of this paper are as follows: we derive deterministic approximations $\bar{I}_N({\sigma^2})$, $\bar{\gamma}^N_{kj}(\sigma^2)$, and $\bar{R}_N(\sigma^2)$ of $I_N({\sigma^2})$, $\gamma^N_{jk}(\sigma^2)$, and $R_N(\sigma^2)$, respectively, which are (almost surely) asymptotically tight as the system dimensions $N,N_i,n_i$ grow large at the same rate (denoted simply $N\to\infty$). These approximations, often referred to as \emph{deterministic equivalents}, are easy to compute as they are shown to be the limits of simple (provably converging) fixed-point algorithms, they are given in closed form and do not require any numerical integration, and they require only very general conditions on the matrices $\H_k$ and $\P_k$.

We then present several applications of our results to wireless communications. First, we consider a cellular uplink orthogonal SDMA communication model with inter-cell interference, assuming independent codes in adjacent cells and quasi-static channels at all communication pairs.
We then study a MIMO multiple access channel (MAC) from several multi-antenna transmitters to a multi-antenna receiver under the fading channel scenario (hypothesis (ii-b)). The transmitters are unaware of the channel realizations and send an arbitrary number of independent data streams using isometric random beamforming vectors. The receiver is assumed to be aware of all instantaneous channel realizations and beamforming vectors. Under this setting, we derive an approximation for the achievable sum-rate and mutual information. 
Finally, we address the problem of finding the optimal number of independent streams to be transmitted in a two-by-two interference channel. Although the use of deterministic approximations in this context requires an exhaustive search over all possible stream-configurations, it is computationally much less expensive than Monte Carlo simulations. Extensions to more than two transmit-receive pairs and possible different objective functions, e.g., weighted sum-rate or sum-rate with MMSE decoding, are straightforward and not presented. 

For all these applications, numerical simulations show that the deterministic approximations are very tight even for small system dimensions. In the interference channel model, these simulations suggest in particular that, at low SNR, it is optimal to use all streams while, at high SNR, stream-control, i.e., transmitting less than the maximal number of streams, is beneficial. 

Our work also constitutes a novel contribution to the field of random matrix theory as we introduce new proof techniques based on the Stieltjes transform method for random isometric matrices. Namely, we provide in Theorem~\ref{th:sumRWT} (Appendix~\ref{app:sumRXT}) a deterministic equivalent $\bar{F}_N$ of the eigenvalue distribution $F_N$ of $\B_N$, referred to as the {\it empirical spectrum distribution} (e.s.d.). That is, $\bar{F}_N$ is such that, as $N\to\infty$, $F_N - \bar{F}_N \Rightarrow 0$, this convergence being valid almost surely. Although deterministic equivalents of e.s.d. are by now more or less standard and have been developed for rather involved random matrix models \cite{HAC07,COU09,WAG10}, results for the case of isometric (Haar) matrices are still an exception. In particular, most results on Haar matrices are based on the assumption of {\it asymptotic freeness} of the underlying matrices, a requirement which is rarely met for the matrices in the channel model \eqref{eq:channel} of interest here. The approach taken in this work is therefore novel as it does not rely on free probability theory \cite{VOI92,PET06} and we do not require any of the matrices in \eqref{eq:channel} to be asymptotically free. Interestingly, a very recent extension of free probability theory, coined free deterministic equivalents \cite{SPE11}, has come as a response to the present article in which free probability tools are developed to tackle the aforementioned limitations.

The remainder of this article is structured as follows: in Section \ref{sec:results}, we introduce the main results of this work, the proofs of which are postponed to the appendices. In Section \ref{sec:simu}, the results are applied to the practical wireless communication models discussed above. Section \ref{sec:conclusion} concludes the article.

\vspace{10pt}\textbf{Notations}: Boldface lower and upper case symbols represent vectors and matrices, respectively. $\Id_N$ is the size-$N$ identity matrix and $\diag(x_1,\dots,x_N)$ is a diagonal matrix with elements $x_i$. The trace, transpose and Hermitian transpose operators are denoted by $\trace(\cdot)$, $(\cdot)\tp$ and $(\cdot)\htp$, respectively. 
The spectral norm of a matrix $\Am$ is denoted by $\lVert\Am\rVert$, and, for two matrices $\Am$ and $\Bm$, the notation $\Am\succ\Bm$ means that $\Am-\Bm$ is positive-definite. 
The notations $\Rightarrow$ and $\asto$ denote weak and almost sure convergence, respectively. We use $\Cc\Nc\left(\mv,\Rm\right)$ to denote the circular symmetric complex Gaussian distribution with mean $\mv$ and covariance matrix $\Rm$. We denote by $\RR_+$ the set $[0,\infty)$ and by $\CC_+$ the set $\{z\in\CC,\Im[z]>0\}$. 
Denote by $\mathcal{C}(X,Y)$ the set of continuous functions from $X\subset\CC$ to $Y\subset\CC$, by $\mathcal{H}(X,Y)$ the set of holomorphic functions from $X\subset\CC$ to $Y\subset\CC$, and by $\mathcal{S}(X)$ the class {\it Stieltjes transforms} of finite measures supported by $X\subset \RR$ (see Definition \ref{def:ST} in Appendix \ref{app:sumRXT}).

\section{Main results}
\label{sec:results}
In this section, we present the main results of the article. All proofs are deferred to the appendices. We will distinguish the results for the quasi-static and the fading channel scenarios. Since we will make limiting considerations as the system dimensions grow large, some technical assumptions will be necessary:

\begin{description}\label{as:Ninfty}
\item[{\bf A1}]
The notation $N\to\infty$ denotes the simultaneous growth of $N,N_i,n_i$ for all $i$, in such a way that the ratios $c_i=\frac{n_i}{N_i}$ and $\bar{c}_i=\frac{N_i}{N}$ satisfy $0\leq \liminf_N c_i\leq \limsup_N c_i< 1$ and $0<\lim\inf_N \bar{c}_i \leq \lim\sup_N \bar{c}_i<\infty$.
\end{description}
For all convergence results in this paper (as $N\to\infty$), the matrices $\Pm_k=\Pm_k(N)\in\RR_+^{n_k\times n_k}$, $\Hm_k=\Hm_k(N)\in\CC^{N\times N_k}$ (as well as the $\Rm_{kj}=\Rm_{kj}(N)\in\CC^{N\times N}$ under assumption (ii-b)), and $\Wm_k=\Wm_k(N)\in\CC^{N_K\times n_k}$ should be understood as sequences of (random) matrices with growing dimensions. Wherever this is clear from the context, we drop the dependence on $N$ to simplify the notations.

In order to control the power loading matrices as the system grows large, we need the following assumption:
\begin{description}\label{as:limpow}
\item[{\bf A2}]
 There exists $P>0$ such that, for all $k$, $\lim\sup_N\lVert\Pm_k\rVert\leq P$. 
 \end{description}

Under (ii-a), the channel gains will need to remain bounded for all large $N$:
\begin{description}\label{as:bounded_H}
\item[{\bf A3}-a]
 There exists $R>0$ such that $\max_k\lim\sup_N\lVert\R_k\rVert\leq R$, where we recall that $\R_k=\H_k\H_k^\herm$. 
 \end{description}

The equivalent constraint under (ii-b) is that the channel correlations remain bounded for all large $N$:
\begin{description}\label{as:bounded_R}
\item[{\bf A3}-b]
	There exists $R>0$ such that $\lim\sup_N\lVert\R_{kj}\rVert\leq R$ for all $j,k$. 
 \end{description}

Due to some technical issues, it will be sometimes necessary to require the following condition:
\begin{description}\label{as:finset}
\item[{\bf A4}]
	For all random matrices $\H_k$ within a set of probability one, there exists $M>0$ such that $\max_k\|\H_k\H_k^\herm\|<M$ for all large $N$.
\end{description}
Assumption {\bf A4} is met in particular in the situation when there exists $m>0$, such that for all $k,j,N$, $\Rm_{kj}\in\Rc_N$ with $\Rc_N$ a discrete set of cardinality $|\Rc_N|<m$ for all $N$ (see the arguments in \cite{COU09}). For example, this holds true for the scenario of a common correlation matrix at each receiver, i.e.,  $\Rm_{kj}=\bar{\R}_k$ are equal for all $j$.

\subsection{Fundamental Equations}

We first introduce the \emph{fundamental equations} for model \eqref{eq:channel}. These equations provide the core deterministic quantities that will define the deterministic equivalents for $I_N(\sigma^2)$, $\gamma^N_{ij}(\sigma^2)$, and $R_N(\sigma^2)$.

\begin{theorem}[Fundamental equations under (ii-a)]
  \label{th:fundamental_eq}
  \label{th:fundequdet}
  Consider the system model \eqref{eq:channel} under assumptions (i), (ii-a), (iii)-(vi). Let ${\sigma^2}>0$. Then the following system of implicit equations
  \begin{align}
	  \label{eq:pi_a}
	  \bar{a}_k({\sigma^2}) &= \frac1N\tr \P_k\left(a_k({\sigma^2}) \P_k + [\bar{c}_k-a_k({\sigma^2}) \bar{a}_k({\sigma^2})] \I_{n_k} \right)^{-1} \nonumber \\
    	  a_k({\sigma^2}) &= \frac1N\tr \R_k\left(\sum_{j=1}^K \bar{a}_j({\sigma^2}) \R_j +{\sigma^2} \I_N \right)^{-1}
  \end{align}
  with $k\in\{1,\ldots,K\}$, admits a unique solution such that, for all $k$, $a_k(\sigma^2),\bar{a}_k(\sigma^2)\geq 0$, and $0\leq a_k({\sigma^2})\bar{a}_k({\sigma^2}) < c_k\bar{c}_k$. Moreover, this solution is obtained explicitly by the following fixed-point algorithm 
  \begin{equation*} 
	   \bar{a}_k({\sigma^2})=\lim_{t\to\infty}\bar{a}_k^{(t)}({\sigma^2}),\quad {a}_k({\sigma^2})=\lim_{t\to\infty}{a}_k^{(t)}({\sigma^2}), \quad \bar{a}_k^{(t)}({\sigma^2}) = \lim_{l\to \infty} \bar{a}_k^{(t,l)}({\sigma^2})
  \end{equation*}
where, for $k\in\{1,\dots,K\}$,
\begin{align*}
	\bar{a}_k^{(t,l)}({\sigma^2}) &= \frac1N\tr \P_k\left(a_k^{(t)}({\sigma^2}) \P_k + [\bar{c}_k-a_k^{(t)}({\sigma^2}) \bar{a}^{(t,l-1)}_k({\sigma^2})] \I_{n_k} \right)^{-1} \\
	a_k^{(t)}({\sigma^2}) & = \frac1N\tr \R_k\left(\sum_{j=1}^K \bar{a}^{(t-1)}_j({\sigma^2}) \R_j +{\sigma^2}\I_N \right)^{-1}
\end{align*} 
with initial values $\bar{a}_k^{(t,0)}({\sigma^2})=0$ and $a_k^{(0)}({\sigma^2})=0$.
\end{theorem}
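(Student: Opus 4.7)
The plan is to decouple the nested iteration. For each $k$ and fixed $a_k>0$, the inner equation
\begin{equation*}
\bar{a}_k = \frac{1}{N}\tr\P_k\bigl(a_k\P_k + (\bar{c}_k - a_k\bar{a}_k)\I_{n_k}\bigr)^{-1}
\end{equation*}
is scalar in $\bar{a}_k$. The change of variables $x=a_k\bar{a}_k$ turns it into $x=h(x)$ with
\begin{equation*}
h(x)\defines \frac{1}{N}\sum_{j=1}^{n_k}\frac{a_kp_{kj}}{a_kp_{kj}+\bar{c}_k-x}.
\end{equation*}
First I would verify that on $[0,c_k\bar{c}_k]$ the denominators stay bounded below by $\bar{c}_k(1-c_k)>0$ (since $c_k<1$), that $h$ is strictly increasing and strictly convex (both derivatives are termwise positive), and that $h(0)\geq 0$ while $h(c_k\bar{c}_k)<c_k\bar{c}_k$ (each summand being strictly less than one). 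Strict convexity of $f(x)\defines h(x)-x$ combined with the boundary signs $f(0)\geq 0>f(c_k\bar{c}_k)$ then forces a unique zero $x_k^\star\in[0,c_k\bar{c}_k)$: a second zero in this interval would, by the convex-combination inequality, force $f(c_k\bar{c}_k)\geq 0$, a contradiction. The monotone iteration $x^{(l+1)}=h(x^{(l)})$ starting from $0$ is non-decreasing, bounded by $c_k\bar{c}_k$, and converges to $x_k^\star$, which gives $\bar{a}_k^{(t,l)}\to\bar{a}_k^{(t)}$ as $l\to\infty$ and identifies $\bar{a}_k^{(t)}$ as a continuous function $\bar{a}_k(a_k)$ of $a_k^{(t)}$ (extended at $a_k=0$ by $\bar{a}_k(0)=\tr\P_k/(\bar{c}_kN)$).

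Substituting the inner solution into the outer equation reduces the problem to the $K$-dimensional implicit system
\begin{equation*}
a_k=A_k(a_1,\ldots,a_K)\defines \frac{1}{N}\tr\R_k\Bigl(\sum_{j=1}^K\bar{a}_j(a_j)\R_j+\sigma^2\I_N\Bigr)^{-1},\qquad k=1,\ldots,K.
\end{equation*}
For existence and uniqueness of a fixed point of $A$, I would analytically continue by replacing $\sigma^2$ with $-z$, $z\in\CC_+$, and observe that $-za_k(z)$ and $-z\bar{a}_k(z)$ are Stieltjes transforms of finite nonnegative measures on $\RR_+$, i.e., elements of $\Sc(\RR_+)$. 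Uniqueness on $\CC_+$ is then obtained by a standard subtraction argument: writing the difference of two candidate solutions as a linear system whose coefficient matrix has spectral radius strictly less than one (a contraction enabled by $\Im z>0$) forces the two solutions to coincide, and uniqueness propagates to real $\sigma^2>0$ by analytic continuation together with the strict inequality $a_k\bar{a}_k<c_k\bar{c}_k$. Convergence of the outer iteration to this unique fixed point then follows from the antitone behaviour of $A$ in the $\bar{a}_j$ variables: initialized at $a_k^{(0)}=0$, the even and odd subsequences of $(a_k^{(t)})$ are each monotone and bounded, hence converge, and uniqueness identifies the two limits.

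The principal obstacle is the uniqueness step. The reduced map $A$ depends on $a$ through the implicitly defined functions $\bar{a}_j(a_j)$, whose global monotonicity in $a_j$ is not transparent from the scalar equation and need not hold in general; this prevents a naive application of Yates'-type standard function theory. The Stieltjes-transform route sidesteps the issue by working on $\CC_+$, where the strictly positive imaginary parts furnish the quantitative contraction required. A secondary technicality is maintaining the invariant $0\leq a_k\bar{a}_k<c_k\bar{c}_k$ throughout the joint iteration, which follows inductively from the strict bound $h(c_k\bar{c}_k)<c_k\bar{c}_k$ of the inner step and the uniform bound $a_k^{(t)}\leq R/\sigma^2$ provided by assumption \textbf{A3}-a.
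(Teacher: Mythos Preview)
Your analysis of the inner equation is correct and essentially matches the paper's: a one-dimensional fixed point on $[0,c_k\bar{c}_k)$ obtained by monotonicity and boundary signs, with the monotone iteration converging. The paper phrases this in the variable $\bar{a}_k$ rather than $x=a_k\bar{a}_k$, but the content is the same.

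The divergence is in the outer step, and the central issue is your claim that ``the implicitly defined functions $\bar{a}_j(a_j)$, whose global monotonicity in $a_j$ is not transparent from the scalar equation and need not hold in general; this prevents a naive application of Yates'-type standard function theory.'' This is exactly where the paper does the work, and the monotonicity \emph{does} hold: $\bar{a}_k$ is strictly decreasing in $a_k$. The paper proves this by introducing the auxiliary variable $\Delta_k$ defined through $a_k=\Delta_k\bigl((1-c_k)\bar{c}_k+\tfrac1N\sum_l (1+p_{kl}\Delta_k)^{-1}\bigr)$, so that $a_k\bar{a}_k=\tfrac1N\sum_l p_{kl}\Delta_k/(1+p_{kl}\Delta_k)$ and $\bar{c}_k-a_k\bar{a}_k=(1-c_k)\bar{c}_k+\tfrac1N\sum_l(1+p_{kl}\Delta_k)^{-1}$. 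A direct computation of $\frac{d\bar{a}_k}{d\Delta_k}$ reduces the sign question to the inequality
\[
\Bigl(\tfrac1N\sum_{l}\tfrac{p_{kl}\Delta_k}{1+p_{kl}\Delta_k}\Bigr)^{2}\;<\;\tfrac{\bar{c}_k}{N}\sum_{l}\tfrac{(p_{kl}\Delta_k)^{2}}{(1+p_{kl}\Delta_k)^{2}},
\]
which is Cauchy--Schwarz together with $c_k<1$. Since $a_k$ is strictly increasing in $\Delta_k$, this yields $\bar{a}_k$ strictly decreasing in $a_k$, hence $A_k(a_1,\ldots,a_K)$ is increasing in each coordinate. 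With positivity and scalability (the latter uses that $a_k\mapsto a_k\bar{a}_k$ is increasing, which is immediate from the $\Delta_k$ representation), the map $A$ is a standard interference function, and Yates' theorem gives both uniqueness and global convergence of the fixed-point iteration.

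Your Stieltjes-transform detour is a legitimate alternative in principle, but as written it is underspecified (what is the coefficient matrix in the subtraction system, and why is its spectral radius strictly less than $1$ uniformly near the real axis?), and your final convergence argument via ``antitone behaviour'' and even/odd subsequences again presupposes a monotonicity of $A$ in $a$ that you have just declared unavailable. The missing idea is precisely the $\Delta_k$ change of variables and the Cauchy--Schwarz step above; once you have it, the standard-function route is both shorter and cleaner than the complex-analytic one.
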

\begin{proof}
 The proof is provided in Appendix \ref{app:sumRXT}.
\end{proof}

\bigskip
\begin{remark}
  \label{rem:iid}
  Assume $\bar{c}_k=1$ for every $k$ (e.g., when $\H_k$ is a Toeplitz matrix as in the CDMA case). Extending every $\P_k\in\CC^{n_k\times n_k}$ into $N\times N$ matrices filled with zeros, we may assume $c_k=1$ without affecting the final result. In this scenario, the fundamental equations \eqref{eq:channel} under (ii-a) become
  \begin{align}
	  \label{eq:NiN}
	  \bar{a}_k(\sigma^2) &= \frac1N\tr \P_k\left(a_k(\sigma^2) \P_k + [1-a_k(\sigma^2) \bar{a}_k(\sigma^2)] \I_N \right)^{-1} \\
    	  a_k(\sigma^2) &= \frac1N\tr \R_k\left(\sum_{j=1}^K \bar{a}_j(\sigma^2) \R_j +\sigma^2\I_N \right)^{-1}. \nonumber
  \end{align}
  This can be compared to the scenario where the matrices $\W_k$, instead of being Haar matrices, have i.i.d. entries of variance $1/N$. The fundamental equations of this model were derived in \cite[Corollary 1]{COU09} and are given as follows:
  \begin{align}
    \label{eq:NiNiid}   
    \underline{\bar{a}}_k({\sigma^2}) &= \frac1N\tr \P_k\left( \underline{a}_k({\sigma^2}) \P_k + \I_N \right)^{-1} \\
    \underline{a}_k({\sigma^2}) &= \frac1N\tr \R_k\left(\sum_{j=1}^K \underline{\bar{a}}_j({\sigma^2}) \R_j +{\sigma^2}\I_N \right)^{-1} \nonumber
  \end{align}
  such that $\underline{a}_k({\sigma^2})$ is positive for all $k$. The scalars $\underline{a}_k({\sigma^2})$ and $\bar{\underline{a}}_k({\sigma^2})$ are also defined as the limits of a classical fixed-point algorithm. The only difference between the two sets of equations lies in the additional term $-a_k(\sigma^2)\bar{a}_k(\sigma^2)\I_N$ in \eqref{eq:NiN}, not present in \eqref{eq:NiNiid}. 
\end{remark}
\bigskip

We now turn to the fundamental equations in the fading channel context.
\begin{theorem}[Fundamental equations under (ii-b)]
	\label{th:fundeq}
  Consider the system model \eqref{eq:channel} under assumptions (i), (ii-b), (iii)-(vi). Let ${\sigma^2}>0$. Then, the following system of implicit equations
 \begin{align*}
  \bar{b}_k({\sigma^2}) & = \frac1N\trace\Pm_k\Big( b_k({\sigma^2})\Pm_k + \LSB\bar{c}_k-b_k({\sigma^2})\bar{b}_k({\sigma^2})\RSB\Id_{n_k}\Big)^{-1}\\
b_k({\sigma^2}) & = \frac1N\sum_{j=1}^{N_k}\frac{\zeta_{kj}({\sigma^2})}{1+\bar{b}_k({\sigma^2})\zeta_{kj}({\sigma^2})}\\
\zeta_{kj}({\sigma^2}) &= \frac1N\trace\Rm_{kj}\LB\frac{1}{N}\sum_{k=1}^K\sum_{j=1}^{N_k}\frac{\bar{b}_k({\sigma^2})\Rm_{k,j}}{1+\bar{b}_k({\sigma^2})\zeta_{kj}({\sigma^2})}+{\sigma^2}\Id_N\RB^{-1},\quad j\in\{1,\dots,N_k\}
\end{align*}
with $k\in\{1,\ldots,K\}$, admits a unique solution satisfying $\zeta_{kj}({\sigma^2}),b_k({\sigma^2}),\bar{b}_k(\sigma^2)\ge 0$ and $0\leq b_k({\sigma^2}) \bar{b}_k({\sigma^2})< c_k\bar{c}_k$ for all $k,j$. 
Moreover, this solution is given explicitly by the following fixed-point algorithm
\begin{align*}
	\quad \bar{b}_k({\sigma^2}) &= \lim_{t\to\infty} \bar{b}^{(t)}_k({\sigma^2}),\quad {b}_k({\sigma^2}) = \lim_{t\to\infty} {b}^{(t)}_k({\sigma^2}),\quad \zeta_{kj}({\sigma^2}) = \lim_{t\to\infty} \zeta_{kj}^{(t)}({\sigma^2})
\end{align*} 
where 
\begin{align*}
\bar{b}^{(t)}_k({\sigma^2}) &= \lim_{l\to\infty}\bar{b}^{(t,l)}_k({\sigma^2}), \quad \zeta_{kj}^{(t)}({\sigma^2}) = \lim_{l\to\infty}\zeta_{kj}^{(t,l)}({\sigma^2})\\
b_k^{(t)}({\sigma^2})&=\frac1N\sum_{j=1}^{N_k}\frac{\zeta_{kj}^{(t)}({\sigma^2})}{1+\bar{b}_k^{(t-1)}({\sigma^2})\zeta_{kj}^{(t)}({\sigma^2})}\\
\bar{b}^{(t,l)}_k({\sigma^2})&= \frac1N\trace\Pm_k\LB b_k^{(t-1)}({\sigma^2})\Pm_k+\LSB\bar{c}_k-b_k^{(t-1)}({\sigma^2})\bar{b}_k^{(t,l-1)}({\sigma^2})\RSB\Id_{n_k}\RB^{-1}\\
\zeta_{kj}^{(t,l)}({\sigma^2}) &= \frac1N\trace\Rm_{kj}\LB\frac{1}{N}\sum_{k=1}^K\sum_{j=1}^{N_k}\frac{\bar{b}_k^{(t-1)}({\sigma^2})\Rm_{k,j}}{1+\bar{b}_k^{(t-1)}({\sigma^2})\zeta_{kj}^{(t,l-1)}({\sigma^2})}+{\sigma^2}\Id_N\RB^{-1}
\end{align*}
with the initial values $\zeta_{kj}^{(t,0)}({\sigma^2})=1/{\sigma^2}$, $\bar{b}_k^{(t,0)}=0$ and $b_k^{(0)}({\sigma^2})=0$ for all $k,j$.
\end{theorem}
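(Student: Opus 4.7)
The plan is to replicate the strategy used for Theorem~\ref{th:fundamental_eq}, but with an additional inner layer that accounts for the variance-profile structure of the channels $\Hm_k$. The three-level fixed-point iteration in the statement naturally suggests treating the system hierarchically: the innermost $\zeta_{kj}$ layer is a classical deterministic-equivalent equation of Marchenko--Pastur type, while the two outer layers reproduce the nested structure already analyzed under~(ii-a).

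First, I would fix $\bar{b}=(\bar{b}_1,\ldots,\bar{b}_K)\ge0$ and study the subsystem defining $\zeta_{kj}(\bar{b})$. Introducing $\Tm(\sigma^2)=\bigl(\tfrac{1}{N}\sum_{k,j}\tfrac{\bar{b}_k\Rm_{kj}}{1+\bar{b}_k\zeta_{kj}}+\sigma^2\Id_N\bigr)^{-1}$ and viewing the right-hand side as a map in the variables $u_{kj}\triangleq\bar{b}_k\zeta_{kj}$, one recognises a standard interference function in the sense of Yates under assumption {\bf A3-b}. This yields existence, uniqueness, nonnegativity, continuity in $\bar{b}$, and convergence of the inner iteration in $l$, consistently with the classical deterministic-equivalent results for variance-profile matrices (cf. \cite{HAC07,COU09}). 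For fixed $b_k\ge0$, the $\bar{b}_k$ equation is then treated exactly as in Theorem~\ref{th:fundamental_eq}: the right-hand side, as a function of $\bar{b}_k$, is continuous and strictly increasing on $[0,c_k\bar{c}_k/b_k)$, tends to $+\infty$ at the right endpoint and is finite at $0$, hence admits a unique fixed point in the prescribed region, with the iteration in $l$ converging monotonically.

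The outer iteration in $t$ couples these two subsystems through the definition of $b_k$. I would compose the maps to reduce to a single fixed-point equation in $\bar{b}$: given $\bar{b}^{(t-1)}$, solve for $\zeta^{(t)}=\zeta(\bar{b}^{(t-1)})$, then compute $b^{(t)}$, and finally update $\bar{b}^{(t)}$ via the single-variable step of the previous paragraph. Uniform bounds coming from {\bf A2}, {\bf A3-b}, and the constraint $b_k\bar{b}_k<c_k\bar{c}_k$ confine all iterates to a compact set, ensuring existence of cluster points. The main obstacle will be establishing global uniqueness of the triple $(\bar{b},b,\zeta)$, and hence the unambiguous convergence of the composite iteration. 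My plan here is to argue by contradiction: subtracting the two copies of the defining matrix identity for $\Tm$ associated with two putative solutions, applying the resolvent identity, and exploiting the strict positivity of the Hermitian contributions from $\Rm_{kj}$ and $\Pm_k$ to conclude that the difference vanishes, in the spirit of the uniqueness argument for Theorem~\ref{th:fundamental_eq}. Alternatively, uniqueness can be inherited from Theorem~\ref{th:sumRWT}, in which this system arises as the stationary set of equations characterising the deterministic equivalent of the resolvent of $\Bm_N$; in that case the present theorem reduces to a post-hoc verification that the fixed-point algorithm converges to the already-unique solution.
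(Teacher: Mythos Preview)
Your hierarchical decomposition is exactly the right picture, and the paper follows the same layering: for fixed $b_k\ge 0$ the $\bar{b}_k$ equation has a unique solution in $[0,c_k\bar{c}_k/b_k)$ (as in Theorem~\ref{th:fundamental_eq}), and for fixed $\bar{b}$ the $\zeta_{kj}$ system is a variance-profile deterministic equivalent with a unique nonnegative solution. One small correction: the right-hand side of the $\bar{b}_k$ equation does \emph{not} blow up at the right endpoint; at $\bar{b}_k=c_k\bar{c}_k/b_k$ one has $\bar{c}_k-b_k\bar{b}_k=(1-c_k)\bar{c}_k>0$, and the relevant observation is rather that the right-hand side is then strictly below $c_k\bar{c}_k/b_k$, so the fixed point lies strictly inside the interval.

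Where your plan diverges from the paper is at the outer level. You propose to iterate in $\bar{b}$, obtain cluster points by compactness, and then argue uniqueness separately, either by a resolvent-identity contradiction or by ``inheriting'' it from Theorem~\ref{th:sumRWT}. The second option does not work as stated: Theorem~\ref{th:sumRWT} concerns scenario~(ii-a) with deterministic $\Hm_k$, so the present system is not literally its fixed-point system, and no uniqueness transfers. The first option could in principle be made to work but you would need precisely the monotonicity relations that the paper isolates, namely that $x_k\mapsto \bar{b}_k$ is decreasing while $x_k\mapsto x_k\bar{b}_k$ is increasing, and that $\bar{b}_k\mapsto \zeta_{kj}$ is nonincreasing while $\bar{b}_k\mapsto \bar{b}_k\zeta_{kj}$ is increasing. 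The paper proves the latter pair by an elegant trick: it realises $\zeta_{kj}$ as the almost-sure limit of $\frac{1}{LN}\tr \Rm_{kj}^L(\tilde{\Bm}^L+\sigma^2\Id)^{-1}$ for an $L$-fold block-diagonal extension of the model, and then reads off the required inequalities from the ordering of the resolvents via Lemma~\ref{lem:traceinequ}.

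With these monotonicity facts in hand, the paper does not split existence and uniqueness at all: it applies Yates' standard-function theorem directly to the outer map
\[
h_k(x_1,\ldots,x_K)=\frac1N\sum_{j=1}^{N_k}\frac{\zeta_{kj}}{1+\bar{b}_k\zeta_{kj}},
\]
viewed as a function of $(b_1,\ldots,b_K)$ (not of $\bar{b}$), and verifies positivity, monotonicity, and scalability. This yields existence, uniqueness, and convergence of the composite algorithm in one stroke. Your compactness-plus-contradiction route is not wrong in spirit, but it leaves the hardest step (global uniqueness) as an open ``main obstacle'', whereas the paper's choice of outer variable and direct appeal to Yates closes it cleanly.
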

\begin{proof}
 The proof is provided in Appendix \ref{app:proofs}.
\end{proof}

\subsection{System performance}
The following results are all based on the fundamental equations of Theorem \ref{th:fundamental_eq} and Theorem \ref{th:fundeq}.

\begin{theorem}[Mutual information under (ii-a)]
  \label{th:mutual_info}
  \label{th:logdetdet}
  Consider the system model \eqref{eq:channel} under assumptions (i), (ii-a), (iii)-(vi), and denote, for ${\sigma^2}>0$, 
  \begin{equation*}
	  I^{(a)}_N({\sigma^2})=\frac1N\log\det\left(\I_N + \frac1{{\sigma^2}}\B_N\right).
\end{equation*}
Assume {\bf A1}, {\bf A2}, and {\bf A3}-a. Then, as $N\to\infty$,
\begin{align*} 
	\mathbb{E} I^{(a)}_N({\sigma^2}) - \bar{I}^{(a)}_N({\sigma^2}) &\to 0 \\
	I^{(a)}_N({\sigma^2}) - \bar{I}^{(a)}_N({\sigma^2}) &\asto 0
\end{align*}
where 
\begin{align}
  \label{eq:VN}
  \bar{I}^{(a)}_N({\sigma^2}) &= \frac1N\log\det\left(\I_N + \frac1{{\sigma^2}}\sum_{k=1}^K \bar{a}_k\R_k\right) \nonumber \\ &+ \sum_{k=1}^K \left[\frac1N\log\det\left([\bar{c}_k-a_k\bar{a}_k]\I_{n_k} + a_k\P_k \right) + (1-c_k)\bar{c}_k \log(\bar{c}_k-a_k \bar{a}_k) - \bar{c}_k \log(\bar{c}_k) \right]
  \end{align}
  with $a_k=a_k({\sigma^2})$, $\bar{a}_k=\bar{a}_k({\sigma^2})$, $k\in\{1,\ldots,K\}$, given by Theorem~\ref{th:fundamental_eq}.
\end{theorem}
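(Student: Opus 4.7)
The plan is to reduce the mutual information to the Stieltjes transform of $\B_N$ and then invoke the deterministic equivalent established in Theorem~\ref{th:sumRWT} of Appendix~\ref{app:sumRXT}. Writing $m_N(-\omega) = \frac{1}{N}\trace(\B_N+\omega\I_N)^{-1}$, I would start from the identity
\begin{equation*}
I^{(a)}_N(\sigma^2) \;=\; \int_{\sigma^2}^{\infty}\!\left(\frac{1}{\omega}-m_N(-\omega)\right)d\omega,
\end{equation*}
so that the problem reduces to controlling $m_N(-\omega)-\bar m_N(-\omega)$ uniformly enough in $\omega\in[\sigma^2,\infty)$ to move the limit inside the integral. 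The companion trace-convergence result in Theorem~\ref{th:sumRWT} supplies the almost-sure (and expectation) convergence $m_N(-\omega)-\bar m_N(-\omega)\to 0$ pointwise in $\omega$, where $\bar m_N$ is built out of the solutions $a_k(\sigma^2),\bar a_k(\sigma^2)$ of Theorem~\ref{th:fundamental_eq}.

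The first step is the integration argument. Using $\|(\B_N+\omega\I_N)^{-1}\|\le 1/\omega$, both $m_N(-\omega)$ and $\bar m_N(-\omega)$ are $\Oc(\omega^{-1})$; the finer bound $|1/\omega-m_N(-\omega)|\le \omega^{-2}\cdot \frac{1}{N}\trace\B_N$ (with $\frac{1}{N}\trace\B_N$ uniformly bounded under \textbf{A2}--\textbf{A3}-a and $\|\Wm_k\|=1$) yields an integrable dominating function on $[\sigma^2,\infty)$. Dominated convergence then gives convergence of expectations, while a standard exceptional-null-set argument (countable dense set of $\omega$'s plus monotonicity/continuity in $\omega$ of the integrand) upgrades this to the almost-sure statement.

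The second step is to identify the closed-form expression \eqref{eq:VN}. I would verify it by checking the two conditions: (a) $\bar I^{(a)}_N(\sigma^2)\to 0$ as $\sigma^2\to\infty$, which is immediate since $a_k,\bar a_k\to 0$ at that rate; and (b) $\partial_{\sigma^2}\bar I^{(a)}_N(\sigma^2)=\bar m_N(-\sigma^2)-1/\sigma^2$. Carrying out the differentiation, one collects an explicit term matching the target together with implicit-derivative contributions involving $\partial_{\sigma^2}a_k$ and $\partial_{\sigma^2}\bar a_k$; the algebraic key is that these implicit pieces cancel \emph{identically} when the fundamental equations \eqref{eq:pi_a} are substituted. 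This cancellation is exactly the variational (free-energy-like) structure of the deterministic equivalent and is what makes the closed form possible. Packaging (a) and (b) against the integral representation of the RHS of \eqref{eq:VN} concludes the identification.

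The main obstacle I expect is twofold. First, establishing uniform-enough control on $m_N(-\omega)-\bar m_N(-\omega)$ over the unbounded range $[\sigma^2,\infty)$ so that one may interchange limit and integral; this is essentially what the decay estimate above is designed for, but writing it carefully under the generality of hypotheses (ii-a) and only \textbf{A2}--\textbf{A3}-a (no spectral-limit assumption on $\Pm_k$ or $\R_k$) requires some care. Second, and more delicate, is the explicit differentiation cancellation in step (b): one must differentiate the two $\log\det$ terms and the $(1-c_k)\bar c_k\log(\bar c_k-a_k\bar a_k)$ correction, group the resulting $\partial_{\sigma^2}a_k$ and $\partial_{\sigma^2}\bar a_k$ coefficients, and recognize that each of these coefficients vanishes by virtue of the fixed-point relations of Theorem~\ref{th:fundamental_eq}. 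This bookkeeping is the core technical computation of the proof.
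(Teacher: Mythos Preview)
Your proposal is correct and matches the paper's approach: integral (Shannon-transform) representation of $I_N^{(a)}$, differentiation with fixed-point cancellation to identify $\bar I_N^{(a)}$, and a dominated-convergence argument. The paper streamlines two points you may find useful: for the a.s.\ statement it exploits the sure bound $\|\B_N\|\le KPR$ (so $F_N$ and $\bar F_N$ have compact support and $F_N-\bar F_N\Rightarrow 0$ directly yields convergence of $\int\log(1+t/\sigma^2)\,d[F_N-\bar F_N]$), and for the cancellation in step~(b) it first pads each $\P_k$ with zeros to an $N_k\times N_k$ matrix, which produces the clean identity $\tfrac1N\sum_{l=1}^{N_k}(\bar c_k-a_k\bar a_k+a_kp_{kl})^{-1}=1$ and makes both $\partial\bar I/\partial a_k$ and $\partial\bar I/\partial\bar a_k$ vanish in one line.
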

\begin{proof}
 The proof is provided in Appendix \ref{app:mutual_info}.
\end{proof}

\bigskip
\begin{theorem}[Mutual information under (ii-b)]\label{th:mutinf}
 Consider the system model \eqref{eq:channel} under assumptions (i), (ii-b), (iii)-(vi), and denote, for ${\sigma^2}>0$, 
  \begin{equation*}
	  I^{(b)}_N({\sigma^2})=\frac1N\log\det\left(\I_N + \frac1{{\sigma^2}}\B_N\right).
\end{equation*}
Assume {\bf A1}, {\bf A2}, {\bf A3}-b, and {\bf A4}. Let $\bar{b}_k=\bar{b}_k({\sigma^2})$, $b_k=b_k({\sigma^2})$ and $\zeta_{kj}=\zeta_{kj}({\sigma^2})$ for all $k,j$ be defined as in Theorem~\ref{th:fundeq}. 
Then, as $N\to\infty$,
\begin{align*} 
	\mathbb{E} I^{(b)}_N({\sigma^2}) - \bar{I}^{(b)}_N({\sigma^2}) &\to 0\\
I^{(b)}_N({\sigma^2}) - \bar{I}^{(b)}_N({\sigma^2}) &\asto 0
\end{align*}
where 
 \begin{align}\nonumber
	 \bar{I}^{(b)}_N({\sigma^2}) &= \bar{V}_N({\sigma^2}) + \frac1N\sum_{k=1}^K\log\det\LB\LSB\bar{c}_k-b_k\bar{b}_k\RSB\Id_{n_k}+b_k\Pm_k\RB + \sum_{k=1}^K(1-c_k)\bar{c}_k\log(\bar{c}_k-b_k\bar{b}_k)-\bar{c}_k\log(\bar{c}_k)\\\label{eq:mutinfeq}
\bar{V}_N({\sigma^2}) &= \frac1N\log\det\LB\Id_N +\frac1{\sigma^2}\frac{1}{N}\sum_{k=1}^K\sum_{j=1}^{N_k}\frac{\bar{b}_k\Rm_{k,j}}{1+\bar{b}_k\zeta_{kj}} \RB-\sum_{k=1}^K\bar{b}_kb_k + \frac1N\sum_{k=1}^K\sum_{j=1}^{N_k}\log\LB1+\bar{b}_k\zeta_{kj}\RB .
\end{align}
\end{theorem}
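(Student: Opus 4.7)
The plan is to reduce the mutual information to an integral of the (resolvent) Stieltjes transform of $\B_N$ and then invoke the deterministic equivalent $\bar m_N(-\omega)$ produced by Theorem~\ref{th:sumRWT} (whose fixed-point system, after the substitution $\sigma^2\leftarrow\omega$, is exactly the one driving Theorem~\ref{th:fundeq}). Starting from the identity
\[
I^{(b)}_N(\sigma^2)=\int_{\sigma^2}^{\infty}\Big(\tfrac1\omega-m_N(-\omega)\Big)\,d\omega,\qquad m_N(-\omega)=\tfrac1N\trace(\B_N+\omega\Id_N)^{-1},
\]
the first step is pointwise (in $\omega>0$) almost-sure and in-expectation convergence $m_N(-\omega)-\bar m_N(-\omega)\to 0$, which is already supplied by Theorem~\ref{th:sumRWT} under assumptions \textbf{A1}--\textbf{A4}. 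The second step is an interchange of limit and integral: for $\omega$ in a bounded interval, \textbf{A4} together with \textbf{A2}, \textbf{A3}-b gives $\Vert\B_N\Vert\le M'$ almost surely for some random $M'$, so $|\frac1\omega-m_N(-\omega)|$ is bounded by $\Vert\B_N\Vert/(\omega(\omega+\Vert\B_N\Vert))$; for $\omega$ large, the same bound yields $O(1/\omega^2)$, which is integrable. Dominated convergence then turns the pointwise convergence into convergence of $\mathbb{E} I^{(b)}_N(\sigma^2)$ and of $I^{(b)}_N(\sigma^2)$ almost surely to $\int_{\sigma^2}^{\infty}(\frac1\omega-\bar m_N(-\omega))\,d\omega$.

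The third and most laborious step is to identify that integral with $\bar I^{(b)}_N(\sigma^2)$ as defined in \eqref{eq:mutinfeq}. The natural route is to verify
\[
\frac{d}{d\sigma^2}\bar I^{(b)}_N(\sigma^2)=\bar m_N(-\sigma^2)-\frac1{\sigma^2},\qquad \lim_{\sigma^2\to\infty}\bar I^{(b)}_N(\sigma^2)=0,
\]
where $\bar m_N(-\sigma^2)=\tfrac1N\trace\bigl(\tfrac1N\sum_{k,j}\tfrac{\bar b_k\Rm_{kj}}{1+\bar b_k\zeta_{kj}}+\sigma^2\Id_N\bigr)^{-1}$. Differentiating each of the three summands of $\bar I^{(b)}_N$ produces an explicit $-1/\sigma^2+\bar m_N(-\sigma^2)$ contribution from the first log-det in $\bar V_N$, together with many terms weighted by $b_k'$, $\bar b_k'$ and $\zeta_{kj}'$; the point is that the fundamental equations of Theorem~\ref{th:fundeq} are precisely the stationarity conditions of $\bar I^{(b)}_N$ viewed as a function of $(b_k,\bar b_k,\zeta_{kj})$, so those implicit-derivative terms cancel in blocks. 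Concretely, $\partial_{\bar b_k}\bar I^{(b)}_N=0$ (contributions from the three summands combine via $b_k=\frac1N\sum_j\frac{\zeta_{kj}}{1+\bar b_k\zeta_{kj}}$), $\partial_{b_k}\bar I^{(b)}_N=0$ (via the defining relation for $\bar b_k$), and $\partial_{\zeta_{kj}}\bar I^{(b)}_N=0$ (via the defining relation for $\zeta_{kj}$); the boundary condition $\bar I^{(b)}_N(\sigma^2)\to 0$ follows from $b_k,\bar b_k,\zeta_{kj}\to 0$ at rate $1/\sigma^2$, making each summand vanish.

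Finally, the almost-sure statement is obtained by combining the a.s.\ pointwise convergence above with the almost-sure control of $\Vert\B_N\Vert$ furnished by \textbf{A4}: on the probability-one event on which $\Vert\B_N\Vert<M$ for all large $N$, the dominating function used in step two is deterministic, so the dominated convergence argument applies path-by-path. The main obstacle is the algebraic step of showing the cancellations in $\tfrac{d}{d\sigma^2}\bar I^{(b)}_N$; it is routine but bookkeeping-heavy because of the three coupled sequences $(b_k,\bar b_k,\zeta_{kj})$ and the fact that $\zeta_{kj}$ itself depends on $\sigma^2$ through an implicit equation of its own. A secondary, minor technical point is ensuring the dominating bound is uniform across $N$, which is where \textbf{A4} is essential, since otherwise $\Vert\B_N\Vert$ could grow and spoil the integrability near $\omega=\sigma^2$.
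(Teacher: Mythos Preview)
Your proposal has a genuine gap: you conflate two different deterministic equivalents. Theorem~\ref{th:sumRWT} is stated under hypothesis (ii-a), i.e., for \emph{fixed} channels $\H_k$. Applied conditionally to the fading model, it gives $m_N(-\omega)-\bar m_N^{(a)}(-\omega)\asto 0$ where
\[
\bar m_N^{(a)}(-\omega)=\frac1N\trace\Big(\sum_{k=1}^K \bar a_k(\omega)\,\H_k\H_k^\herm+\omega\Id_N\Big)^{-1},
\]
with $(a_k,\bar a_k)$ solving the system of Theorem~\ref{th:fundequdet}. This object is still \emph{random} (it depends on the $\H_k$), and the fixed-point system governing it is \emph{not} the system of Theorem~\ref{th:fundeq}: there are no $\zeta_{kj}$ and no deterministic $\Rm_{kj}$ in it. So your claim that Theorem~\ref{th:sumRWT} already delivers convergence to the deterministic $\bar m_N^{(b)}(-\omega)=\frac1N\trace\bigl(\frac1N\sum_{k,j}\frac{\bar b_k\Rm_{kj}}{1+\bar b_k\zeta_{kj}}+\omega\Id_N\bigr)^{-1}$ is incorrect, and step~1 of your plan does not go through as written.

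What is missing is precisely the bridge the paper builds: one must show $\bar a_k-\bar b_k\asto 0$ and $a_k-b_k\asto 0$, which requires a \emph{second} deterministic-equivalent result (for the random-channel part, namely Theorem~\ref{th:detequcorr}) applied to $\sum_k\bar a_k\H_k\H_k^\herm$, together with a stability argument comparing the two coupled fixed-point systems. Only then can the first $\log\det$ term of $\bar I^{(a)}_N$ be replaced by $\bar V_N$ (via Corollary~\ref{cor:logdet}) and the remaining terms matched. Your differentiation idea in step~3 (showing $\bar I^{(b)}_N$ is an antiderivative of $\bar m_N^{(b)}-1/\sigma^2$) is fine and mirrors what the paper does for $\bar I^{(a)}_N$ in Appendix~\ref{app:mutual_info}, but it is orthogonal to the missing step: you would still need to connect $m_N$ to the deterministic $\bar m_N^{(b)}$, and that connection is the substance of the proof.
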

\begin{proof}
 The proof is provided in Appendix \ref{app:proof_mutinf}.
\end{proof}\bigskip

\begin{theorem}[SINR of the MMSE detector under (ii-a)]
  \label{th:SINR_MMSE}
  Consider the system model \eqref{eq:channel} under assumptions (i), (ii-a), (iii)-(vi) and, for ${\sigma^2}>0$, denote
  \begin{equation}
	  \label{eq:mmse}
	  \gamma^{N(a)}_{kj}({\sigma^2}) = p_{kj}\wv_{kj}\htp\Hm_k\htp\LB{\Bm_N}_{(k,j)} + {\sigma^2}\Id_N\RB^{-1}\Hm_k\wv_{kj}.
  \end{equation}
  Assume {\bf A1}, {\bf A2}, and {\bf A3}-a. Then, as $N\to\infty$,
\begin{align*}
	\gamma^{N(a)}_{kj}({\sigma^2}) - \bar{\gamma}^{N(a)}_{kj}({\sigma^2}) &\asto 0
\end{align*}
where 
\begin{equation*}
	\bar{\gamma}^{N(a)}_{kj}({\sigma^2}) = \frac{p_{kj}a_k}{\bar{c}_k-a_k \bar{a}_k}
\end{equation*} 
with $a_k=a_k({\sigma^2})$ and $\bar{a}_k=\bar{a}_k({\sigma^2})$ defined in Theorem~\ref{th:fundamental_eq}.
\end{theorem}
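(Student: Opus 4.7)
The plan is to reduce the SINR quadratic form to a trace that, through the resolvent deterministic equivalent established in the proof of Theorem~\ref{th:fundamental_eq}, concentrates on the claimed value. Set $\Qm=\Bm_N+{\sigma^2}\Id_N$, $\Qm_{(k,j)}=\Bm_{N,(k,j)}+{\sigma^2}\Id_N$, and $\sv_{kj}=\Hm_k\wv_{kj}$, so that $\gamma^{N(a)}_{kj}({\sigma^2})=p_{kj}g_{kj}$ with $g_{kj}:=\sv_{kj}\htp\Qm_{(k,j)}^{-1}\sv_{kj}$. The central difficulty is that, although removing the rank-one contribution $p_{kj}\sv_{kj}\sv_{kj}\htp$ eliminates the obvious dependence of the resolvent on $\wv_{kj}$, the matrix $\Qm_{(k,j)}$ still depends on $\wv_{kj}$ through the orthogonality constraint binding the remaining $n_k-1$ columns $\tilde\Wm_k\in\CC^{N_k\times(n_k-1)}$ of $\Wm_k$ to $\wv_{kj}$. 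The appropriate conditioning is therefore on $\tilde\Wm_k$ together with the $\Hm_{k'}$ and $\Wm_{k'}$ for $k'\ne k$: under this conditioning, $\Qm_{(k,j)}$ is deterministic while $\wv_{kj}$ is uniform on the unit sphere of the $(N_k-n_k+1)$-dimensional subspace with projector $\Pi_k:=\Id_{N_k}-\tilde\Wm_k\tilde\Wm_k\htp$.

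Conditionally on these variables, the Haar trace lemma (Lemma~\ref{le:trace_Haar}) applied to $\Hm_k\htp\Qm_{(k,j)}^{-1}\Hm_k$ (whose spectral norm is bounded by $R/{\sigma^2}$ under {\bf A1}--{\bf A3}-a) gives
\[
g_{kj}-\frac{1}{N_k-n_k+1}\tr\!\bigl[\Pi_k\Hm_k\htp\Qm_{(k,j)}^{-1}\Hm_k\bigr]\asto 0.
\]
Expanding $\Pi_k$ splits the trace as $\tr[\R_k\Qm_{(k,j)}^{-1}]-\sum_{j'\ne j}\sv_{kj'}\htp\Qm_{(k,j)}^{-1}\sv_{kj'}$. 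The first term differs from $\tr[\R_k\Qm^{-1}]$ by a rank-one perturbation, hence $\tfrac1N\tr[\R_k\Qm_{(k,j)}^{-1}]\asto a_k$ thanks to the resolvent deterministic equivalent $\tfrac1N\tr[\R_k\Qm^{-1}]\asto a_k$ obtained in the proof of Theorem~\ref{th:fundamental_eq}. For each $j'\ne j$, the Sherman--Morrison identity applied to $\Qm_{(k,j)}=\Qm_{(k,\{j,j'\})}+p_{kj'}\sv_{kj'}\sv_{kj'}\htp$, combined with the cross-column estimate $|\sv_{kj'}\htp\Qm_{(k,j')}^{-1}\sv_{kj}|^2=\Oc(1/N)$ (obtained from the same conditional Haar lemma applied to two distinct orthogonal columns), yields $\sv_{kj'}\htp\Qm_{(k,j)}^{-1}\sv_{kj'}=g_{kj'}/(1+p_{kj'}g_{kj'})+o(1)$.

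Invoking the joint resolvent deterministic equivalent once more, which identifies the candidate limit of every quadratic form $g_{kj'}$ as $\bar g_{kj'}=a_k/(\bar c_k-a_k\bar a_k)$ (notably independent of $p_{kj'}$), the target $g_{kj}\asto\bar g_{kj}$ reduces to the algebraic identity
\[
\frac{N_k-n_k+1}{N}\cdot\frac{a_k}{\bar c_k-a_k\bar a_k}=a_k-\frac{1}{N}\sum_{j'=1}^{n_k}\frac{a_k}{\bar c_k-a_k\bar a_k+p_{kj'}a_k}+o(1).
\]
Setting $X:=\bar c_k-a_k\bar a_k$, this follows from the fundamental equation $\bar a_k=\tfrac{1}{N}\sum_{j'}p_{kj'}/(a_kp_{kj'}+X)$ by writing $p_{kj'}/(a_kp_{kj'}+X)=1/a_k-(X/a_k)/(a_kp_{kj'}+X)$ to obtain $\tfrac1N\sum_{j'}1/(a_kp_{kj'}+X)=(c_k\bar c_k-a_k\bar a_k)/X$, and using $N_k-n_k=N\bar c_k(1-c_k)$ together with $c_k\bar c_k=n_k/N$. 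Multiplying through by $p_{kj}$ yields $\gamma^{N(a)}_{kj}({\sigma^2})\asto p_{kj}a_k/(\bar c_k-a_k\bar a_k)$.

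The principal obstacle is the apparent circularity of step two: determining the limit of $g_{kj}$ seems to require the limits of all $g_{kj'}$ for $j'\ne j$. This is resolved not by an inductive bootstrap at the SINR level but by appealing to the full resolvent deterministic equivalent $\Qm^{-1}\leftrightarrow\Tm^{-1}$, with $\Tm=\sum_k\bar a_k\R_k+{\sigma^2}\Id_N$, already proven alongside Theorem~\ref{th:fundamental_eq}; this fixes the joint behaviour of all quadratic forms $\sv_{kj'}\htp\Qm^{-1}\sv_{kj'}$ at once and pins down every $\bar g_{kj'}$ simultaneously, closing the argument without self-reference.
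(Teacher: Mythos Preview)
Your resolution of the circularity is the gap. The deterministic equivalent proved alongside Theorem~\ref{th:fundamental_eq} is a \emph{trace} statement: $\tfrac1N\tr\Am(\Bm_N+\sigma^2\Id_N)^{-1}-\tfrac1N\tr\Am\Tm^{-1}\asto 0$ for deterministic $\Am$ of bounded norm. It says nothing about individual quadratic forms $\sv_{kj'}\htp\Qm^{-1}\sv_{kj'}$, because $\sv_{kj'}\sv_{kj'}\htp$ is random and has rank one. Asserting that this result ``fixes the joint behaviour of all quadratic forms'' and ``pins down every $\bar g_{kj'}$'' is exactly the content of Theorem~\ref{th:SINR_MMSE}; invoking it here is circular in the very sense you were trying to avoid.

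The correct resolution is already implicit in your step~1 but you do not use it. The Haar trace lemma gives $g_{kj}-\delta_k\asto 0$ with $\delta_k=\tfrac1{N_k-n_k}\tr(\Id_{N_k}-\Wm_k\Wm_k\htp)\Hm_k\htp\Qm^{-1}\Hm_k$, a \emph{single} random scalar not depending on $j$ (your $\Qm_{(k,j)}$ and $\Pi_k$ versions differ from this by rank-one corrections). So there is no system of coupled quadratic-form limits: every $g_{kj}$ concentrates on the same $\delta_k$, and the circularity dissolves. What remains is to find the deterministic limit of $\delta_k$, and this is where the paper brings in the missing idea. Extending $\Pm_k$ to $N_k\times N_k$ by zeros and writing $h(x)=\tfrac1N\sum_{l=1}^{N_k}x/(1+p_{kl}x)$, the proof of Theorem~\ref{th:sumRWT} already contains $h(\delta_k)-f_k\asto 0$ (equation~\eqref{eq:deltai_moment}) and $f_k-a_k\asto 0$, while by construction $a_k=h(d_k)$ with $d_k=a_k/(\bar c_k-a_k\bar a_k)$. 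Since $h$ is continuous and strictly increasing on $[0,\infty)$, it is invertible, and $h(\delta_k)-h(d_k)\asto 0$ forces $\delta_k-d_k\asto 0$, hence $g_{kj}\asto d_k$. Your algebraic identity is precisely the equation $h(\bar g)=a_k$, so once you replace the faulty last paragraph by this monotonicity/inversion argument, your route and the paper's coincide.
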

\begin{proof}
 The proof is provided in Appendix~\ref{app:SINR_MMSE}. 
\end{proof}\bigskip

As an (almost immediate) corollary, we have the following result.
\begin{corollary}\label{cor:MMSEa}
	Under the conditions of Theorem \ref{th:SINR_MMSE}, denote
	\begin{equation*}
		R^{(a)}_N({\sigma^2}) = \frac1N\sum_{k=1}^K\sum_{j=1}^{n_k}\log\LB1+\gamma^{N(a)}_{kj}({\sigma^2})\RB.
	\end{equation*}
	Then,
\begin{align*}
	\mathbb{E}R^{(a)}_N({\sigma^2}) - \bar{R}^{(a)}_N({\sigma^2}) &\to 0 \\
	R^{(a)}_N({\sigma^2}) - \bar{R}^{(a)}_N({\sigma^2}) &\asto 0
\end{align*}
where 
\begin{align*}
	\bar{R}^{(a)}_N({\sigma^2}) = \frac1N\sum_{k=1}^K\sum_{j=1}^{n_k}\log\LB1+\bar{\gamma}^{N(a)}_{kj}(\sigma^2)\RB.
\end{align*}
\end{corollary}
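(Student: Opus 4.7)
My plan is to reduce the convergence of $R^{(a)}_N-\bar R^{(a)}_N$ to the pointwise SINR convergence already delivered by Theorem~\ref{th:SINR_MMSE}. The starting observation is that $x\mapsto\log(1+x)$ is $1$-Lipschitz on $[0,\infty)$, so for any $\gamma,\bar\gamma\ge 0$,
\[
\bigl|\log(1+\gamma)-\log(1+\bar\gamma)\bigr|\le |\gamma-\bar\gamma|.
\]
Summing termwise yields the deterministic bound
\[
\bigl|R^{(a)}_N({\sigma^2})-\bar R^{(a)}_N({\sigma^2})\bigr|\le \frac1N\sum_{k=1}^K\sum_{j=1}^{n_k}\bigl|\gamma^{N(a)}_{kj}({\sigma^2})-\bar\gamma^{N(a)}_{kj}({\sigma^2})\bigr|\le \Bigl(\sum_{k=1}^K c_k\bar c_k\Bigr)\max_{k,j}\bigl|\gamma^{N(a)}_{kj}-\bar\gamma^{N(a)}_{kj}\bigr|,
\]
so the task reduces to showing that the maximum on the right converges to $0$ almost surely.

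The principal technical obstacle is that Theorem~\ref{th:SINR_MMSE} asserts only pointwise a.s.\ convergence, whereas we are taking a maximum over $\sum_k n_k=O(N)$ indices. To cure this, I would revisit the proof of Theorem~\ref{th:SINR_MMSE} and extract from its ingredient concentration inequalities (the Haar trace lemma of Lemma~\ref{le:trace_Haar} combined with standard resolvent identities and rank-one perturbation arguments) a uniform quantitative bound of the form
\[
\mathbb{P}\bigl(\bigl|\gamma^{N(a)}_{kj}({\sigma^2})-\bar\gamma^{N(a)}_{kj}({\sigma^2})\bigr|>\eta\bigr)\le C(\eta)\,N^{-p},
\]
where the constant $C(\eta)$ depends on $\eta,{\sigma^2},P,R,K$ only through {\bf A1}, {\bf A2}, {\bf A3}-a (and is therefore uniform in the pair $(k,j)$), and where $p$ can be taken arbitrarily large, since Haar quadratic forms enjoy essentially exponential concentration. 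A union bound over the $O(N)$ indices combined with Borel--Cantelli then yields $\max_{k,j}\bigl|\gamma^{N(a)}_{kj}-\bar\gamma^{N(a)}_{kj}\bigr|\asto 0$, and the display above delivers $R^{(a)}_N-\bar R^{(a)}_N\asto 0$.

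To upgrade to convergence in expectation I would invoke dominated convergence. Using $\|(\Bm_{N,(k,j)}+{\sigma^2}\Id_N)^{-1}\|\le 1/{\sigma^2}$ together with $\|\wv_{kj}\|=1$ and {\bf A2}--{\bf A3}-a, one obtains the deterministic bound $\gamma^{N(a)}_{kj}\le p_{kj}\|\R_k\|/{\sigma^2}\le PR/{\sigma^2}$ for all $N$ sufficiently large, and the explicit formula $\bar\gamma^{N(a)}_{kj}=p_{kj}a_k/(\bar c_k-a_k\bar a_k)$ of Theorem~\ref{th:SINR_MMSE}, combined with $a_k\bar a_k<c_k\bar c_k$, $\limsup_N c_k<1$ and $\liminf_N\bar c_k>0$ ({\bf A1} and Theorem~\ref{th:fundamental_eq}), provides an analogous deterministic bound for $\bar\gamma^{N(a)}_{kj}$. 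Hence $|R^{(a)}_N-\bar R^{(a)}_N|$ is majorized by a deterministic constant depending only on $P,R,K,{\sigma^2}$ and the asymptotic bounds on $c_k,\bar c_k$, and dominated convergence applied to the already-established a.s.\ statement yields $\mathbb{E}R^{(a)}_N({\sigma^2})-\bar R^{(a)}_N({\sigma^2})\to 0$.
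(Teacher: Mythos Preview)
Your approach differs from the paper's: the paper invokes the continuous mapping theorem for the function $\phi:x\mapsto\frac1N\sum_{k,i}\log(1+p_{ik}x)$ on the convergence $d_k-\w_{kj}^\herm\H_k^\herm(\B_{(k,j)}+\sigma^2\I_N)^{-1}\H_k\w_{kj}\asto 0$ established in the proof of Theorem~\ref{th:SINR_MMSE}, together with uniform boundedness and dominated convergence for the mean statement.

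Your route via a union bound on $\max_{k,j}$ has a quantitative gap. Lemma~\ref{le:trace_Haar} as stated supplies only a fourth-moment bound, which through Markov gives $\mathbb P(|\gamma^{N(a)}_{kj}-\bar\gamma^{N(a)}_{kj}|>\eta)=O(N^{-2})$; after a union bound over $\sum_k n_k=O(N)$ indices the tail is $O(1/N)$, which is not summable, so Borel--Cantelli does not apply. Your claim that ``$p$ can be taken arbitrarily large'' would require strengthening Lemma~\ref{le:trace_Haar} to arbitrary even moments \emph{and} propagating higher-moment control through the entire chain $\w_{kj}^\herm\cdots\w_{kj}\to\delta_k\to f_k\to e_k\to d_k$, for which the paper records only fourth-moment bounds at each link. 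This is feasible but is not a free consequence of the paper's lemmas.

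A more economical repair stays at the level of averages rather than the maximum. The fourth-moment constants in Appendix~\ref{app:SINR_MMSE} are uniform in $(k,j)$, and the map $x\mapsto\frac1N\sum_{l=1}^{N_k}\frac{x}{1+p_{kl}x}$ has derivative bounded below by $(1-c_k)\bar c_k$ on the relevant interval, so one extracts $\Exp[|\gamma^{N(a)}_{kj}-\bar\gamma^{N(a)}_{kj}|^4]\le C/N^2$ uniformly in $(k,j)$. Your Lipschitz bound gives $|R^{(a)}_N-\bar R^{(a)}_N|\le\frac1N\sum_{k,j}|\gamma^{N(a)}_{kj}-\bar\gamma^{N(a)}_{kj}|$, and Jensen applied to the normalized sum yields $\Exp[|R^{(a)}_N-\bar R^{(a)}_N|^4]\le C'/N^2$, which is summable; Borel--Cantelli then gives the a.s.\ statement (first for $\sigma^2$ large, then for all $\sigma^2>0$ via Vitali, as in the proof of Theorem~\ref{th:SINR_MMSE}). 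Your dominated-convergence argument for the mean convergence is correct.
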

\begin{proof}
 The proof is provided in Appendix \ref{app:SINR_MMSE}. 
\end{proof}\bigskip

\begin{theorem}[SINR of the MMSE detector under (ii-b)]
	\label{th:sinr}
 Consider the system model \eqref{eq:channel} under assumptions (i), (ii-b), (iii)-(vi) and, for ${\sigma^2}>0$, denote
  \begin{equation*}
	  \gamma^{N(b)}_{kj}({\sigma^2}) = p_{kj}\wv_{kj}\htp\Hm_k\htp\LB{\Bm_N}_{(k,j)} + {\sigma^2}\Id_N\RB^{-1}\Hm_k\wv_{kj}.
  \end{equation*}
  Assume {\bf A1}, {\bf A2}, {\bf A3}-b, and {\bf A4}. Then, as $N\to\infty$,
\begin{align*}
	\gamma^{N(b)}_{kj}({\sigma^2}) - \bar{\gamma}^{N(b)}_{kj}({\sigma^2}) \asto 0
\end{align*}
where 
\begin{align*}
	\bar{\gamma}^{N(b)}_{kj}({\sigma^2})=\frac{p_{kj} b_k}{\bar{c}_k-b_k\bar{b}_k}
\end{align*}
with $b_k=b_k({\sigma^2})$ and $\bar{b}_k=\bar{b}_k({\sigma^2})$, given by Theorem~\ref{th:fundeq}.
\end{theorem}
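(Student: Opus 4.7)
The plan is to follow the strategy used in the proof of Theorem \ref{th:SINR_MMSE} (the (ii-a) analogue), carrying out the same resolvent-and-Haar manipulations but evaluating the resulting traces via the fundamental equations of Theorem \ref{th:fundeq} instead of those of Theorem \ref{th:fundamental_eq}. The three driving ingredients are: (i) the trace lemma for Haar vectors, Lemma~\ref{le:trace_Haar}, (ii) the Sherman--Morrison identity linking $\Qm\triangleq(\Bm_N+\sigma^2\Id_N)^{-1}$ to $\Qm'_{(k,j)}\triangleq(\Bm_{N,(k,j)}+\sigma^2\Id_N)^{-1}$, and (iii) the deterministic equivalents for traces $\tfrac{1}{N}\trace(\Hm_k\Hm_k\htp\Qm\Xm)$ that arise en route to Theorem~\ref{th:fundeq}.

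\textbf{Step 1 (truncation).} Assumption \textbf{A4} gives an event of probability one on which $\max_k\|\Hm_k\Hm_k\htp\|<M$ for all sufficiently large $N$. Restricting to this event, the $\Hm_k$ are uniformly spectrally bounded and may be treated, for purposes of concentration with respect to $\Wm_k$, as a deterministic sequence. This is precisely what allows the (ii-a) Haar machinery to be reused verbatim wherever $\Hm_k$ enters only through $\Hm_k\Hm_k\htp$.

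\textbf{Step 2 (Haar quadratic form).} Write $\gamma_{kj}^{N(b)}(\sigma^2)=p_{kj}\wv_{kj}\htp\Mm_{kj}\wv_{kj}$ with $\Mm_{kj}=\Hm_k\htp\Qm'_{(k,j)}\Hm_k$. Conditionally on the remaining columns of $\Wm_k$ (which, together with $\Hm_k$ and the other $\Wm_\ell$, determine $\Mm_{kj}$), $\wv_{kj}$ is isotropically distributed on the unit sphere of the orthogonal complement of those columns, so Lemma~\ref{le:trace_Haar} yields
\begin{equation*}
\wv_{kj}\htp\Mm_{kj}\wv_{kj}-\frac{1}{N_k-n_k+1}\trace\!\bigl(\Mm_{kj}(\Id_{N_k}-\Wm_{k,(j)}\Wm_{k,(j)}\htp)\bigr)\asto 0,
\end{equation*}
where $\Wm_{k,(j)}$ denotes $\Wm_k$ with its $j$th column deleted.

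\textbf{Step 3 (identification of the limit).} Split the trace above as
$\tfrac{1}{N_k-n_k+1}\bigl[\trace(\Hm_k\htp\Qm'_{(k,j)}\Hm_k)-\trace(\Wm_{k,(j)}\htp\Hm_k\htp\Qm'_{(k,j)}\Hm_k\Wm_{k,(j)})\bigr]$ and use Sherman--Morrison to replace $\Qm'_{(k,j)}$ by $\Qm$ at the price of rank-one terms that vanish in the limit. The first trace then converges to $Nb_k$ by the trace convergence $\tfrac{1}{N}\trace(\Hm_k\Hm_k\htp\Qm)\asto b_k$, which is the defining relation for $b_k$ in Theorem~\ref{th:fundeq}. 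The second trace, via algebra strictly parallel to the (ii-a) proof (with $\Rm_k$ replaced by $\Hm_k\Hm_k\htp$ and $(a_k,\bar{a}_k)$ replaced by $(b_k,\bar{b}_k)$), is asymptotically $n_k b_k\bar{b}_k/\bar{c}_k$. Combining the two with the prefactor $N_k/(N_k-n_k+1)\to 1/(1-c_k)$ and simplifying produces exactly $b_k/(\bar{c}_k-b_k\bar{b}_k)$; multiplying by $p_{kj}$ gives the claim.

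\textbf{Main obstacle.} The hardest part is Step~3: justifying the trace convergence $\tfrac{1}{N}\trace(\Hm_k\Hm_k\htp\Qm)\asto b_k$ and its $\Wm_k\Wm_k\htp$-weighted counterpart. Unlike the (ii-a) setting, where $\Hm_k\Hm_k\htp$ is deterministic and only Haar randomness needs to be averaged out, here one must \emph{simultaneously} average over the i.i.d.\@ vectors $\zv_{kj}$ making up each column $\hv_{kj}=\Rm_{kj}^{1/2}\zv_{kj}$ (via the Bai--Silverstein trace lemma, the $4+\varepsilon$ moment condition of (ii-b) providing the requisite concentration) and over the Haar matrices $\Wm_k$. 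The per-column parameters $\zeta_{kj}$ in Theorem~\ref{th:fundeq} reflect precisely this additional layer: one must show that, after all averaging, they re-aggregate into the single scalar $b_k$ appearing in the final ratio. This bookkeeping, absent in the (ii-a) proof, is the new technical content of the argument.
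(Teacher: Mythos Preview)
Your proposal is correct in spirit but takes a longer route than the paper. The paper's proof is a single line: it applies Theorem~\ref{th:SINR_MMSE} as a black box (conditionally on the $\Hm_k$, which under {\bf A4} are almost surely uniformly bounded, so the (ii-a) hypotheses hold on a set of probability one via the Tonelli argument already used at the end of Appendix~\ref{app:sumRXT}) to obtain $\gamma^{N(b)}_{kj}-\frac{p_{kj}a_k}{\bar c_k-a_k\bar a_k}\asto 0$, where $a_k,\bar a_k$ are the \emph{random} (ii-a) quantities depending on the realized $\Hm_k$. It then invokes the convergences $a_k-b_k\asto 0$ and $\bar a_k-\bar b_k\asto 0$ (equations \eqref{eq:conv1}--\eqref{eq:conv2}), which are proved anyway for Theorem~\ref{th:mutinf}, and the continuous mapping theorem finishes the job.

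The difference from your approach is precisely the ``main obstacle'' you identify: you try to average over the Haar and the i.i.d.\ randomness simultaneously and land directly on $b_k$, whereas the paper \emph{separates} the two layers --- first Haar (giving $a_k$), then the channel law (giving $a_k\to b_k$). This modular decomposition avoids redoing Steps~2--3 entirely and dissolves your obstacle: the ``re-aggregation of the $\zeta_{kj}$ into $b_k$'' is exactly the content of \eqref{eq:conv2}, proved once in Appendix~\ref{app:proof_mutinf} and reused here. Your direct route would work, but note that your justification in Step~3 (``$\tfrac1N\trace(\Hm_k\Hm_k\htp\Qm)\asto b_k$ is the defining relation for $b_k$'') is not quite right: $b_k$ is defined by the deterministic fixed point in Theorem~\ref{th:fundeq}, not as a trace of the random resolvent, and the convergence you need is precisely the chain $f_k\asto a_k\asto b_k$ --- which is the paper's two-step argument in disguise.
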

\begin{proof}
The proof is provided in Appendix~\ref{app:proof_mutinf}.
\end{proof}\bigskip

Similar to the quasi-static channel scenario, we also have the following corollary.
\begin{corollary}\label{cor:MMSE}
	Under the conditions of Theorem \ref{th:sinr}, denote
	\begin{equation*}
		R^{(b)}_N({\sigma^2}) = \frac1N\sum_{k=1}^K\sum_{j=1}^{n_k}\log\LB1+\gamma^{N(b)}_{kj}({\sigma^2})\RB.
	\end{equation*}
	Then,
\begin{align*}
	\mathbb{E}R^{(b)}_N({\sigma^2}) - \bar{R}^{(b)}_N({\sigma^2}) &\to 0 \\
	R^{(b)}_N({\sigma^2}) - \bar{R}^{(b)}_N({\sigma^2}) &\asto 0 
\end{align*}
where 
\begin{align*}
	\bar{R}^{(b)}_N({\sigma^2}) = \frac1N\sum_{k=1}^K\sum_{j=1}^{n_k}\log\LB1+\bar{\gamma}^{N(b)}_{kj}(\sigma^2)\RB.
\end{align*}
\end{corollary}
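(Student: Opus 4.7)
The plan is to lift the pointwise almost-sure convergence of the individual SINRs furnished by Theorem~\ref{th:sinr} to the normalized sum via the Lipschitz behavior of $\log(1+\cdot)$ on a bounded interval. The one subtlety is that the average runs over $\sum_k n_k = O(N)$ terms, so a uniform-in-$(k,j)$ version of the SINR convergence is required.

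First I would establish uniform almost-sure bounds on both sequences. On the one hand, $({\Bm_N}_{(k,j)}+\sigma^2\Id_N)^{-1}\preceq\sigma^{-2}\Id_N$ combined with {\bf A2} gives $\gamma^{N(b)}_{kj}(\sigma^2)\leq (P/\sigma^2)\|\Hm_k\|^2$, which is bounded by a deterministic constant almost surely under {\bf A4}. On the other hand, Theorem~\ref{th:fundeq} yields $\bar c_k-b_k\bar b_k\geq \bar c_k(1-c_k)$, bounded below by a positive constant under {\bf A1}, while the crude estimates $\zeta_{kj}\leq R/\sigma^2$ and $b_k\leq \frac{1}{N}\sum_j\zeta_{kj}\leq \bar c_k R/\sigma^2$ (using {\bf A3}-b) together with $p_{kj}\leq P$ give a uniform upper bound on $\bar\gamma^{N(b)}_{kj}(\sigma^2)$.

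With these uniform bounds in place, the mean value theorem applied to $x\mapsto\log(1+x)$ on the resulting compact interval yields
\begin{align*}
|R^{(b)}_N(\sigma^2)-\bar R^{(b)}_N(\sigma^2)|\leq \frac{1}{N}\sum_{k=1}^K\sum_{j=1}^{n_k}|\gamma^{N(b)}_{kj}(\sigma^2)-\bar\gamma^{N(b)}_{kj}(\sigma^2)|,
\end{align*}
so the main step reduces to controlling this normalized sum. The hard part is precisely that Theorem~\ref{th:sinr} asserts only pointwise convergence, which is a priori insufficient when averaged over $O(N)$ indices. I would revisit the proof of Theorem~\ref{th:sinr} in order to extract uniformity: the Stieltjes-transform estimates and trace-lemma bounds used there depend on $(k,j)$ only through quantities that are already uniformly controlled by {\bf A2}--{\bf A4}, which should yield $\max_{k,j}|\gamma^{N(b)}_{kj}-\bar\gamma^{N(b)}_{kj}|\asto 0$. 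Substituting into the display above and using $\sum_k n_k\leq N\sum_k c_k\bar c_k$ then gives $R^{(b)}_N-\bar R^{(b)}_N\asto 0$. The convergence in expectation follows from this almost-sure statement together with dominated convergence: the deterministic uniform bound above holds on the a.s.\ event of {\bf A4}, and the $(4+\epsilon)$-moment assumption on the entries of $\z_{kj}$ controls the exceptional event, yielding $\EE R^{(b)}_N - \bar R^{(b)}_N\to 0$.
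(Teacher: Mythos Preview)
Your treatment of the almost-sure convergence is sound and, in fact, more careful than the paper's one-line invocation of the continuous mapping theorem: you correctly flag that the sum runs over $O(N)$ indices, so the pointwise SINR convergence of Theorem~\ref{th:sinr} must be upgraded to a uniform one before the Lipschitz bound on $\log(1+\cdot)$ can close the argument. This is legitimate, and revisiting the moment bounds in the proof of Theorem~\ref{th:sinr} does yield the required uniformity.

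The gap is in your mean-convergence step. Assumption {\bf A4} does \emph{not} give a deterministic bound on $\|\Hm_k\Hm_k\htp\|$: it asserts only that for each realization in a probability-one set there exists some $M$ (depending on that realization) such that the spectral norms are eventually below $M$. Consequently the ``dominating function'' you propose is random and not obviously integrable, so ordinary dominated convergence does not apply; the vague remark about the $(4+\epsilon)$-moment controlling the exceptional event does not repair this. The paper avoids the issue entirely by a different route: it uses the information-theoretic inequality $0\le R_N^{(b)}\le I_N^{(b)}$ together with the deterministic analogue $0\le \bar R_N^{(b)}\le \bar I_N^{(b)}$ (established via an extended $L$-fold block model) to obtain
\[
|R_N^{(b)}-\bar R_N^{(b)}|\ \le\ I_N^{(b)}+\bar I_N^{(b)}\ \le\ \bigl(I_N^{(b)}-\bar I_N^{(b)}\bigr)+2\sup_N \bar I_N^{(b)} \eqdef v_N.
\]
Since Theorem~\ref{th:mutinf} already gives both $I_N^{(b)}-\bar I_N^{(b)}\asto 0$ and $\mathbb{E}I_N^{(b)}-\bar I_N^{(b)}\to 0$, one has $v_N\asto 2\sup_N\bar I_N^{(b)}$ and $\mathbb{E}v_N\to 2\sup_N\bar I_N^{(b)}<\infty$, and a generalized dominated convergence theorem (dominating sequence converging in $L^1$) then yields $\mathbb{E}R_N^{(b)}-\bar R_N^{(b)}\to 0$. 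The point is that the integrability needed for the mean statement is borrowed from the mutual information, whose mean convergence was proved earlier by an argument that did \emph{not} rely on a uniform spectral-norm bound.
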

\begin{proof}
The proof is provided in Appendix~\ref{app:proof_mutinf}
\end{proof}

\bigskip
\begin{remark}
	Surprisingly, the fundamental equations of Theorems~\ref{th:fundamental_eq} and \ref{th:fundeq} {\it cannot} be solved with the proposed fixed-point algorithms for the case $c_k=1$ when the entries of $\Pm_k$ are all non-zero (recall that assumption (iii) of the model imposes $c_k<1$). Moreover, the proof of Theorem~\ref{th:sumRWT} in the appendix cannot be easily extended to this case. However, if $\Pm_k= p_k\Id_{N_k}$, for some $p_k>0$, the random matrix $\Bm_N$ reduces to
\begin{align}
 \Bm_N = \sum_{k=1}^K p_k\Hm_k\Hm_k\htp = \sum_{k=1}^K p_k\Rm_k.
\end{align}
For the quasi-static channel scenario, $\Bm_N$ is thus entirely deterministic. A careful inspection of the fixed-point equations of Theorem~\ref{th:fundamental_eq} reveals that $\bar{a}_k$, with definition extended to $c_k=1$, has two solutions in the adherence of $[0,\bar{c}_k/a_k)$, i.e., $\bar{a}_k = \frac{\bar{c}_k}{a_k}$ or $\bar{a}_k = p_k$. Simulations suggest that, in this scenario, the fixed-point algorithm proposed in Theorem~\ref{th:fundamental_eq} may converge to either of the solutions depending on the choice of the system parameters. Note that, for $\bar{a}_k=p_k$, Theorem~\ref{th:mutual_info} reduces to
$$\bar{I}_N^{\text(a)}(\sigma^2) = \frac1N\log\det\LB\Id_N + \frac1{\sigma^2}\sum_{k=1}^K p_k \Rm_k\RB$$
as it should be. As for $\bar{a}_k = \frac{\bar{c}_k}{a_k}$, this cannot lead to a correct solution as $\bar{I}_N^{\text(a)}(\sigma^2)$ would be independent of $p_k$. These observations are consistent with the condition $\bar{a}_k< \frac{\bar{c}_k}{a_k}$. Similarly, $\bar{b}_k$ in Theorem~\ref{th:fundeq} has the same two possible solutions in this scenario. With $\bar{b}_k=p_k$, the asymptotic mutual information reduces to
\begin{align*}
 \bar{I}_N^{\text(b)}(\sigma^2) = \bar{V}_N(\sigma^2)
\end{align*}
which is the asymptotic mutual information of a channel with a generalized variance profile as provided in Theorem~\ref{th:logdetcor} (Appendix~\ref{sec:appendixB}). Thus our results are consistent for the case $c_k=1$ and $\Pm_k=p_k\Id_{N_k}$. However, if the entries of $\Pm_k$ are not all equal and $c_k=1$, we cannot easily infer the solutions of $\bar{a}_k,\ \bar{b}_k$ and the proposed fixed point algorithms may not converge to the correct solutions.  
\end{remark}\bigskip

\begin{remark}\label{rem:ci_1}
Based on the previous remark, under scenario (ii-b) with $K=1$, $\Pm_1=\Id_{n_1}$, $N_1=n_1=N$, and  $\Rm_{1j}=\Id_N$ for all $j$, the set of implicit equations in Theorem~\ref{th:fundeq} reduces to:
%\begin{align*}
%\bar{b}({\sigma^2}) &= \frac{1}{1-g({\sigma^2})\bar{b}({\sigma^2}) + g({\sigma^2})},\quad
%g({\sigma^2}) = \frac{\zeta({\sigma^2})}{1+\bar{b}({\sigma^2})\zeta({\sigma^2})},\quad
%\zeta({\sigma^2})= \frac{1}{\frac{\bar{b}({\sigma^2})}{1+\bar{b}({\sigma^2})\zeta({\sigma^2})}+{\sigma^2}}.
%\end{align*}
%Note that 
%\begin{align*}
% 0 &= 1 - \frac{1-g({\sigma^2})\bar{b}({\sigma^2}) + g({\sigma^2})}{1-g({\sigma^2})\bar{b}({\sigma^2}) + g({\sigma^2})}= 1 - \LSB1-g({\sigma^2})\bar{b}({\sigma^2})\RSB\bar{b}({\sigma^2}) -g({\sigma^2})\bar{b}({\sigma^2}) =\LSB1-g({\sigma^2})\bar{b}({\sigma^2})\RSB(1-\bar{b}({\sigma^2}))
%\end{align*}
%which implies $\bar{b}({\sigma^2})=1$ since $1-g({\sigma^2})\bar{b}({\sigma^2})>0$ by definition. Thus, the last equations further simplify to
\begin{align*}
	\bar{b}({\sigma^2})=1, \qquad g({\sigma^2}) = \frac{\zeta({\sigma^2})}{1+\zeta({\sigma^2})},\qquad \zeta({\sigma^2})= \frac{1}{\frac{1}{1+\zeta({\sigma^2})}+{\sigma^2}}
\end{align*}
which has a unique solution satisfying $\zeta({\sigma^2})\ge0$ and that can be given in closed-form:
\begin{align*}
 \zeta({\sigma^2}) = \frac{-1+\sqrt{1+\frac{4}{{\sigma^2}}}}{2}.
\end{align*}
We recognize that $\zeta({\sigma^2})$ is the Stieltjes transform of the Mar\u{c}enko-Pastur law with scale parameter $1$ \cite[Equation~(3.20)]{COUbook} evaluated on the negative real axis. This result is consistent with our expectations since $\Bm_N=\Zm_1\Zm_1\htp$, where $\Zm_1\in\CC^{N\times N}$ has i.i.d.\@ entries with zero mean and variance $1/N$. Moreover, the expression of the normalized asymptotic mutual information as given in Theorem~\ref{th:mutinf} reduces to
\begin{align*}
	\bar{I}^{(b)}_N({\sigma^2}) = \bar{V}_N({\sigma^2}) = \log\LB1+\zeta({\sigma^2})+1/{\sigma^2}\RB - \frac{\zeta({\sigma^2})}{1+\zeta({\sigma^2})}
\end{align*}
which is consistent with the asymptotic spectral efficiency of a Rayleigh-fading $N\times N$ MIMO channel \cite[Equation (9)]{verdu99} (see also \cite[Section 13.2.2]{COUbook}). Equivalently, the asymptotic SINR of the MMSE detector and the associated normalized sum-rate can be given as (cf. \cite[Proposition VI.1]{verdu99}):
\begin{align*}
	\bar{\gamma}_j^{N(b)} = \zeta({\sigma^2}),\qquad \bar{R}^{(b)}_N({\sigma^2})=\log(1+\zeta({\sigma^2})).
\end{align*}
\end{remark}
\bigskip

\begin{remark}
	Technically, the results obtained for the quasi-static scenario unfold from the Stieltjes transform framework very similar to \cite{COU09}, \cite{HAC07}. However, some new tools are introduced which simplify the analysis made in these papers, such as the method of standard interference functions to prove existence and uniqueness of the derived deterministic equivalents. As for the results in the fading channel scenario, they unfold from the conjugation of the results obtained in the quasi-static scenario and the results obtained in \cite{WAG10} (recalled in Appendix \ref{sec:appendixB}) for a channel model similar to \eqref{eq:channel} but without the presence of the $\W_k$ matrices. The central tool to allow this conjugation is the Tonelli (or Fubini) theorem, Lemma~\ref{le:tonelli} in Appendix \ref{sec:fundlemmas}, on the product probability space engendering both the (sequences of growing) $\W_k$ and $\H_k$ matrices. 
\end{remark}

\section{Numerical results}
\label{sec:num}
\label{sec:simu}

The results of Section~\ref{sec:results} enable a simple characterization of different performance measures of isometric precoded multi-user systems with large dimensional quasi-static or fading channels, some of which were introduced in Section~\ref{sec:intro}. In the following, we apply these results to three practical examples.

\subsection{Uplink orthogonal SDMA with inter-cell interference}

\begin{figure}
\centering
\pgfimage[width=0.45\textwidth]{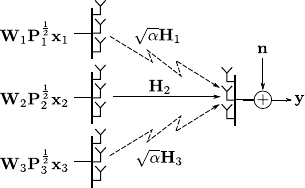}
\caption{Three-cell example: The BS in the center cell decodes the $n$ streams from the UT in its own cell while treating the other signals as interference.\label{fig:system_model}}
\end{figure}

In this first example, we apply the theoretical results of Section \ref{sec:results} under the quasi-static channel scenario (hypothesis (ii-a)) to the uplink channel of an orthogonal SDMA scheme with inter-cell interference. We consider a three cell system with one active user terminal (UT) per cell. The UT in cell $k$ is equipped with $N_k$ transmit antennas. We focus on the central cell, whose base station (BS) is equipped with $N$ antennas, and assume that the signals received from neighboring cells are treated as noise. This setup is schematically depicted in Figure \ref{fig:system_model}. The received signal $\y$ at the BS reads
\begin{equation*}
	\y = \H_2\W_2\P_2^\oh\x_2 + \underbrace{\sqrt{\alpha} \H_1\W_1\P_1^\oh\x_1 + \sqrt{\alpha}\H_3\W_3\P_3^\oh\x_3 +\n}_{\triangleq\z} 
\end{equation*}
with $\H_i\in\CC^{N\times N_i}$ the channel matrix from UT $i$ to the BS, $\x_i\sim \mathcal{CN}(0,\I_{n_i})$ the transmit symbol of UT $i$, $\W_i\in\CC^{N_i\times n_i}$ the isometric precoding matrix composed of $n_i$ orthogonal vectors and $0<\alpha<1$ an inter-cell interference factor. The vector $\z\in\CC^N$ combines the inter-cell interference and the thermal noise. The covariance matrix $\Z\in\CC^{N\times N}$ of $\z$ is given as
\begin{equation*}
\Z = \Exp{\z\z^\herm} = \alpha \left[\H_1\W_1\P_1\W_1^\herm \H_1^\herm + \H_3\W_3\P_3\W_3^\herm \H_i^\herm \right] + {\sigma^2}\I_N.
\end{equation*}

We assume an SDMA system with channel matrices $\H_k\in\CC^{N\times N_k}$ generated as realizations of a random standard Gaussian matrix with entries of zero mean and variance $1/N_k$. For simplicity, we further assume that each UT uses $n_k=n$ different transmit signatures to which it assigns equal unit power, i.e., $\P_k=\I_n$. Under these assumptions, the mutual information $I_N({\sigma^2})$ of the central cell when the interference is treated as noise is given by
\begin{align*}
  I_N({\sigma^2}) &= \frac1N\log\det\left(\I_N + \Z^{-\frac12}\H_2\W_2\W_2^\herm\H_2\Z^{-\frac12}\right) \\
  &= \frac1N\log\det\left(\I_N + \frac1{{\sigma^2}}\sum_{k=1}^3\H_k\W_k\W_k^\herm\H_k\right) - \frac1N\log\det\left(\I_N + \frac1{{\sigma^2}}\sum_{\substack{k=1\\ k\neq 2}}^3\H_k\W_k\W_k^\herm\H_k\right).
\end{align*}

According to \cite{SIL98}, the spectral norm of $\H_k\H_k^\herm$ is almost surely uniformly bounded. For such channel realizations, we are therefore in the conditions of Theorem~\ref{th:mutual_info}. As a consequence, $I_N({\sigma^2})-\bar{I}_N({\sigma^2})\asto 0$, with $\bar{I}_N$ defined in Theorem \ref{th:mutual_info} (termed $\bar{I}^{(a)}_N$). An approximation of the SINR at the output of the MMSE receiver for the $j${th} entry of $\x_2$ can also be computed directly by Theorem \ref{th:SINR_MMSE}. We assume $\alpha=0.25$, $N=16$, $N_1=N_2=N_3=8$ and define $\text{SNR}=1/{\sigma^2}$. We consider a single random realization of the matrices $\H_k$, which is assumed to be static and therefore deterministically known.

Figure \ref{fig:mutual_info} depicts $I_N({\sigma^2})$ and the deterministic equivalent $\bar{I}_N({\sigma^2})$ versus $\text{SNR}$ for different values of $n\in\{1,4,8\}$, scaled to bits/s/Hz instead of nats. Note that for the case $n=8$, the matrix $\Bm_N$ and, thus,  the mutual information are deterministic (see Remark~\ref{rem:ci_1}). We observe a very accurate fit between both results over the full range of $\text{SNR}$ and $n$. This validates the deterministic approximation of the mutual information for systems of even small dimensions. It appears that, at moderate $\text{SNR}$, i.e., when noise dominates interference, the results suggest that using all available data streams (or orthogonal transmit signatures) maximizes the rate of the central cell. On the contrary, at high $\text{SNR}$, the achieved mutual information is maximal when fewer than $N$ transmit signatures are used. These results corroborate the observations of \cite{blum2003}. Additionally, we can perform optimal stream control by numerical comparison of the deterministic equivalents of the achievable rates for each $n$. Such an optimization is performed in Section \ref{sec:interference} for the two-user interference channel.

\begin{figure}
  \centering
  \begin{tikzpicture}[scale=1]
    \renewcommand{\axisdefaulttryminticks}{4}
    \pgfplotsset{every major grid/.append style={densely dashed}}
    \tikzstyle{every axis y label}+=[yshift=-10pt]
    \tikzstyle{every axis x label}+=[yshift=5pt]
    \pgfplotsset{every axis legend/.append style={cells={anchor=west},fill=white, at={(0.02,0.98)}, anchor=north west}}
    \begin{axis}[
      %ybar,
      xmin=-5,
      ymin=0,
      xmax=30,
      ymax=1.8,
      bar width=3pt,
      grid=major,
      %ymajorgrids=false,
      scaled ticks=true,
      %scale ticks above={4},
      ylabel={$I_N(\sigma^2)$ [bits/s/Hz]},
      xlabel={{\rm SNR} [dB]}
      ]
      \addplot[black,smooth] plot coordinates{
      (-5,0.030297) (0,0.066540) (5,0.119120) (10,0.182050) (15,0.250374) (20,0.321045) (25,0.392577) (30,0.464397) 
      };

      \addplot[red,only marks,mark=*,mark options={scale=0.75},error bars/.cd,y dir=both,y explicit, error bar style={mark size=2.5pt}] plot coordinates{
      (-5,0.030333) +-(0.006501,0.006501) (0,0.066791) +-(0.011012,0.011012) (5,0.119404) +-(0.014716,0.014716) (10,0.182380) +-(0.017007,0.017007) (15,0.250934) +-(0.017914,0.017914) (20,0.321485) +-(0.018012,0.018012) (25,0.392825) +-(0.018163,0.018163) (30,0.464781) +-(0.018251,0.018251) 
      };

      \addplot[black,smooth] plot coordinates{
      (-5,0.123635) (0,0.255115) (5,0.433663) (10,0.643309) (15,0.879068) (20,1.138599) (25,1.414302) (30,1.697749) 
};

      \addplot[red,only marks,mark=*,mark options={scale=0.75},error bars/.cd,y dir=both,y explicit, error bar style={mark size=2.5pt}] plot coordinates{
      (-5,0.123634) +-(0.008375,0.008375) (0,0.255066) +-(0.014449,0.014449) (5,0.434021) +-(0.020492,0.020492) (10,0.643461) +-(0.026844,0.026844) (15,0.879569) +-(0.032923,0.032923) (20,1.138196) +-(0.037653,0.037653) (25,1.414829) +-(0.040537,0.040537) (30,1.698557) +-(0.042252,0.042252) 
};

      \addplot[black,smooth] plot coordinates{
      (-5,0.204115) (0,0.405681) (5,0.653104) (10,0.900385) (15,1.122539) (20,1.306486) (25,1.446863) (30,1.545036) 
};

      \addplot[red,only marks,mark=*,mark options={scale=0.75},error bars/.cd,y dir=both,y explicit, error bar style={mark size=2.5pt}] plot coordinates{
      (-5,0.204115) +-(0.000000,0.000000) (0,0.405681) +-(0.000000,0.000000) (5,0.653104) +-(0.000000,0.000000) (10,0.900385) +-(0.000000,0.000000) (15,1.122539) +-(0.000000,0.000000) (20,1.306486) +-(0.000000,0.000000) (25,1.446863) +-(0.000000,0.000000) (30,1.545036) +-(0.000000,0.000000) 
};

{\small
\node[] at (axis cs:10,1.1) {$n=8$}; 
\node[] at (axis cs:20,0.95) {$n=4$}; 
\node[] at (axis cs:26.5,0.3) {$n=1$}; 
}
      \legend{{deterministic equivalent},{simulation}}
    \end{axis}
  \end{tikzpicture}
  \caption{Mutual information $I_N({\sigma^2})$ versus ${\rm SNR}$ for different numbers of transmit signatures $n$, $N=16$, $N_i=8$, $\P_i=\I_n$, $\alpha=0.5$. Error bars represent one standard deviation on each side.}
  \label{fig:mutual_info}
\end{figure}
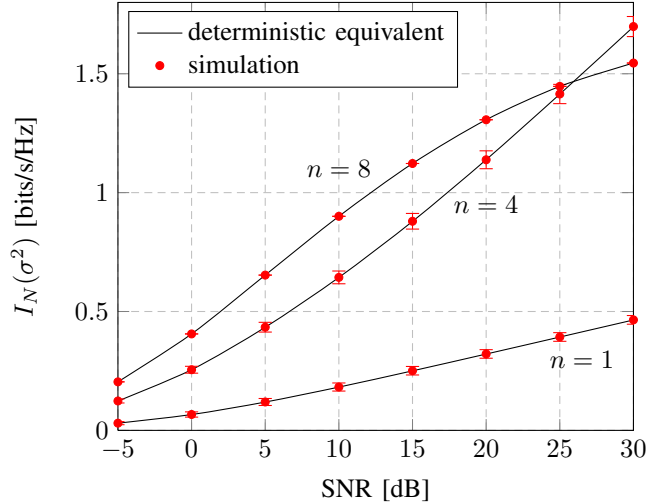

In Figure \ref{fig:sinr}, we compare the per-receive antenna sum rate $R_N({\sigma^2})$ with single-stream MMSE-detection to the associated deterministic equivalent $\bar{R}_N({\sigma^2})$, for the same system conditions as in Figure \ref{fig:mutual_info}. The sum rate $R_N({\sigma^2})$ is explicitly given by
\begin{equation*}
  R_N({\sigma^2}) = \frac1N\sum_{k=1}^n\log \left(1+\gamma^N_{2,k}({\sigma^2})\right)
\end{equation*}
with $\gamma^N_{ij}({\sigma^2})$ defined in \eqref{eq:mmse} (termed $\gamma^{N(a)}_{ij}({\sigma^2}))$. As for $\bar{R}_N({\sigma^2})$, from Theorem \ref{th:SINR_MMSE}, it reads
\begin{equation*}
	\bar{R}_N({\sigma^2}) = c_2\bar{c}_2\log\left( 1 + \frac{a_2({\sigma^2})}{\bar{c}_2-a_2({\sigma^2})\bar{a}_2({\sigma^2})} \right)
\end{equation*}
with $a_2({\sigma^2})$ and $\bar{a}_2({\sigma^2})$ defined in Theorem \ref{th:sumRWT}. For the case $n=8$, we have used $\bar{a}_k=1$ to compute the deterministic equivalents (see Remark~\ref{rem:ci_1}). Similar to the previous observations, the deterministic equivalent provides an accurate approximation for all values of $\text{SNR}$ and $n$, although the precision is slightly less than for the mutual information in Figure \ref{fig:mutual_info}. The same conclusions regarding optimal stream control also hold for the MMSE decoder, where we confirm an interest to perform stream control when the interference dominates the background noise.

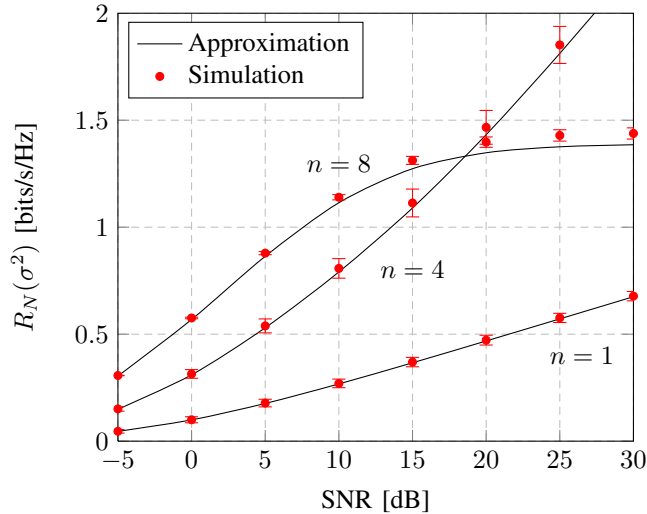
\begin{figure}
  \centering
  \begin{tikzpicture}[scale=1]
    \renewcommand{\axisdefaulttryminticks}{4}
    \pgfplotsset{every major grid/.append style={densely dashed}}
    \tikzstyle{every axis y label}+=[yshift=-10pt]
    \tikzstyle{every axis x label}+=[yshift=5pt]
    \pgfplotsset{every axis legend/.append style={cells={anchor=west},fill=white, at={(0.02,0.98)}, anchor=north west}}
    \begin{axis}[
      %ybar,
      xmin=-5,
      ymin=0,
      xmax=30,
      ymax=2,
      bar width=3pt,
      grid=major,
      %ymajorgrids=false,
      scaled ticks=true,
      %scale ticks above={4},
      ylabel={$R_N({\sigma^2})$ [bits/s/Hz]},
      xlabel={{\rm SNR} [dB]}
      ]
      \addplot[black,smooth] plot coordinates{
      (-5,0.045785) (0,0.099413) (5,0.176272) (10,0.267546) (15,0.366280) (20,0.468284) (25,0.571498) (30,0.675116) 
      };

      \addplot[red,only marks,mark=*,mark options={scale=0.75},error bars/.cd,y dir=both,y explicit, error bar style={mark size=2.5pt}] plot coordinates{
      (-5,0.046038) +-(0.008230,0.008230) (0,0.100012) +-(0.014079,0.014079) (5,0.178060) +-(0.017816,0.017816) (10,0.269945) +-(0.020056,0.020056) (15,0.369594) +-(0.021947,0.021947) (20,0.472104) +-(0.022859,0.022859) (25,0.575967) +-(0.021502,0.021502) (30,0.677749) +-(0.021959,0.021959) 
      };

      \addplot[black,smooth] plot coordinates{
      (-5,0.149292) (0,0.309933) (5,0.530119) (10,0.790856) (15,1.089890) (20,1.432193) (25,1.812010) (30,2.213498) 
};

      \addplot[red,only marks,mark=*,mark options={scale=0.75},error bars/.cd,y dir=both,y explicit, error bar style={mark size=2.5pt}] plot coordinates{
      (-5,0.150752) +-(0.010505,0.010505) (0,0.314201) +-(0.020196,0.020196) (5,0.538681) +-(0.032589,0.032589) (10,0.807380) +-(0.045730,0.045730) (15,1.113133) +-(0.064991,0.064991) (20,1.466671) +-(0.078969,0.078969) (25,1.851870) +-(0.086617,0.086617) (30,2.254981) +-(0.094006,0.094006) 
};

      \addplot[black,smooth] plot coordinates{
      (-5,0.304390) (0,0.568345) (5,0.863874) (10,1.114322) (15,1.273300) (20,1.347848) (25,1.375826) (30,1.385238) 
};

      \addplot[red,only marks,mark=*,mark options={scale=0.75},error bars/.cd,y dir=both,y explicit, error bar style={mark size=2.5pt}] plot coordinates{
      (-5,0.306859) +-(0.001226,0.001226) (0,0.575686) +-(0.003587,0.003587) (5,0.878732) +-(0.007154,0.007154) (10,1.139987) +-(0.012245,0.012245) (15,1.312178) +-(0.018751,0.018751) (20,1.397554) +-(0.024253,0.024253) (25,1.428833) +-(0.026505,0.026505) (30,1.438116) +-(0.026138,0.026138) 
};

{\small
\node[] at (axis cs:10,1.3) {$n=8$}; 
\node[] at (axis cs:15,0.8) {$n=4$}; 
\node[] at (axis cs:26.5,0.4) {$n=1$}; 
}
      \legend{{Approximation},{Simulation}}
    \end{axis}
  \end{tikzpicture}
  \caption{Sum rate $R_N({\sigma^2})$ at the output of the MMSE decoder for user $2$ versus SNR for different numbers of transmit signatures $n$, $N=16$, $N_i=8$, $\P_i=\I_n$, $\alpha=0.5$. Error bars represent one standard deviation on each side.}
  \label{fig:sinr}
\end{figure}

\subsection{Multiple access channel}\label{sec:MAC}
In this and the following example, we apply the theoretical results of Section \ref{sec:results} under the fading channel scenario (hypothesis (ii-b)).
We consider a MAC from three transmitters to a single receiver as shown in Figure~\ref{fig:MAC}. The channel from each transmitter to the receiver is modeled by the Kronecker model (see Remark~\ref{rem:kronecker}) with individual transmit and receive covariance matrices $\Tm_k$ and $\Rm_k$ and we assume additionally a different path loss $\alpha_k>0$ on each link. The received signal vector $\yv$ for this model reads
\begin{align*}
 \yv = \sum_{k=1}^3 \sqrt{\alpha_k} \Rm_k^{\frac12}\Zm_k\Tm_k^{\frac12}\Wm_k\Pm_k^{\frac12}\xv_k + \nv
\end{align*}
where $\xv_k\sim\Cc\Nc(\zerov,\I_{N_k})$ and $\nv\sim\Cc\Nc(\zerov,{\sigma^2}\Id_N)$. 
We create the correlation matrices according to a generalization of Jakes' model with non-isotropic signal transmission, see, e.g., \cite{CHI00,PED00,MOU00}, where the elements of $\Tm_k$ and $\Rm_k$ are given as
\begin{align}\nonumber
 \LSB\Tm_k\RSB_{ij} &= \frac{1}{\theta^{t,k}_\text{max}-\theta^{t,k}_\text{min}}\int_{\theta^{t,k}_\text{min}}^{\theta^{t,k}_\text{max}}\exp\LB\frac{{\bf{i}} 2\pi}{\lambda}d^{t,k}_{ij}\cos\LB\theta\RB\RB d\theta\\\label{eq:macchnmodel}
\LSB\Rm_k\RSB_{ij} &= \frac{1}{\theta^{r,k}_\text{max}-\theta^{r,k}_\text{min}}\int_{\theta^{r,k}_\text{min}}^{\theta^{r,k}_\text{max}}\exp\LB\frac{{\bf{i}} 2\pi}{\lambda}d^r_{ij}\cos\LB\theta\RB\RB d\theta
\end{align}
where $(\theta^{t,k}_\text{min},\theta^{t,k}_\text{max})$ and $(\theta^{r,k}_\text{min},\theta^{r,k}_\text{max})$ determine the azimuth angles over which useful signal power for the $k$th transmitter is radiated or received, $d^{t,k}_{ij}$ and $d^{r}_{ij}$ are the distances between the antenna elements $i$ and $j$ at the $k$th transmitter and receiver, respectively, and $\lambda$ is the signal wavelength. We assume uniform power allocation for all $k$, i.e., $\Pm_k=\frac1{n_k}\Id_{n_k}$, and define $\text{SNR}=1/{\sigma^2}$. All other parameters are summarized in Table~\ref{tab:param}. 

\begin{figure}
\centering \includegraphics[width=0.45\textwidth]{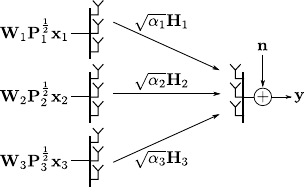}
\caption{MIMO MAC from three transmitters ($k=1,2,3$) with $N_k$ antennas to a receiver with $N$ antennas. Each transmitter sends $n_k$ streams with precoding matrix $\Wm_k$ and power allocation $\Pm_k$ over the channel $\sqrt{\alpha_k}\Hm_k$.\label{fig:MAC}}
\end{figure}

Figure~\ref{fig:MAC_mutinf} compares the normalized mutual information $I_N({\sigma^2})$ and the normalized rate with MMSE decoding $R_N({\sigma^2})$, averaged over $10,000$ different realizations of the matrices $\Hm_k$ and $\Wm_k$, against their deterministic approximations $\bar{I}_N({\sigma^2})$ and $\bar{R}_N({\sigma^2})$. Although we have chosen small dimensions for all matrices (see Table~\ref{tab:param}), the match between both results is almost perfect. Also the fluctuations of  $I_N({\sigma^2})$ and $R_N({\sigma^2})$ are rather small as can be seen from the error bars representing one standard deviation in each direction. %The figure further illustrates the gains of optimal power allocation with a sum-power constraint $(II)$, where we have chosen $P=\sum_{k=1}^3\frac1{n_k}\trace\Id_{n_k}=3$.

\begin{table}
\renewcommand{\arraystretch}{1.3}
\caption{Simulation parameters for Figure~\ref{fig:MAC_mutinf}: $N=10$, $d^{r}_{ij}=8\lambda(i-j)$}
\label{tab:param}
\centering
\begin{tabular}{c|cccccccc}
$k$ & $N_k$ & $n_k$ & $\theta^{t,k}_\text{min}$ & $\theta^{t,k}_\text{max}$ & $\theta^{r,k}_\text{min}$ & $\theta^{r,k}_\text{max}$ & $d^{t,k}_{ij}$& $\alpha_k$ \\
\hline\hline
1 & 10 & 8 & $0$      & $\pi/2$ & $-\pi/4$ & $0$     & $4\lambda(i-j)$ & $1$   \\\hline
2 & 5  & 4 & $-\pi/4$ & $\pi/4$ & $0$      & $\pi/3$ & $4\lambda(i-j)$ & $1/2$ \\\hline
3 & 5  & 4 & $-\pi/2$ & $0$     & $-\pi/3$ & $\pi/3$ & $4\lambda(i-j)$ & $1/2$ \\\hline
\end{tabular}
\end{table}

\begin{figure}
\centering
 \includegraphics[width=0.6\textwidth]{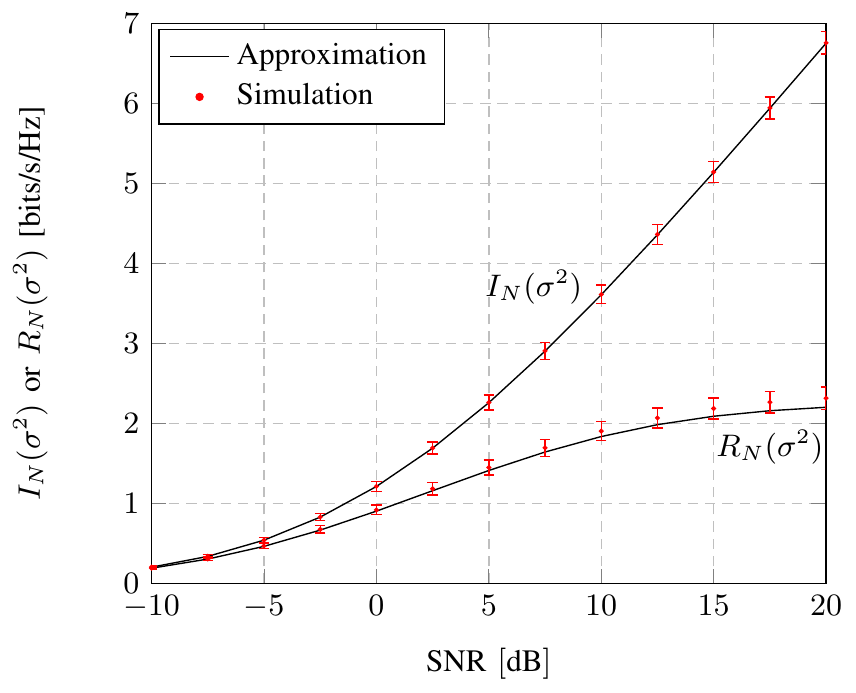}
\caption{Comparison of the average normalized mutual information $I_N({\sigma^2})$ and the normalized rate with MMSE decoding $R_N({\sigma^2})$ with their deterministic approximations $\bar{I}_N({\sigma^2})$ and $\bar{R}_N({\sigma^2})$. Error bars represent one standard deviation in each direction.\label{fig:MAC_mutinf}}
\end{figure}

\subsection{Stream-control in interference channels}
\label{sec:interference}
\begin{figure}
\centering \includegraphics[width=0.50\textwidth]{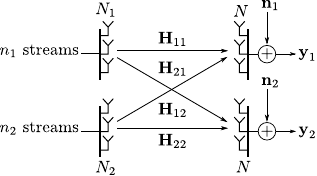}
\caption{Interference channel from two transmitters with $N_k$ ($k=1,2$) antennas, respectively, to two receivers with $N$ antennas each. Each transmitter sends $n_k$  independent data streams to its respective receiver.\label{fig:IC}}
\end{figure}

Our last example considers a MIMO interference channel consisting of two transmitter-receiver pairs as depicted in Figure~\ref{fig:IC}. The received signal vectors $\yv_1,\yv_2\in\CC^N$ are respectively given as
\begin{align*}
 \yv_1 & = \Hm_{11}\Wm_1\P_1^\oh\xv_1 + \Hm_{12}\Wm_2\P_2^\oh\xv_2 + \nv_1\\
 \yv_2 & = \Hm_{21}\Wm_1\P_1^\oh\xv_1 + \Hm_{22}\Wm_2\P_2^\oh\xv_2 + \nv_2
\end{align*}
where $\Hm_{qk}\in\CC^{N\times N_k}$, $\Wm_k\in\CC^{N_k\times N_k}$, $\xv_k\sim\Cc\Nc(\zerov,\I_{N_k})$, $\Pm_k\in\RR_+^{N_k\times N_k}$ satisfying $\frac1{N_k}\trace\Pm_k=1$, and $\nv_k\sim\Cc\Nc(\zerov,{\sigma^2}\Id_N)$, for $q,k\in\{1,2\}$. Assuming that the receivers are aware of both precoding matrices and their respective channels but treat the interfering transmission as noise, the normalized mutual informations between $\xv_1$ and $\yv_1$, and $\xv_2$ and $\yv_2$, are respectively given as
\begin{align*}
 I_1({\sigma^2}) & = \frac1N\log\det\LB\Id_N + \frac1{\sigma^2}\sum_{k=1}^2\Hm_{1k}\Wm_k\Pm_k\Wm_k\htp\Hm_{1k}\htp\RB-\frac1N\log\det\LB\Id_N+\frac1{\sigma^2}\Hm_{12}\Wm_2\Pm_2\Wm_2\htp\Hm_{12}\htp\RB\\
I_2({\sigma^2}) & = \frac1N\log\det\LB\Id_N + \frac1{\sigma^2}\sum_{k=1}^2\Hm_{2k}\Wm_k\Pm_k\Wm_k\htp\Hm_{2k}\htp\RB-\frac1N\log\det\LB\Id_N+\frac1{\sigma^2}\Hm_{21}\Wm_1\Pm_1\Wm_1\htp\Hm_{21}\htp\RB.
\end{align*}
We adopt the same channel model as in Section~\ref{sec:MAC}, where the channel matrices $\Hm_{qk}$ are given as
\begin{align*}
 \Hm_{qk} = \Rm_{qk}^{\frac12}\Zm_{qk}\Tm_k^{\frac12}
\end{align*}
where $\Zm_{qk}\in\CC^{N\times N_k}$ have independent $\Cc\Nc(0,1/N)$ entries and $\Tm_k$ and $\Rm_{qk}$ are calculated according to \eqref{eq:macchnmodel}. We assume that no channel state information is available at the transmitters, so that the matrices $\Pm_k$ are simply used to determine the number of independently transmitted streams: 
\begin{align*}
\Pm_k=\frac{N_k}{n_k}\diag\LB\underbrace{1,\dots,1}_{n_k},\underbrace{0,\dots,0}_{N_k-n_k}\RB. 
\end{align*}
We will now apply the previously derived results to find the optimal number of streams $(n_1^\star,n_2^\star)$ maximizing the normalized ergodic sum-rate of the interference channel above. That is, we seek to find
\begin{align*}
 (n_1^\star,n_2^\star)\ & =\ \max_{n_1,n_2} \mathbb{E}\LSB  I_1({\sigma^2}) +  I_2({\sigma^2})\RSB\\
&\quad\text{s.t. } 1\leq n_1 \leq N_1,\ 1\leq  n_2 \leq N_2
\end{align*}
where the expectation is with respect to both channel and precoding matrices. Due to the complexity of the random matrix model, this optimization problem appears intractable by exact analysis. At the same time, any solution based on an exhaustive search in combination with Monte Carlo simulations becomes quickly prohibitive for large $N_1,N_2$, since $N_1\times N_2$ possible combinations need to be tested. Relying on Theorem~\ref{th:mutinf}, we can calculate an approximation of $\mathbb{E}\LSB  I_1({\sigma^2}) +  I_2({\sigma^2})\RSB$ to find an approximate solution which becomes asymptotically exact as $N_1$ and $N_2$ grow large. Thus, we determine $(\bar{n}_1^\star,\bar{n}_2^\star)$ as the solution to
\begin{align*}
 (\bar{n}_1^\star,\bar{n}_2^\star)\ & =\ \max_{n_1,n_2}  \bar{I}_1({\sigma^2}) +  \bar{I}_2({\sigma^2})\\
&\quad\text{s.t. } 1\leq n_1 \leq N_1,\ 1\leq  n_2 \leq N_2
\end{align*}
where $\bar{I}_1({\sigma^2}),\bar{I}_2({\sigma^2})$ are calculated based on a direct application of Theorem~\ref{th:mutinf} to each of the two log-det terms in $I_1({\sigma^2})$ and $I_2({\sigma^2})$, respectively. The optimal values $(\bar{n}_1^\star,\bar{n}_2^\star)$ are then found by an exhaustive search over all possible combinations. Although we still need to compute $N_1\times N_2$ values, this is computationally much cheaper than Monte Carlo simulations. Although Theorem~\ref{th:mutinf} does not hold for the case $n_i=N_i$ in general, we can compute a deterministic equivalent of the mutual information by letting $\bar{b}_k=1$ since $\Pm_k=\Id_{N_k}$ (see Remark~\ref{rem:ci_1}). In this case, the matrices $\Wm_k$ vanish and $\bar{I}^{\text{(b)}}_N(\sigma^2)$ reduces to the deterministic equivalent of the mutual information of a channel with a variance profile as given by Theorem~\ref{th:logdetcor} in Appendix~\ref{sec:appendixB}. 

Figure~\ref{fig:IC_SNR0} and Figure~\ref{fig:IC_SNR40} show the average normalized sum-rate $\mathbb{E}\LSB  I_1({\sigma^2}) +  I_2({\sigma^2})\RSB$ and the deterministic approximation $\bar{I}_1({\sigma^2}) +  \bar{I}_2({\sigma^2})$, by Theorem~\ref{th:mutinf}, as a function of $(n_1,n_2)$ for the simulation parameters as given in Table~\ref{tab:paramIC}. We have assumed $\text{SNR}=0\,\text{dB}$ and $\text{SNR}=40\,\text{dB}$ in Figure~\ref{fig:IC_SNR0} and Figure~\ref{fig:IC_SNR40}, respectively. In both figures, the solid grid represents simulation results and the markers the deterministic approximations. We observe here again an almost perfect overlap between both sets of results for all values of $(n_1,n_2)$. The optimal values $(n_1^\star,n_2^\star)$ and $(\bar{n}_1^\star,\bar{n}_2^\star)$ coincide for both values of $\text{SNR}$ and are indicated by large crosses.
At low SNR, both transmitters should send as many independent streams as transmit antennas, i.e., $n_1=n_2=10$. At high SNR, one transmitter should use only a single stream ($n_2=1$) and the other transmitter $n_1=N-1=9$ streams. These results are in line with the observations of \cite{blum2003}.

Obviously, the last optimization problem is highly unfair and better solutions can be achieved by using different objective functions, such as weighted sum-rate maximization. Also optimal stream-control with MMSE decoding could be carried out in a similar manner. Although we would still need to perform an exhaustive search over all possible combinations of $n_1,n_2$, the computations based on deterministic equivalents are significantly faster than simulation-based approaches. The development of more intelligent algorithms to determine $(\bar{n}^\star_1,\bar{n}^\star_2)$ is outside the scope of this paper and left to future work. The extension to more than two transmitter-receiver pairs is straightforward.

\begin{table}
\renewcommand{\arraystretch}{1.3}
\caption{Simulation parameters for Figure~\ref{fig:IC_SNR0} and \ref{fig:IC_SNR40}: $N=10$, $d^{r,k}_{ij}=4\lambda(i-j)$, $d^{t,k}_{ij}=4\lambda(i-j)$}
\label{tab:paramIC}
\centering
\begin{tabular}{c|ccccc}
$(q,k)$ & $N_k$ & $\theta^{t,k}_\text{min}$ & $\theta^{t,k}_\text{max}$ & $\theta^{r,q,k}_\text{min}$ & $\theta^{r,q,k}_\text{max}$  \\
\hline\hline
(1,1) & 10 & $0$      & $\pi/2$ & $-\pi/4$ & $0$     \\\hline
(1,2) & 10 & $-\pi/2$ & $0$     & $0$      & $\pi/4$ \\\hline
(2,1) & 10 & $0$      & $\pi/2$ & $-\pi/3$ & $0$     \\\hline
(2,2) & 10 & $-\pi/2$ & $0$     & $0$      & $\pi/3$ \\\hline
\end{tabular}
\end{table}

%%%%%%%%%%%%%%%%%%%%%%%%%%%%%%%%%%%%%%%%%%%%%%%%%%%%%%%%%%%%%%%%%%%%%%%%%%%%%%%%%%%%%%%%%%%
%%% Conclusion
%%%%%%%%%%%%%%%%%%%%%%%%%%%%%%%%%%%%%%%%%%%%%%%%%%%%%%%%%%%%%%%%%%%%%%%%%%%%%%%%%%%%%%%%%%%

\section{Conclusions}
\label{sec:conclusion}
In this article, we have studied a class of wireless communication channels with random unitary signature or precoding matrices over quasi-static and fast fading channels, assuming either single or multiple users and cells. For this wide range of system models, we have provided deterministic approximations of the mutual information, the SINR at the output of the MMSE receiver, and the associated sum-rate. These approximations were shown to be asymptotically accurate as the system dimensions grow large, and to be based on fixed-point solutions of a set of fundamental equations. Practical applications of these results were then proposed in the contexts of multi-cell SDMA with unitary precoders under multi-cell interference, MIMO-MAC with random unitary precoding, and interference channels with random beamforming. Simulations of the system performance demonstrate the accuracy of the approximations even for systems of small dimensions. Moreover, the deterministic equivalent framework was used to derive the sum rate maximizing number of streams to transmit in interference channels, which is intractable to solve by exact analysis.  
Lastly, we have proposed a novel technical method for the analysis of matrix models featuring random isometric matrices which goes beyond the current reach of classical free probability approaches. However, the proof for the case $\limsup c_i=1$ and arbitrary power allocation for different streams, i.e., the precoding matrices $\Wm_i$ are square and $\Pm_k\neq p_k\Id_{N_k}$, remains an open problem which might be solved with different methods.

\section*{Acknowledgements}

The authors would like to thank Prof. Loubaton for fruitful discussions on the central question of the holomorphicity of the deterministic equivalents studied in this article.

\clearpage
\begin{figure}
\centering
\includegraphics[width=0.67\textwidth]{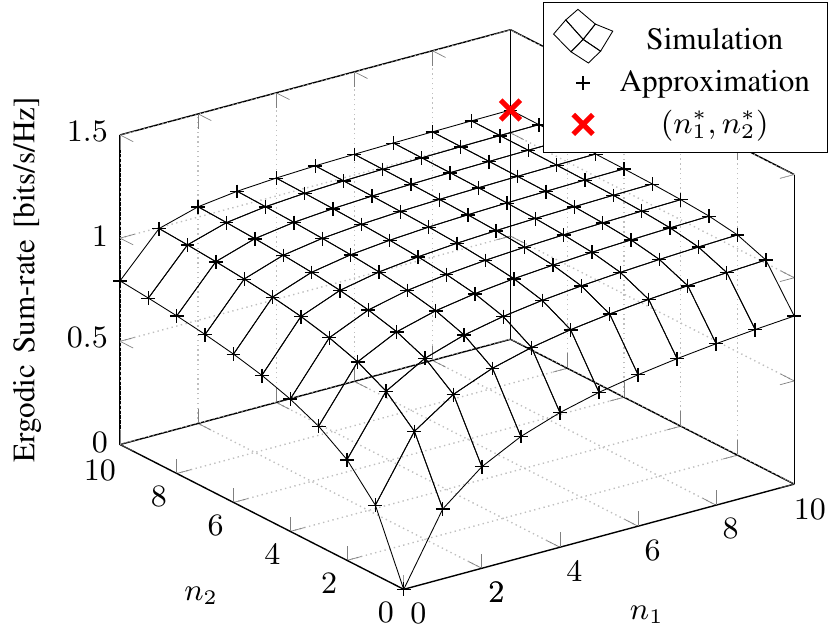}
\caption{Sum-rate versus number of transmitted data-streams $(n_1,n_2)$ for $\text{SNR}=0\,\text{dB}$ and all other parameters as provided in Table~\ref{tab:paramIC}. Solid lines correspond to simulation results, markers to the deterministic approximation by Theorem~\ref{th:mutinf}. As expected, both transmitters should send the maximum number of independent streams.\label{fig:IC_SNR0}}
\end{figure}

\begin{figure}
\centering
\includegraphics[width=0.67\textwidth]{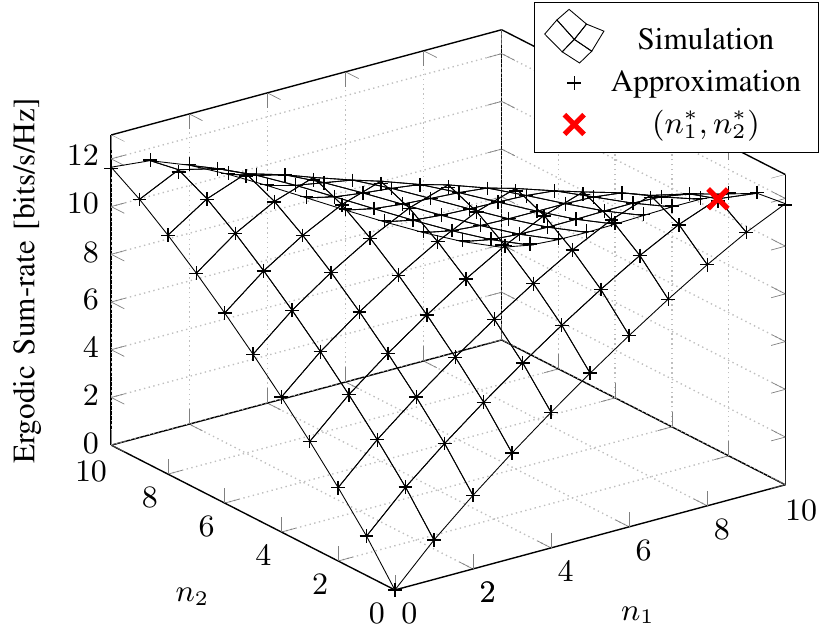}
\caption{Sum-rate versus number of transmitted data-streams $(n_1,n_2)$ for $\text{SNR}=40\,\text{dB}$ and all other parameters as provided in Table~\ref{tab:paramIC}. Solid lines correspond to simulation results, markers to the deterministic approximation by Theorem~\ref{th:mutinf}. As co-channel interference is dominant, there is a clear gain of limiting the number of transmitted streams.\label{fig:IC_SNR40}}
\end{figure}

\clearpage
\appendices

\section{Spectral approximation of $\B_N$ in the quasi-static model}
\label{app:sumRXT}

This section is dedicated to the proof of Theorem \ref{th:sumRWT} as given below. This theorem is the cornerstone result for all other results derived in this article. The  proof is based on the Stieltjes transform method which is extensively documented in \cite{SIL06,COUbook}.\bigskip

We first remind some elementary notions which are needed in the following. For a Hermitian matrix $\A\in\CC^{N\times N}$ with eigenvalues $\lambda_1\leq \ldots \leq \lambda_N$, we denote by $F^\A$ the {\it empirical spectral distribution} (e.s.d.), defined as
\begin{equation*}
	F^\A(t) = \frac1N\sum_{i=1}^N {\bm 1}_{\{\lambda_i\leq t\}}(t).
\end{equation*}
We now recall the definition of a Stieltjes transform.
\begin{definition}
	\label{def:ST}
Let $F$ be the distribution function of a probability measure with support $S$. Then, the Stieltjes transform of $F$, denoted $m_F$, is the function
\begin{align*}
	m_F:\CC\setminus S &\to \CC \\
	z&\mapsto \int \frac1{t-z}dF(t).
\end{align*}
In particular, for $F^\A$ the e.s.d. of a Hermitian matrix $\A$,
\begin{equation*}
	m_{F^\A}(z) = \frac1N\tr \left(\A - z\I_N\right)^{-1}
\end{equation*}
which will often be denoted $m_{\A}$.
\end{definition}
In the course of the derivations, some defining properties of the Stieltjes transform will be needed. These are provided in Lemma \ref{le:properties_ST} (Appendix~\ref{sec:fundlemmas}).\bigskip

\begin{theorem}
	\label{th:sumRWT}
	For $i\in\{1,\ldots,K\}$, let $\P_i\in \CC^{n_i\times n_i}$ be a Hermitian nonnegative matrix with spectral norm bounded uniformly along $n_i$ and $\W_i\in\CC^{N_i\times n_i}$ be $n_i< N_i$ columns of a unitary Haar distributed random matrix. Consider $\H_i\in\CC^{N\times N_i}$ a random matrix such that $\R_i\triangleq \H_i\H_i^\herm\in\CC^{N\times N}$ has uniformly bounded spectral norm along $N$, almost surely. Define $c_i= \frac{n_i}{N_i}$, $\bar{c}_i= \frac{N_i}{N}$, and denote 
  \begin{equation*}
    \B_N = \sum_{i=1}^K \H_i\W_i\P_i\W_i^\herm\H_i^\herm
  \end{equation*}
and $F_N$ the e.s.d. of $\B_N$.
 Then, as $N\to\infty$, with $\bar{c}_i$ and $c_i$ satisfying $0<\lim\inf \bar{c}_i\leq \lim\sup \bar{c}_i<\infty$ and $0\le \liminf c_i \leq \limsup c_i< 1$ for all $i$, the following limit holds true almost surely
  \begin{equation*}
  F_N-\bar{F}_N\Rightarrow 0
  \end{equation*}
  where $\bar{F}_N$ is the distribution function with support on $\RR_+$ and Stieltjes transform $\bar{m}_N(z)$, $z\in\CC\setminus \RR_+$. The latter is defined for $z\in D\triangleq\left\{z=x+{\bf{i}}y: x<0, |y|\le |x|\frac{1-c_i}{c_i}\right\}$ as
  \begin{equation}
	  \bar{m}_N(z) = \frac1N \tr \left(\sum_{i=1}^K \bar{e}_i(z) \R_i-z\I_N \right)^{-1}
  \end{equation}
  where $(z\mapsto \bar{e}_1(z),\dots,z\mapsto\bar{e}_K(z))\in\mathcal H(\CC\setminus\RR_+,\CC)^K$ are defined, for $z\in D$, as the unique solution of the following system of equations
  \begin{align}
	  \label{eq:pi}
	  \bar{e}_i(z) &= \frac1N\tr \P_i\left(e_i(z) \P_i + [\bar{c}_i-e_i(z) \bar{e}_i(z)] \I_{n_i} \right)^{-1} \nonumber \\
    	  e_i(z) &= \frac1N\tr \R_i\left(\sum_{j=1}^K \bar{e}_j(z) \R_j -z\I_N \right)^{-1}
  \end{align}
such that, for $z<0$, $0\leq \bar{e}_i(z) < c_i\bar{c}_i/e_i(z)$, for all $i$, explicitly given by:
    \begin{equation*} 
	  \quad  \bar{e}_i(z)=\lim_{t\to\infty}\bar{e}_i^{(t)}(z), \quad e_i(z)=\lim_{t\to\infty}e_i^{(t)}(z), \quad \bar{e}_i^{(t)}(z) = \lim_{k\to \infty} \bar{e}_i^{(t,k)}(z)
  \end{equation*}
 where, for $k\geq 1$,
\begin{align*}
	e_i^{(t)}(z) & = \frac1N\tr \R_i\left(\sum_{j=1}^K \bar{e}^{(t-1)}_j(z) \R_j -z\I_N \right)^{-1}\\
	\bar{e}_i^{(t,k)}(z) &= \frac1N\tr \P_i\left(e_i^{(t)}(z) \P_i + [\bar{c}_i-e_i^{(t)}(z) \bar{e}^{(t,k-1)}_i(z)] \I_{n_i} \right)^{-1}
\end{align*} 
with the initial values $\bar{e}_i^{(t,0)}(z)=0$ and $e_i^{(0)}(z)=0$ for all $i$. Moreover, $(z\mapsto {e}_1(z),\dots,z\mapsto{e}_K(z))\in\mathcal S(\RR_+)^K$.

\end{theorem}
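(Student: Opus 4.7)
The plan is to carry out the Stieltjes transform program of Silverstein--Bai, but with the i.i.d.\ ``trace lemma'' replaced everywhere by its Haar counterpart (Lemma~\ref{le:trace_Haar}). Write $m_N(z)=\frac1N\tr(\B_N-z\I_N)^{-1}$ and, using the resolvent identity together with $\B_N=\sum_{k,j} p_{kj}\H_k\w_{kj}\w_{kj}\htp\H_k\htp$, isolate each rank-one term $p_{kj}\H_k\w_{kj}\w_{kj}\htp\H_k\htp$ via the matrix inversion lemma. This produces diagonal entries of the form $\w_{kj}\htp\H_k\htp(\B_{N,(k,j)}-z\I_N)^{-1}\H_k\w_{kj}$, to which the Haar trace lemma must then be applied. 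The crucial feature (and source of the ``$-a_k\bar a_k$'' correction that distinguishes Theorem~\ref{th:sumRWT} from the i.i.d.\ case, cf.\ Remark~\ref{rem:iid}) is that Lemma~\ref{le:trace_Haar} replaces $\w\htp A\w$ not by $\tfrac1N\tr A$ but by a conditional expectation that also involves the complementary projection onto the free columns of the Haar matrix.

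Next, I would guess the ansatz $\bar m_N(z)=\frac1N\tr\bigl(\sum_i\bar e_i(z)\R_i-z\I_N\bigr)^{-1}$ and identify, by matching the asymptotic form of the diagonal entries above, the auxiliary variables $e_i(z)=\frac1N\tr\R_i(\sum_j\bar e_j(z)\R_j-z\I_N)^{-1}$ and $\bar e_i(z)=\frac1N\tr\P_i\bigl(e_i(z)\P_i+[\bar c_i-e_i(z)\bar e_i(z)]\I_{n_i}\bigr)^{-1}$. To make this rigorous, I would introduce the deterministic matrix $\Tm_N(z)=(\sum_i\bar e_i(z)\R_i-z\I_N)^{-1}$, write $m_N(z)-\bar m_N(z)=\frac1N\tr[(\B_N-z\I_N)^{-1}-\Tm_N(z)]$, use the resolvent identity, and bound the resulting cross terms using the Haar trace lemma together with standard concentration (Burkholder / martingale) inequalities and uniform boundedness of resolvent norms for $z\in D$ (which is where the restriction $|y|\le|x|(1-c_i)/c_i$ enters: it guarantees that the matrices $[\bar c_i-e_i\bar e_i]\I_{n_i}+e_i\P_i$ are uniformly invertible). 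This step is where almost sure convergence $m_N(z)-\bar m_N(z)\asto 0$ for each fixed $z\in D$ is obtained.

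Existence, uniqueness, and convergence of the fixed-point algorithm I would treat by the standard interference function framework of Yates, as alluded to in the remark following Corollary~\ref{cor:MMSE}. Concretely, for $z<0$ the map $(e_i,\bar e_i)\mapsto\bigl(\frac1N\tr\R_i(\sum_j\bar e_j\R_j-z\I_N)^{-1},\,\frac1N\tr\P_i(e_i\P_i+[\bar c_i-e_i\bar e_i]\I_{n_i})^{-1}\bigr)$ (with the inner iteration in $\bar e_i$ converging because it is contractive on the admissible interval $[0,c_i\bar c_i/e_i)$) is positive, monotone, and scalable on the cone $\{0\le\bar e_i<c_i\bar c_i/e_i\}$, hence admits a unique fixed point to which the iterates converge monotonically. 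Holomorphicity of $z\mapsto e_i(z),\bar e_i(z)$ on $\CC\setminus\RR_+$ then follows from the implicit function theorem (the Jacobian of the fixed-point map is invertible thanks to the same monotonicity), combined with Montel's theorem applied to the locally uniformly bounded family of iterates; and $e_i\in\mathcal S(\RR_+)$ follows because $e_i(z)$ inherits from its integral representation the Stieltjes transform property (nonnegative imaginary part on $\CC_+$, correct growth at infinity).

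Finally, to pass from pointwise a.s.\ convergence of Stieltjes transforms to weak convergence $F_N-\bar F_N\Rightarrow 0$ a.s., I would invoke the standard Stieltjes continuity argument: a.s.\ convergence on a countable set with an accumulation point in $\CC_+$ lifts, via normality of Nevanlinna functions and Vitali's theorem, to uniform convergence on compact subsets of $\CC\setminus\RR_+$, which implies weak convergence of the associated distribution functions. Analytic continuation then extends all identities from $D$ to the whole of $\CC\setminus\RR_+$. The main obstacle I anticipate is the control of the Haar correction term in the trace lemma: unlike the i.i.d.\ case, the concentration estimate for $\w_{kj}\htp A\w_{kj}$ involves quantities that are themselves random and coupled through the remaining columns of $\W_k$, so one has to iterate the conditional estimate carefully while maintaining uniform bounds on $\|(\B_N-z\I_N)^{-1}\|$ and on the auxiliary matrices $(e_i\P_i+[\bar c_i-e_i\bar e_i]\I_{n_i})^{-1}$; this is precisely where the admissibility region $D$ and the constraint $0\le e_i\bar e_i<c_i\bar c_i$ play their technical role.
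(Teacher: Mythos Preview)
Your overall architecture matches the paper's proof closely: resolvent identity plus the Haar trace lemma to produce the fixed-point relations, Yates' standard-function framework for existence/uniqueness, and Vitali's theorem to pass from pointwise to uniform convergence of Stieltjes transforms. The essential insight you identify---that the complementary projection in Lemma~\ref{le:trace_Haar} is what generates the $-e_i\bar e_i$ correction---is exactly right.

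There are, however, two places where your sketch diverges from the paper's execution in ways worth noting. First, the paper does not jump directly from the rank-one expansion to the ansatz $\bar m_N$: it goes through an intermediate approximation $\G=\sum_j\bar g_j\R_j$ whose coefficients $\bar g_j$ are written in terms of the auxiliary random variables $\delta_i=\frac1{N_i-n_i}\tr(\I-\W_i\W_i^\herm)\H_i^\herm(\B_N-z\I)^{-1}\H_i$, and only then simplifies $\bar g_i$ into the fixed-point form and replaces it by the self-consistent $\bar f_i$. This two-stage maneuver is how the paper actually tames the ``coupling through remaining columns'' you flag at the end; your sketch is correct that this is the hard part, but the concrete mechanism is this $\delta_i\to\bar g_i\to\bar f_i$ chain rather than a single concentration step. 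Second, for holomorphicity of $\bar e_i$ on $D$ the paper does \emph{not} invoke the implicit function theorem (the Jacobian invertibility would itself need justification); instead it builds an explicit iterative sequence $q_{i,n}$, shows each iterate is analytic and bounded on $D$ with $\Re q_{i,n}\ge 0$, proves monotone convergence on the negative real axis, and then applies Vitali. Similarly, the Stieltjes-transform property of $e_i$ is obtained not by direct verification but by a Kronecker-product enlargement $\P_{[p],i}=\P_i\otimes\I_p$, $\H_{[p],i}=\H_i\otimes\I_p$ that leaves the fixed-point equations invariant while letting $p\to\infty$, so that $e_i$ is realized as an almost-sure limit of genuine Stieltjes transforms $f_{[p],i}$. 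Your proposed routes (implicit function theorem, direct integral-representation check) are plausible alternatives but would each require additional work that the paper's more constructive arguments sidestep.
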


\bigskip
\begin{remark}
Denoting $a_i({\sigma^2})=e_i(-{\sigma^2})$ for ${\sigma^2}>0$, we see immediately that Theorem~\ref{th:sumRWT} encompasses Theorem~\ref{th:fundamental_eq} as a special case.
\end{remark}
\bigskip

We first provide an outline of the proof for better understanding. The full proof will be given in Appendix \ref{sec:completeproof}.

\subsection{Sketch of the proof}
\label{sec:sketchproof}
As a first step, we wish to prove that there exists a matrix $\F$ of the form $\F=\sum_{i=1}^K\bar{f}_i\R_i$, with $\bar{f}_i\in\CC$, such that, for all nonnegative $\A$ with $\Vert \A\Vert<\infty$ uniformly on $N$ and $z<0$,
	\begin{equation*}
	  \frac1N\tr \A\left(\B_N-z\I_N\right)^{-1}-\frac1N\tr \A\left(\F-z\I_N\right)^{-1}\asto 0.
	\end{equation*}
	Taking $\A=\R_i$ and denoting $f_i\triangleq \frac1N\tr \R_i\left(\B_N-z\I_N\right)^{-1}$, we will have in particular that
	\begin{equation*}
	  f_i - \frac1N\tr\R_i\left(\sum_{j=1}^K\bar{f}_j\R_j-z\I_N\right)^{-1} \asto 0.
      \end{equation*}

Contrary to classical deterministic equivalent approaches for random matrices with i.i.d. entries, finding the approximation $\frac1N\tr \A\left(\F-z\I_N\right)^{-1}$ for $\frac1N\tr \A\left(\B_N-z\I_N\right)^{-1}$ is not straightforward. The reason is that, during the derivation, terms such as $\frac1{N_i-n_i}\tr \left(\I_{N_i} - \W_i\W_i^\herm\right)\H_i^\herm\left(\B_N-z\I_N\right)^{-1}\H_i$ with the $\left(\I_{N_i} - \W_i\W_i^\herm\right)$ prefix will naturally appear which need to be controlled. We proceed as follows.
	\begin{itemize}
	  \item We first denote, for all $i$, $\delta_i \triangleq \frac1{N_i-n_i}\tr \left(\I_{N_i} - \W_i\W_i^\herm\right)\H_i^\herm\left(\B_N-z\I_N\right)^{-1}\H_i$ some auxiliary variable. Then we prove
	    \begin{equation*}
	      f_i-\frac1N\tr \R_i\left(\G-z\I_N\right)^{-1}\asto 0,
	    \end{equation*}
	    with $\G=\sum_{j=1}^K\bar{g}_j\R_j$ and
	    \begin{equation*}
	      \bar{g}_i = \frac1{(1 - c_i)\bar{c}_i + \frac1{N}\sum_{l=1}^{n_i}\frac1{1+p_{il}\delta_i}}\frac1{N}\sum_{l=1}^{n_i}\frac{p_{il}}{1+p_{il}\delta_i},
	    \end{equation*}
	    where $p_{il}$ denotes the $l$th eigenvalue of $\P_i$, and $\delta_i$ is linked to $f_i$ through
	    \begin{equation*}
	      f_i - \left((1-c_i)\bar{c}_i\delta_i + \frac1N\sum_{l=1}^{n_i}\frac{\delta_i}{1+p_{il}\delta_i} \right) \asto 0.
	    \end{equation*}
	  \item This expression of $\bar{g}_i$, which is not convenient under this form, is then shown to satisfy
	    \begin{equation*}
	      \bar{g}_i - \frac1N\sum_{l=1}^{n_i}\frac{p_{il}}{\bar{c}_i+p_{il}f_i-f_i\bar{g}_i}=\bar{g}_i -\frac1N\tr \P_i\left(f_i\P_i + [\bar{c}_i-f_i\bar{g}_i]\I_{n_i} \right)^{-1} \asto 0,
	    \end{equation*}
which induces the $2K$-equation system
\begin{align*}
  f_i-\frac1N\tr \R_i\left(\sum_{j=1}^K\bar{g}_j\R_j -z\I_N\right)^{-1}&\asto 0 \\
  \bar{g}_i - \frac1N\tr \P_i\left(\bar{g}_i\P_i + [\bar{c}_i-f_i\bar{g}_i]\I_{n_i} \right)^{-1} &\asto 0.
\end{align*}
\item These relations are sufficient to infer the deterministic equivalent, but will be made more attractive for further considerations by introducing $\F=\sum_{i=1}^K\bar{f}_i\R_i$, and proving that
\begin{align*}
  f_i-\frac1N\tr \R_i\left(\sum_{j=1}^K\bar{f}_j\R_j -z\I_N\right)^{-1}&\asto 0 \\
  \bar{f}_i - \frac1N\tr \P_i\left(\bar{f}_i\P_i + [\bar{c}_i-f_i\bar{f}_i]\I_{n_i} \right)^{-1} &= 0,
\end{align*}
where, for $z<0$, $\bar{f}_i$ lies in $[0,c_i\bar{c}_i/f_i)$ and is now uniquely determined by $f_i$. In order to establish this convergence, it is necessary to define  an analytic extension of $\bar{f}_i$ in a neighborhood of $\RR_-$. The function $f_i$ can be immediately extended to $\CC\setminus\RR_+$ where it verifies the properties of a Stieltjes transform of a  finite measure supported by $\RR_+$.
\end{itemize}

	This is the very technical part of the proof. We then prove in a second step the existence and uniqueness of a solution to the fixed-point equation
\begin{align*}
  e_i-\frac1N\tr \R_i\left(\sum_{j=1}^K\bar{e}_j\R_j -z\I_N\right)^{-1}&= 0 \\
  \bar{e}_i - \frac1N\tr \P_i\left(\bar{e}_i\P_i + [\bar{c}_i-e_i\bar{e}_i]\I_{n_i} \right)^{-1} &= 0,
\end{align*}
for all finite $N$, $z<0$ and for $\bar{e}_i\in[0,c_i\bar{c}_i/e_i)$. This unfolds from a property of so-called {\it standard functions}. We will show precisely that the vector application $\h=(h_1,\ldots,h_K)$ defined for $z<0$ by
\begin{equation*}
h_i:(x_1,\ldots,x_K)\mapsto \frac1N\tr \R_i\left(\sum_{j=1}^K\bar{x}_j\R_j -z\I_N\right)^{-1}
\end{equation*}
with $\bar{x}_i$ the unique solution to
\begin{equation*}
  \bar{x}_i = \frac1N\tr \P_i\left(\bar{x}_i\P_i + [\bar{c}_i-x_i\bar{x}_i]\I_{n_i} \right)^{-1}
\end{equation*}
lying in $[0,c_i\bar{c}_i/x_i)$, is a standard function. It will unfold, from \cite[Theorem 2]{YAT95}, that the fixed-point equation in $(e_1,\ldots,e_K)$ has a unique solution with positive entries and that this solution can be determined as the limiting iteration of a classical fixed point algorithm. We will further establish that the $e_k(z)$ are Stieltjes transforms of finite measures supported by $\RR_+$ which satisfy the fundamental equations for $z\in D$.

The last step proves that the unique solution $(e_1,\ldots,e_N)$ is such that
\begin{equation*}
e_i - f_i \asto 0,
\end{equation*}
which is solved by standard arguments.  This will entail immediately by classical complex analysis arguments that $m_N(z)-\bar{m}_N(z)\asto 0$ for all $z\in\CC\setminus\RR_+$, form which the almost sure convergence $F_N-\bar{F}_N\Rightarrow 0$ unfolds.

\subsection{Complete proof}
\label{sec:completeproof}
We remind that, as $N$ grows, the ratios $c_i=\frac{n_i}{N_i}$ for $i=\{1,\dots,K\}$ satisfy 
\begin{equation*}
\lim\sup_N c_i < 1.
\end{equation*}
We also assume for the time being that for all $i$, $\Vert\R_i\Vert$ is uniformly bounded.
The case where $\Vert\R_i\Vert$ is uniformly bounded only in the almost sure sense will be treated subsequently.

\bigskip
{\bf Step 1: Convergence}

In this section, we take $z<0$, until further notice. Let us first introduce the following parameters. We will denote $P=\max_i\{\lim\sup\Vert \P_i\Vert\}$, $R=\max_i\{\lim\sup\Vert \R_i\Vert\}$, $c_+=\max_i\{\lim\sup c_i\}$, $\bar{c}_-=\min_i\{\lim\inf \bar{c}_i\}$ and $\bar{c}_+=\max_i\{\lim\sup \bar{c}_i\}$. %We start our derivation by the definition of some random quantities, namely the random variables $\delta_i$, $f_i$ and $\beta_i$, introduced below.

Let $\A\in\CC^{N\times N}$ be a Hermitian nonnegative definite matrix, satisfying $\|\Am\|\le A<\infty$. Recall the definition $\R_i=\H_i\H_i^\herm$. Taking $\G=\sum_{j=1}^K\bar{g}_j \R_j$, with $\bar{g}_1,\ldots,\bar{g}_K$ scalars left undefined for the moment, we have
\begin{align}
    &\frac1N\tr\A(\B_N-z\I_N)^{-1} - \frac1N\tr\A(\G-z\I_N)^{-1} \nonumber \\
    &\overset{(a)}{=} \frac1N \tr \left[ \A(\B_N-z\I_N)^{-1}\sum_{i=1}^K \H_i \left(-\W_i\P_i\W_i^\herm + \bar{g}_i\I_{N_i}\right)\H_i^\herm (\G-z\I_N)^{-1} \right] \nonumber \\
    &\overset{(b)}{=} \sum_{i=1}^K \bar{g}_i \frac1N \tr \A(\B_N-z\I_N)^{-1} \R_i(\G-z\I_N)^{-1} - \frac1N\sum_{i=1}^K\sum_{l=1}^{n_i}p_{il}\w_{il}^\herm\H_i^\herm(\G-z\I_N)^{-1}\A(\B_N-z\I_N)^{-1}\H_i\w_{il} \nonumber \\
    \label{eq:eqb}    &\overset{(c)}{=} \sum_{i=1}^K \bar{g}_i \frac1N \tr \A(\B_N-z\I_N)^{-1} \R_i(\G-z\I_N)^{-1} - \frac1N\sum_{i=1}^K\sum_{l=1}^{n_i} \frac{p_{il}\w_{il}^\herm\H_i^\herm(\G-z\I_N)^{-1}\A(\B_{(i,l)}-z\I_N)^{-1}\H_i\w_{il}}{1+p_{il}\w_{il}^\herm\H_i^\herm(\B_{(i,l)}-z\I_N)^{-1}\H_i\w_{il}}
\end{align}
with $\w_{il}\in\CC^{N_i}$ the $l${th} column of $\W_i$, $p_{i1},\ldots,p_{in_i}$ the eigenvalues of $\P_i$ and $\B_{(i,l)}=\B_N-p_{il}\H_i\w_{il}\w_{il}^\herm\H_i^\herm$. The equality $(a)$ follows from Lemma~\ref{le:res_id}, $(b)$ follows from the decomposition $\W_i\P_i\W_i^\herm=\sum_{l=1}^{n_i}p_{il}\w_{il}\w_{il}^\herm$, while the equality $(c)$ follows from Lemma~\ref{le:sil_matrix_inversion}.

The idea now is to infer the values of the $\bar{g}_i$ such that the differences in \eqref{eq:eqb} go to zero almost surely as $N$ grows large. We will therefore proceed by studying the quantities $\w_{il}^\herm\H_i^\herm(\B_{(i,l)}-z\I_N)^{-1}\H_i\w_{il}$ and $\w_{il}^\herm\H_i^\herm(\G-z\I_N)^{-1}\A(\B_{(i,l)}-z\I_N)^{-1}\H_i\w_{il}$ in the denominator and numerator of the second term in \eqref{eq:eqb}.
	
For every $i\in\{1,\ldots,K\}$, denote 
\begin{equation}
	\label{eq:def_delta}
\delta_i \triangleq \frac1{N_i-n_i}\tr \left(\I_{N_i} - \W_i\W_i^\herm\right)\H_i^\herm\left(\B_N-z\I_N\right)^{-1}\H_i\ .
\end{equation}

Introducing the additional term $(\G-z\I_N)^{-1}\A$ in the argument of the trace in $\delta_i$, we denote
\begin{equation*}
\beta_i \triangleq \frac1{N_i-n_i}\tr \left(\I_{N_i} - \W_i\W_i^\herm\right)\H_i\left(\G-z\I_N\right)^{-1}\A\left(\B_N-z\I_N\right)^{-1}\H_i\ .
\end{equation*}

Under these notations, according to Lemma \ref{le:trace_Haar}, the quantity $\w_{il}^\herm\H_i^\herm(\B_{(i,l)}-z\I_N)^{-1}\H_i\w_{il}$ is asymptotically close to $\delta_i$, and, if $\G$ is independent of $\w_{il}$, the quantity $\w_{il}^\herm\H_i^\herm(\G-z\I_N)^{-1}\A(\B_{(i,l)}-z\I_N)^{-1}\H_i\w_{il}$ is asymptotically close to $\beta_i$.

We also define 
	\begin{equation}
		\label{eq:fi_def}
		f_i \triangleq \frac1N\tr \R_i\left(\B_N-z\I_N\right)^{-1} 
	\end{equation}
for $z\in\CC\setminus\RR_+$. Note that $f_i(z)\ge0$ for $z<0$.
	Remark first, from standard matrix inequalities and the fact that $\w^\herm \A\w\leq \Vert \A\Vert$ for any Hermitian matrix $\A$ and any unitary vector $\w$, that we have the following bounds on $\delta_i$, $\beta_i$ and $f_i$,
	\begin{align*}
	\quad \delta_i &\leq \frac{R}{|z|} ,\quad \beta_i \leq \frac{R A}{|z|^2},\quad f_i \leq \frac{R}{|z|}\ .
	\end{align*}

	From Lemma \ref{le:sil_matrix_inversion}, we have that
	\begin{align}
	  (1-c_i)\bar{c}_i \delta_i &= f_i - \frac1N \sum_{l=1}^{n_i} \w_{il}^\herm \H_i^\herm\left(\B_N-z\I_N\right)^{-1}\H_i \w_{il}\nonumber \\
		\label{eq:deltai_fi}		&= f_i - \frac1N \sum_{l=1}^{n_i} \frac{\w_{il}^\herm \H_i^\herm\left(\B_{(i,l)}-z\I_N\right)^{-1}\H_i \w_{il}}{1+p_{il}\w_{il}^\herm \H_i^\herm\left(\B_{(i,l)}-z\I_N\right)^{-1}\H_i \w_{il}}\ .
	\end{align}

	Since $z<0$, $\delta_i\geq 0$, and $\frac1{1+p_{il}\delta_i}$ is well defined. %We recognize from the trace lemma, Lemma \ref{le:trace_Haar}, that each quadratic term $\w_{il}^\herm \H_i^\herm\left(\B_{(i,l)}-z\I_N\right)^{-1}\H_i \w_{il}$ is asymptotically close to $\frac{\delta_i}{1+p_{il}\delta_i}$. 
	By adding the term $\frac1N\sum_{l=1}^{n_i} \frac{\delta_i}{1+p_{il}\delta_i}$ on both sides, \eqref{eq:deltai_fi} can be re-written as
	\begin{align*}
	  &(1-c_i)\bar{c}_i \delta_i - f_i + \frac1N\sum_{l=1}^{n_i} \frac{\delta_i}{1+p_{il}\delta_i} \nonumber \\ &= \frac1N \sum_{l=1}^{n_i} \left[ \frac{\delta_i}{1+p_{il}\delta_i} - \frac{\w_{il}^\herm \H_i^\herm\left(\B_{(i,l)}-z\I_N\right)^{-1}\H_i \w_{il}}{1+p_{il}\w_{il}^\herm \H_i^\herm\left(\B_{(i,l)}-z\I_N\right)^{-1}\H_i \w_{il}} \right] \\ 
		&= \frac1N\sum_{l=1}^{n_i} \left[ \frac{\delta_i-\w_{il}^\herm \H_i^\herm\left(\B_{(i,l)}-z\I_N\right)^{-1}\H_i \w_{il} }{\left(1+p_{il}\delta_i\right)\left(1+p_{il}\w_{il}^\herm \H_i^\herm\left(\B_{(i,l)}-z\I_N\right)^{-1}\H_i \w_{il}\right)}\right]\ . 
	\end{align*}

	We now apply Lemma \ref{le:trace_Haar} and Lemma \ref{le:rank1perturbation}, which together with $\delta_i\leq R |z|^{-1}$ ensures that
\begin{equation}
	\label{eq:deltai_moment}
	\Exp \left[ \left|(1-c_i)\bar{c}_i \delta_i - f_i + \frac1N\sum_{l=1}^{n_i} \frac{\delta_i}{1+p_{il}\delta_i}\right|^4\right] \leq 8 \frac{C}{N^2}
\end{equation}
for some constant $C>0$. This determines the asymptotic behavior of $\delta_i$ and, thus, the asymptotic behavior of the quantity $\w_{il}^\herm\H_i^\herm(\B_{(i,l)}-z\I_N)^{-1}\H_i\w_{il}$ in the denominator of \eqref{eq:eqb}. 

We now proceed similarly with $\beta_i$ as with $\delta_i$. Assuming first that $\G$ is independent of $\w_{il}$, we obtain
\begin{align*}
	\beta_i &= \frac1{N_i-n_i}\tr \H_i^\herm\left(\G-z\I_N\right)^{-1}\A\left(\B_N-z\I_N\right)^{-1}\H_i \\ 
	&- \frac1{N_i-n_i}\sum_{l=1}^{n_i}\frac{\w_{il}^\herm \H_i^\herm\left(\G-z\I_N\right)^{-1}\A\left(\B_{(i,l)}-z\I_N\right)^{-1}\H_i \w_{il}}{1+p_{il}\w_{il}^\herm \H_i^\herm\left(\B_{(i,l)}-z\I_N\right)^{-1}\H_i \w_{il}}
\end{align*}
from which we have
\begin{align}
	&\frac1{N_i-n_i}\tr \H_i^\herm\left(\G-z\I_N\right)^{-1}\A\left(\B_N-z\I_N\right)^{-1}\H_i - \frac1{N_i-n_i}\sum_{l=1}^{n_i} \frac{\beta_i}{1+p_{il}\delta_i} - \beta_i  \nonumber \\ \label{eq:beta_intermediate}
	&= \frac1{N_i-n_i}\sum_{l=1}^{n_i}\left[\frac{\w_{il}^\herm \H_i^\herm\left(\G-z\I_N\right)^{-1}\A\left(\B_{(i,l)}-z\I_N\right)^{-1}\H_i \w_{il}}{1+p_{il}\w_{il}^\herm \H_i^\herm\left(\B_{(i,l)}-z\I_N\right)^{-1}\H_i \w_{il}} - \frac{\beta_i}{1+p_{il}\delta_i} \right]\ .
\end{align}

With the same inequalities as above, and with 
\begin{align*}
  \w_{il}^\herm \H_i^\herm\left(\G-z\I_N\right)^{-1}\A\left(\B_{(i,l)}-z\I_N\right)^{-1}\H_i \w_{il}&\leq \frac{R A}{|z|^2}
\end{align*}
we have that
\begin{align}
	\label{eq:beta_i_over1}
	\ &\Exp \left[\left| \frac{\w_{il}^\herm \H_i^\herm\left(\G-z\I_N\right)^{-1}\A\left(\B_{(i,l)}-z\I_N\right)^{-1}\H_i \w_{il}}{1+p_{il}\w_{il}^\herm \H_i^\herm\left(\B_{(i,l)}-z\I_N\right)^{-1}\H_i \w_{il}} - \frac{\beta_i}{1+p_{il}\delta_i}\right|^4 \right] \nonumber \\
	=\ &\Exp \left[\left| \frac{\w_{il}^\herm \H_i^\herm\left(\G-z\I_N\right)^{-1}\A\left(\B_{(i,l)}-z\I_N\right)^{-1}\H_i \w_{il} -\beta_i}{(1+p_{il}\delta_i)(1+p_{il}\w_{il}^\herm \H_i^\herm\left(\B_{(i,l)}-z\I_N\right)^{-1}\H_i \w_{il})}\right.\right. \nonumber \\
	&\qquad+\frac{p_{il}\delta_i\left[\w_{il}^\herm \H_i^\herm\left(\G-z\I_N\right)^{-1}\A\left(\B_{(i,l)}-z\I_N\right)^{-1}\H_i \w_{il}-\beta_i\right]}{(1+p_{il}\delta_i)(1+p_{il}\w_{il}^\herm \H_i^\herm\left(\B_{(i,l)}-z\I_N\right)^{-1}\H_i \w_{il})} \nonumber \\
	&\qquad\left.\left. +\frac{p_{il}\beta_i\left[\delta_i-\w_{il}^\herm \H_i^\herm\left(\B_{(i,l)}-z\I_N\right)^{-1}\H_i \w_{il}\right] }{(1+p_{il}\delta_i)(1+p_{il}\w_{il}^\herm \H_i^\herm\left(\B_{(i,l)}-z\I_N\right)^{-1}\H_i \w_{il})} \right|^4 \right] \nonumber \\
	\leq\ &8 \frac{C'}{N^2} \left(1+ \frac{P^4R^4}{|z|^4}\left(1+\frac{A^4}{|z|^4} \right)\right)
\end{align}
for some $C'>C$. Multiplying \eqref{eq:beta_intermediate} by $\frac{N_i-n_i}{N}$, we obtain 
\begin{align}
  \label{eq:fundbetai}
  &\Exp\left[\left| \frac1N\tr \H_i^\herm\left(\G-z\I_N\right)^{-1}\A\left(\B_N-z\I_N\right)^{-1}\H_i - \beta_i \left( (1-c_i)\bar{c}_i+\frac1N\sum_{l=1}^{n_i}\frac1{1+p_{il}\delta_i} \right) \right|^4\right] \nonumber \\ &\leq 8 \frac{C'}{N^2} \left(1+ \frac{P^4R^4}{|z|^4}\left(1+\frac{A^4}{|z|^4} \right)\right).
\end{align}
This provides the asymptotic behavior of $\beta_i$ or equivalently of the quantity $\w_{il}^\herm\H_i^\herm(\G-z\I_N)^{-1}\A(\B_{(i,l)}-z\I_N)^{-1}\H_i\w_{il}$ in the numerator of \eqref{eq:eqb}.

We are now in position to infer the $\bar{g}_i$ such that $\frac1N\tr\A(\B_N-z\I_N)^{-1} - \frac1N\tr\A(\G-z\I_N)^{-1}$ is asymptotically small. For the previous derivations to hold, the scalars $\bar{g}_k$, $k\in\{1,\ldots,K\}$, were assumed independent of $\w_{il}$. It is however easy to see that these derivations still hold true (up to the choice of larger constants $C$, $C'$) if $\bar{g}_k=\bar{g}_k^{(il)}+\varepsilon_{k,N}^{(il)}$ with $\bar{g}_k^{(il)}$ independent of $\w_{il}$ and $|\varepsilon_{k,N}^{(il)}|\leq C''/N$, for $C''$ constant independent of $k,i,j$. This follows from the fact that 
\begin{align*}
	\left\Vert \sum_{k=1}^K \bar{g}_k \R_k - \sum_{k=1}^K \bar{g}_k^{(il)} \R_k \right\Vert = \left\Vert \sum_{k=1}^K \varepsilon_{k,N}^{(il)} \R_k \right\Vert \leq \frac{KRC''}N.
\end{align*}

We choose
\begin{align}\label{eq:infer_gi}
	\bar{g}_k &= \frac1{(1-c_k)\bar{c}_k + \frac1{N}\sum_{m=1}^{n_k}\frac1{1+p_{km}\delta_k}}\frac1{N}\sum_{m=1}^{n_k}\frac{p_{km}}{1+p_{km}\delta_k}
\end{align}
and remark that $\bar{g}_k-\bar{g}_k^{(il)}=\Oc(1/N)$ with $\bar{g}_k^{(il)}$ defined similar to $\bar{g}_k$ \eqref{eq:infer_gi}, with column $\w_{il}$ removed from the expression of $\B_N$. Indeed, when $\w_{il}$ is removed, $p_{im}=0$ and $\delta_i=0$ are no longer defined, while the term $\delta_k^{(il)}$, $k\neq i$, defined equivalently as $\bar{g}_k^{(il)}$, satisfies $|\delta_k^{(il)}-\delta_k|\leq \frac1{N_k}\frac{1}{(1-c_k)|z|}$ from Lemma \ref{le:rank1perturbation}, from which the result unfolds.

%Summing the previous results over $i$ and $l$, we then have
Coming back to the original object of interest, we now have
\begin{align*}
    &\frac1N\tr\A(\B_N-z\I_N)^{-1} - \frac1N\tr\A(\G-z\I_N)^{-1} \\
    &= \sum_{i=1}^K\frac1{(1-c_i)\bar{c}_i + \frac1{N}\sum_{l=1}^{n_i}\frac1{1+p_{il}\delta_i}}\frac1{N}\sum_{l=1}^{n_i}\frac{p_{il}}{1+p_{il}\delta_i} \frac1N\tr \H_i^\herm\left(\G-z\I_N\right)^{-1}\A\left(\B_N-z\I_N\right)^{-1}\H_i \nonumber \\
    &\qquad- \frac1N \sum_{i=1}^K\sum_{l=1}^{n_i} \frac{p_{il}\w_{il}^\herm\H_i^\herm(\G-z\I_N)^{-1}\A(\B_{(i,l)}-z\I_N)^{-1}\H_i\w_{il}}{1+p_{il}\w_{il}^\herm\H_i^\herm(\B_{(i,l)}-z\I_N)^{-1}\H_i\w_{il}} \\
%    &= \sum_{i=1}^K \frac1N\sum_{l=1}^{n_i} p_{il} \left[ \frac{\frac1{N_i-n_i}\tr \H_i^\herm\left(\G-z\I_N\right)^{-1}\A\left(\B_N-z\I_N\right)^{-1}\H_i}{(1 + \frac1{N_i-n_i}\sum_{l'=1}^{n_i}\frac1{1+p_{i,l'}\delta_i})(1+p_{il}\delta_i)} - \frac{\w_{il}^\herm\H_i^\herm(\G-z\I_N)^{-1}\A(\B_{(i,l)}-z\I_N)^{-1}\H_i\w_{il}}{1+p_{il}\w_{il}^\herm\H_i^\herm(\B_{(i,l)}-z\I_N)^{-1}\H_i\w_{il}} \right] \\
    &= \sum_{i=1}^K \frac1N\sum_{l=1}^{n_i} p_{il} \left[ \frac{\frac1N\tr \H_i^\herm\left(\G-z\I_N\right)^{-1}\A\left(\B_N-z\I_N\right)^{-1}\H_i}{((1-c_i)\bar{c}_i+\frac1N\sum_{l'=1}^{n_i}\frac1{1+p_{il'}\delta_i})(1+p_{il}\delta_i)} - \frac{\w_{il}^\herm\H_i^\herm(\G-z\I_N)^{-1}\A(\B_{(i,l)}-z\I_N)^{-1}\H_i\w_{il}}{1+p_{il}\w_{il}^\herm\H_i^\herm(\B_{(i,l)}-z\I_N)^{-1}\H_i\w_{il}} \right].
\end{align*}

Notice now that $1+p_{il}\delta_i\geq 1$ and
\begin{equation*}
  (1-c_i)\bar{c}_i < (1-c_i)\bar{c}_i+\frac1N\sum_{l=1}^{n_i}\frac1{1+p_{il}\delta_i} \leq \bar{c}_i
\end{equation*}
which ensure that we can divide the term in the expectation of the left-hand side of \eqref{eq:fundbetai} by $1+p_{il}\delta_i$ and $(1-c_i)\bar{c}_i+\frac1N\sum_{l=1}^{n_i}\frac1{1+p_{il}\delta_i}$ without taking the risk of the denominator getting close to $0$. This leads to
\begin{align}
	\label{eq:beta_i_over2}
	\Exp\left[\left| \frac{\beta_i}{1+p_{il}\delta_i} - \frac{\frac1N\tr \H_i^\herm\left(\G-z\I_N\right)^{-1}\A\left(\B_N-z\I_N\right)^{-1}\H_i}{\left((1-c_i)\bar{c}_i+\frac1N\sum_{l=1}^{n_i}\frac1{1+p_{il}\delta_i}\right)\left(1+p_{il}\delta_i\right)} \right|^4\right] \leq 8 \frac{C'}{N^2(1-c_i)^4\bar{c}_i^4} \left(1+ \frac{P^4R^4}{|z|^4}\left(1+\frac{A^4}{|z|^4} \right)\right).
\end{align}

From \eqref{eq:beta_i_over1} and \eqref{eq:beta_i_over2}, we therefore have 
\begin{align*}
	%\label{eq:beta_i_over3}
	&\Exp\left[\left| \frac{\frac1N\tr \H_i^\herm\left(\G-z\I_N\right)^{-1}\A\left(\B_N-z\I_N\right)^{-1}\H_i}{\left((1-c_i)\bar{c}_i+\frac1N\sum_{l=1}^{n_i}\frac1{1+p_{il}\delta_i}\right)\left(1+p_{il}\delta_i\right)} - \frac{\w_{il}^\herm \H_i^\herm\left(\G-z\I_N\right)^{-1}\A\left(\B_{(i,l)}-z\I_N\right)^{-1}\H_i \w_{il}}{1+p_{il}\w_{il}^\herm \H_i^\herm\left(\B_{(i,l)}-z\I_N\right)^{-1}\H_i \w_{il}} \right|^4\right] \nonumber \\ 
	&\leq 128 \frac{C'}{N^2(1-c_i)^4\bar{c}_i^4} \left(1+ \frac{P^4R^4}{|z|^4}\left(1+\frac{A^4}{|z|^4} \right)\right).
\end{align*}

We finally obtain
\begin{align}
	\label{eq:convBNG}
	\Exp\left[\left|\frac1N\tr\A(\B_N-z\I_N)^{-1} - \frac1N\tr\A(\G-z\I_N)^{-1}\right|^4\right] \leq   128K^4 \frac{C'}{N^2(1-c_+)^4\bar{c}_-^4} \left(1+ \frac{P^4R^4}{|z|^4}\left(1+\frac{A^4}{|z|^4} \right)\right).
\end{align}

This provides a first convergence result as a function of the parameters $\delta_i$, from which a deterministic equivalent can be inferred. Nonetheless, the expression of $\bar{g}_i$ is rather impractical as it stands and we need to go further.

Observe in particular that $\bar{g}_i$ can be written under the form
\begin{equation*}
	\bar{g}_i = \frac1N\sum_{l=1}^{n_i}\frac{p_{il}}{((1-c_i)\bar{c}_i+\frac1N\sum_{l'=1}^{n_i}\frac1{1+p_{il'}\delta_i})+p_{il}\delta_i((1-c_i)\bar{c}_i+\frac1N\sum_{l'=1}^{n_i}\frac1{1+p_{il'}\delta_i})}\ .
\end{equation*}
We will study the denominator of the above expression and show that it can be simplified to a much more attractive form. 

From \eqref{eq:deltai_moment}, we first have
\begin{equation}
  \label{eq:fideltai}  \Exp \left[ \left| f_i - \delta_i\left( (1-c_i)\bar{c}_i+\frac1N\sum_{l=1}^{n_i} \frac1{1+p_{il}\delta_i}\right)\right|^4\right] \leq \frac{8C}{N^2}\ .
\end{equation}

Multiplying \eqref{eq:infer_gi} by $-\delta_i\left((1-c_i)\bar{c}_i+\frac1N\sum_{l=1}^{n_i} \frac1{1+p_{il}\delta_i} \right)$  and adding $\bar{c}_i$ to both sides yields
\begin{equation*}
  \bar{c}_i-\bar{g}_i\delta_i\left((1-c_i)\bar{c}_i+\frac1N\sum_{l=1}^{n_i} \frac1{1+p_{il}\delta_i} \right) = (1-c_i)\bar{c}_i+\frac1N\sum_{l=1}^{n_i} \frac1{1+p_{il}\delta_i}\ .
\end{equation*}
By definition, $\bar{g}_i\leq \frac{P}{(1-c_i)\bar{c}_i}$, and we therefore also have
\begin{align}
  \label{eq:1mgifi}
  \Exp\left[\left|\left(\bar{c}_i-f_i\bar{g}_i\right) - \left((1-c_i)\bar{c}_i+\frac1N\sum_{l=1}^{n_i}\frac1{1+p_{il}\delta_i} \right)\right|^4\right]\leq 8 \frac{C}{N^2} \frac{P^4}{(1-c_+)^4\bar{c}_-^4} \ .
\end{align}

The equations \eqref{eq:fideltai} and \eqref{eq:1mgifi} can now be used to approximate the denominator of $\bar{g}_i$ as follows
\begin{align}
	\label{eq:bargi_fp}
	&\Exp\left[\left|\bar{g}_i - \frac1N\sum_{l=1}^{n_i} \frac{p_{il}}{\bar{c}_i - f_i\bar{g}_i + p_{il} f_i}  \right|^4\right] \nonumber \\
	&=\Exp\left[\left|\frac1N\sum_{l=1}^{n_i}p_{il} \frac{p_{il}\left[f_i - \delta_i((1-c_i)\bar{c}_i+\frac1N\sum_{l'=1}^{n_i}\frac1{1+p_{il'}\delta_i})\right] + \left[\bar{c}_i-f_i\bar{g}_i - ((1-c_i)\bar{c}_i+\frac1N\sum_{l'=1}^{n_i}\frac1{1+p_{il'}\delta_i}) \right]}{\left[(1+p_{il}\delta_i)((1-c_i)\bar{c}_i+\frac1N\sum_{l'=1}^{n_i}\frac1{1+p_{il'}\delta_i}) \right]\left[\bar{c}_i-f_i\bar{g}_i+ p_{il}f_i \right]}\right|^4\right].
\end{align}

Before providing a useful bound, we need to ensure here that the term $\bar{c}_i-f_i\bar{g}_i+p_{il}f_i$ is uniformly away from zero, for all random $f_i$ and for all $N$. For this, we recall the bounds $0\leq f_i \leq \frac{R}{|z|}$ and $0\leq \bar{g}_i \leq \frac{P}{(1-c_i)\bar{c}_i}$.

Let us consider $0<\varepsilon<1$ and take from  now on $z<-\frac{RP}{(1-c_+)\bar{c}_-(\bar{c}_--\varepsilon)}$, so that $\bar{c}_i-f_i\bar{g}_i>\varepsilon$ for all $i$. From \eqref{eq:fideltai}, \eqref{eq:1mgifi} and \eqref{eq:bargi_fp}, we have 
\begin{align*}
  \Exp\left[\left|\bar{g}_i - \frac1N\sum_{l=1}^{n_i} \frac{p_{il}}{\bar{c}_i - f_i\bar{g}_i+p_{il} f_i} \right|^4\right] \leq 64 \frac{C}{N^2} \frac{P^8}{(1-c_i)^4\bar{c}_i^4\varepsilon^4} \left(1+\frac{1}{(1-c_i)^4\bar{c}_i^4}\right)
\end{align*}
which is of order $\Oc(1/N^2)$.

We are now ready to introduce the matrix $\F$. Consider 
\begin{equation*}
	\F = \sum_{i=1}^K \bar{f}_i \R_i, 
\end{equation*}
with $\bar{f}_i$ defined as the unique solution to the equation in $x$
\begin{equation*}
	%\label{eq:haar_fixedpoint_x}
  x = \frac1N \sum_{l=1}^{n_i}\frac{p_{il}}{\bar{c}_i - f_i x+ f_i p_{il} }
\end{equation*}
within the interval $0\leq x < c_i\bar{c}_i/f_i$. To prove the uniqueness of the solution within this interval, note simply that
\begin{align*}
  \frac{c_i\bar{c}_i}{f_i} &> \frac1N \sum_{l=1}^{n_i}\frac{p_{il}}{\bar{c}_i - f_i (c_i\bar{c}_i/f_i)+ f_i p_{il} } \\
  0   	&\leq \frac1N \sum_{l=1}^{n_i}\frac{p_{il}}{\bar{c}_i -f_i\cdot 0 + f_i p_{il} }
\end{align*}
and that the function $x\mapsto \frac1N \sum_{l=1}^{n_i}\frac{p_{il}}{\bar{c}_i - f_i x+ f_i p_{il} }$ is continuously increasing on $x\in [0,c_i\bar{c}_i/f_i)$. Hence the uniqueness of the solution in $[0,c_i\bar{c}_i/f_i)$. We also show that this solution is an attractor of the fixed-point algorithm, when correctly initialized. Indeed, let $x_0,x_1,\ldots$ be defined by 
\begin{equation*}
  x_{n+1}=\frac1N \sum_{l=1}^{n_i}\frac{p_{il}}{\bar{c}_i - f_i x_n+ f_i p_{il} },
\end{equation*}
with $x_0\in[0,c_i\bar{c}_i/f_i)$. Then, $x_n\in [0,c_i\bar{c}_i/f_i)$ implies $\bar{c}_i - f_i x_n+ f_i p_{il}> (1-c_i)\bar{c}_i+f_ip_{il}\geq f_i p_{il}$ and therefore $f_ix_{n+1}< c_i\bar{c}_i$, so $x_0,x_1,\ldots$ is contained in $[0,c_i\bar{c}_i/f_i)$. Now observe that
\begin{equation*}
  x_{n+1}-x_n =\frac1N\sum_{l=1}^{n_i}\frac{p_{il}f_i(x_n-x_{n-1})}{(\bar{c}_i+p_{il}f_i-f_ix_n)(\bar{c}_i+p_{il}f_i-f_ix_{n-1})} %\leq c_i (x_n-x_{n-1})
\end{equation*}
with all terms being nonnegative in the sum, so that the differences $x_{n+1}-x_n$ and $x_n-x_{n-1}$ have the same sign.
The sequence $x_0,x_1,\ldots$ is therefore monotonic and bounded: it converges. Calling $x_\infty$ this limit, we have 
\begin{equation*}
  x_\infty = \frac1N\sum_{l=1}^{n_i}\frac{p_{il}}{\bar{c}_i+p_{il}f_i-f_ix_\infty}
\end{equation*}
as required.

To be able to finally prove that $\frac1N\tr\A(\B_N-z\I_N)^{-1}-\frac1N\tr\A(\F-z\I_N)^{-1}\asto 0$, we want now to show that $\bar{g}_i-\bar{f}_i$ tends to zero at a sufficiently fast rate. For this, we write
\begin{align}
	&\Exp\left[\left|\bar{g}_i-\bar{f}_i\right|^4\right] \nonumber \\
	&\leq 8 \left( \Exp\left[\left| \bar{g}_i - \frac1N\sum_{l=1}^{n_i} \frac{p_{il}}{\bar{c}_i - f_i\bar{g}_i+p_{il} f_i} \right|^4\right] + \Exp\left[\left|\frac1N\sum_{l=1}^{n_i} \frac{p_{il}}{\bar{c}_i - f_i\bar{g}_i+p_{il} f_i} - \frac1N\sum_{l=1}^{n_i} \frac{p_{il}}{\bar{c}_i - f_i\bar{f}_i+p_{il} f_i} \right|^4\right] \right) \nonumber \\
	\label{eq:bargibarfi}	&= 8 \left( \Exp\left[\left| \bar{g}_i - \frac1N\sum_{l=1}^{n_i} \frac{p_{il}}{\bar{c}_i - f_i\bar{g}_i+p_{il} f_i} \right|^4\right] + \Exp\left[\left|\bar{g}_i-\bar{f}_i\right|^4\left|\frac1N\sum_{l=1}^{n_i} \frac{p_{il}f_i }{(\bar{c}_i - f_i\bar{f}_i+p_{il} f_i)(\bar{c}_i - f_i\bar{g}_i+p_{il} f_i)} \right|^4\right] \right)
\end{align}
where we have simply written $\bar{g}_i-\bar{f}_i=(\bar{g}_i - \frac1N\sum_{l=1}^{n_i} \frac{p_{il}}{\bar{c}_i - f_i\bar{g}_i+p_{il} f_i})+(\frac1N\sum_{l=1}^{n_i} \frac{p_{il}}{\bar{c}_i - f_i\bar{g}_i+p_{il} f_i} -\bar{f}_i)$ and used the triangular inequality on the fourth power of each term.

We only need to ensure now that the coefficient multiplying $\left|\bar{g}_i-\bar{f}_i\right|$ in the right-hand side term is uniformly smaller than $1$. For this, observe that, as $z\to -\infty$, $|p_{il}f_i|\leq \frac{P R}{|z|}\to 0$ in the numerator. In the denominator, we already know that $\bar{c}_i-f_i\bar{f}_i+p_{il}f_i\geq (1-c_i)\bar{c}_i$ and we also have that $\bar{c}_i-f_i\bar{g}_i+p_{il}f_i\geq \bar{c}_i-\frac{RP}{(1-c_i)|z|}$, which is greater than some $\eta>0$ for $|z|$ taken large. 

Take $0<\eta<1$ and choose $z$ to be such that, for all $i$,
\begin{align*}
  \left|\frac1N\sum_{l=1}^{n_i} \frac{p_{il}f_i }{(\bar{c}_i - f_i\bar{f}_i+p_{il} f_i)(\bar{c}_i - f_i\bar{g}_i+p_{il} f_i)} \right| &\leq \frac{P R}{|z|(1-c_i)\bar{c}_i\eta}< \frac{1-\eta}8
\end{align*}
That is, from now on, we take $z<\min\left(-\frac{8PR}{\eta(1-\eta)(1-c_+)\bar{c}_-},-\frac{RP}{(1-c_+)\bar{c}_-(1-\varepsilon)}\right)$.

From the inequality \eqref{eq:bargibarfi}, gathering the terms in $\Exp\left[\left|\bar{g}_i-\bar{f}_i\right|^4\right]$ on the left side, we finally have
\begin{align}
  \label{eq:gibfib}
  \Exp\left[\left|\bar{g}_i - \bar{f}_i\right|^4\right] \leq \frac{512}{\eta^4} \frac{C}{N^2} \frac{P^8}{(1-c_i)^4\bar{c}_i^4\varepsilon^4} \left(1+\frac{1}{(1-c_i)^4\bar{c}_i^4}\right)\ .
\end{align}

We can now proceed to prove the deterministic equivalent relations:
\begin{align*}
	&\frac1N\tr \A\left(\G-z\I_N\right)^{-1} - \frac1N\tr\A\left(\F-z\I_N\right)^{-1} \nonumber \\
	&= \sum_{i=1}^K \frac1N\sum_{l=1}^{n_i} p_{il} \left[ \frac{\frac1N\tr \H_i^\herm\A\left(\G-z\I_N\right)^{-1}\left(\F-z\I_N\right)^{-1}\H_i}{((1-c_i)\bar{c}_i+\frac1N\sum_{l'=1}^{n_i}\frac1{1+p_{i,l'}\delta_i})(1+p_{il}\delta_i)} - \frac{\frac1N\tr \H_i^\herm\A\left(\G-z\I_N\right)^{-1}\left(\F-z\I_N\right)^{-1}\H_i}{\bar{c}_i-f_i\bar{f}_i + p_{il}f_i} \right] \\
	&= \sum_{i=1}^K \frac1N\sum_{l=1}^{n_i} p_{il} \left[\left( \frac{\frac1N\tr \H_i^\herm\A\left(\G-z\I_N\right)^{-1}\left(\F-z\I_N\right)^{-1}\H_i}{((1-c_i)\bar{c}_i+\frac1N\sum_{l'=1}^{n_i}\frac1{1+p_{i,l'}\delta_i})(1+p_{il}\delta_i)} - \frac{\frac1N\tr \H_i^\herm\A\left(\G-z\I_N\right)^{-1}\left(\F-z\I_N\right)^{-1}\H_i}{\bar{c}_i-f_i\bar{g}_i + p_{il}f_i}\right) \nonumber \right. \\
	&\qquad+ \left. \left(\frac{\frac1N\tr \H_i^\herm\A\left(\G-z\I_N\right)^{-1}\left(\F-z\I_N\right)^{-1}\H_i}{\bar{c}_i-f_i\bar{g}_i + p_{il}f_i} - \frac{\frac1N\tr \H_i^\herm\left(\G-z\I_N\right)^{-1}\left(\F-z\I_N\right)^{-1}\H_i}{\bar{c}_i-f_i\bar{f}_i + p_{il}f_i}\right) \right] \\
	&=\sum_{i=1}^K \frac1N\tr \H_i^\herm\A\left(\G-z\I_N\right)^{-1}\left(\F-z\I_N\right)^{-1}\H_i \frac1N\sum_{l=1}^{n_i} p_{il} \left[  \frac{f_i(\bar{g}_i-\bar{f}_i)}{(\bar{c}_i-f_i\bar{f}_i + p_{il}f_i)(\bar{c}_i-f_i\bar{g}_i + p_{il}f_i)} \right. \\
	&\qquad \left. + \frac{\left((\bar{c}_i-f_i\bar{g}_i)-((1-c_i)\bar{c}_i+\frac1N\sum_{l'=1}^{n_i}\frac1{1+p_{i,l'}\delta_i}) \right) + p_{il}\left(f_i - \delta_i((1-c_i)\bar{c}_i+\frac1N\sum_{l'=1}^{n_i}\frac1{1+p_{i,l'}\delta_i}) \right)}{((1-c_i)\bar{c}_i+\frac1N\sum_{l'=1}^{n_i}\frac1{1+p_{i,l'}\delta_i})(1+p_{il}\delta_i)(\bar{c}_i-f_i\bar{f}_i + p_{il}f_i)}\right]
\end{align*}
%{\bf Please check bounds: In particular I have a factor $K$ from the sum over $i$}!\\
Therefore, from \eqref{eq:fideltai}, \eqref{eq:1mgifi} and \eqref{eq:gibfib}, 
\begin{align*}
  \Exp\left[\left|\frac1N\tr \A\left(\G-z\I_N\right)^{-1} - \frac1N\tr\A\left(\F-z\I_N\right)^{-1}\right|^4\right] \leq \frac{64R^4P^4A^4K}{|z|^8(1-c_+)^8\bar{c}_-^8}\frac{C}{N^2}\left(1+\frac{1}{(1-c_+)^4\bar{c}_-^4} \right)^4\left[1+\frac{64 R^4P^4}{|z|^4\eta^4\varepsilon^4} \right] 
\end{align*}
which is of order $\Oc(1/N^2)$.

Together with \eqref{eq:convBNG}, applying the Markov inequality \cite[(5.31)]{BIL08} and the Borel Cantelli lemma \cite[Theorem 4.3]{BIL08}, we finally have
\begin{equation}
  \label{eq:convBNF}
	\frac1N\tr\A\left(\B_N-z\I_N\right)^{-1} - \frac1N\tr\A\left(\F-z\I_N\right)^{-1} \asto 0,
\end{equation}
as $N$ grows large for realizations of $\{\W_1,\ldots,\W_K\}$ taken from a set $A_z\subset \Omega$ of probability one (we use $\Omega$ here to denote the sample space of the probability space generating the sequences of matrices $\{\W_1,\ldots,\W_K\}$ of growing sizes). This therefore holds true for countably many $z$ (smaller than the established bound) with a cluster point in $\RR_-$, on a set $A\subset \Omega$ of probability one. 

Before we can extend the convergence to the entire negative real axis, we need to define an analytic extension of $\bar{f}_i$ in a neighborhood of $\RR_-$. Take $D=\left\{z=x+{\bf{i}}y: x<0, |y|\le |x|\frac{1-c_i}{c_i}\right\}$. For $z\in D$, the following holds
\begin{align}
 \Re\{f_i\}\ge 0 \qquad \text{and}\qquad |\Im\{f_i\}|\le \Re\{f_i\}\frac{1-c_i}{c_i}.
\end{align}
To see this, consider $\Bm_N=\Um\Dm\Um\htp$ the eigenvalue decomposition of $\Bm_N$, where $\Um=[\uv_1\dots\uv_N]\in\CC^{N\times N}$ is unitary and $\Dm=\diag(d_1,\dots,d_N)$ contains the nonnegative eigenvalues of $\Bm_N$. Denoting $z=x+{\bf{i}}y$, we have
\begin{align}
 f_i &=\frac1N\trace\Rm_i\LB \Bm_N -z\Id_N\RB^{-1}\nonumber\\
&= \frac1N\trace\Rm_i\LB \Bm_N -z\Id_N\RB^{-1} \LB \Bm_N -z^*\Id_N\RB \LB \Bm_N -z^*\Id_N\RB^{-1} \nonumber\\
&= \frac1N\sum_{j=1}^N \frac{d_j-x}{|d_j-z|^2}\uv_j\htp\Rm_i\uv_j + {\bf i} y \frac1N\sum_{j=1}^N \frac{1}{|d_j-z|^2}\uv_j\htp\Rm_i\uv_j.
\end{align}
From the last equation, it follows that $x<0$ and $|y|\le |x|\frac{1-c_i}{c_i}$ imply $\Re\{f_i\}\ge 0$ and $|\Im\{f_i\}|\le\Re\{ f_i\}\frac{1-c_i}{c_i}$.

Consider now the sequence $\{q_{i,n}\}_{n\ge 0}$ of complex numbers, recursively defined as
\begin{align}\label{eq:analytic_ext}
 q_{i,{n}} = \frac{f_i}{(1-c_i)\bar{c}_i + \frac1N\sum_{l=1}^{n_i}\frac{1}{1+p_{il}q_{i,{n-1}}}},\quad n\ge 1
\end{align}
and $q_{i,0}=0$. 
We will now show that $|q_{i,n}|\le \frac{|f_i|}{(1-c_i)\bar{c}_i}$ for all $n$ and $z\in D$. First, notice that
\begin{align}\label{eq:extbound1}
 |q_{i,n}| \le \frac{|f_i|}{(1-c_i)\bar{c}_i}
\end{align}
whenever $\Re\{q_{i,{n-1}}\}\ge 0$. After some simple algebra, one arrives at
\begin{align}\label{eq:extcond1}
 \Re\{q_{i,{n}}\} & = \frac{\Re\{f_i\}\LSB(1-c_i)\bar{c}_i + \frac1N\sum_{l=1}^{n_i} \frac{1}{|1+p_{il}q_{i,{n-1}}|^2}\RSB + \frac1N\sum_{l=1}^{n_i}\frac{p_{il}\LB \Re\{f_i\}\Re\{q_{i,{n-1}}\} - \Im\{f_i\}\Im\{q_{i,{n-1}}\}\RB}{|1+p_{il}q_{i,{n-1}}|^2}}{\left|(1-c_i)\bar{c}_i + \frac1N\sum_{l=1}^{n_i}\frac{1}{1+p_{il}q_{i,{n-1}}} \right|^2}\\\label{eq:extcond2}
\Im\{q_{i,{n}}\} &= \frac{\Im\{f_i\}\LSB(1-c_i)\bar{c}_i + \frac1N\sum_{l=1}^{n_i} \frac{1}{|1+p_{il}q_{i,{n-1}}|^2}\RSB + \frac1N\sum_{l=1}^{n_i}\frac{p_{il}\LB \Im\{f_i\}\Re\{q_{i,{n-1}}\} + \Re\{f_i\}\Im\{q_{i,{n-1}}\}\RB}{|1+p_{il}q_{i,{n-1}}|^2}}{\left|(1-c_i)\bar{c}_i + \frac1N\sum_{l=1}^{n_i}\frac{1}{1+p_{il}q_{i,{n-1}}} \right|^2}.
\end{align}

Now, if we assume $\Re\{q_{i,n-1}\}\ge 0$, we have
\begin{align}\nonumber
 \Re\{q_{i,{n}}\} & \ge \frac{\Re\{f_i\}(1-c_i)\bar{c}_i -|\Im\{f_i\}| \frac1N\sum_{l=1}^{n_i}\frac{p_{il}|\Im\{q_{i,{n-1}}\}|}{|1+p_{il}q_{i,{n-1}}|^2}}{\left|(1-c_i)\bar{c}_i + \frac1N\sum_{l=1}^{n_i}\frac{1}{1+p_{il}q_{i,{n-1}}} \right|^2}\\
&\ge \frac{\Re\{f_i\}(1-c_i)\bar{c}_i -|\Im\{f_i\}|c_i\bar{c}_i }{\left|(1-c_i)\bar{c}_i + \frac1N\sum_{l=1}^{n_i}\frac{1}{1+p_{il}q_{i,{n-1}}} \right|^2}.
\end{align}
The right-hand side of the last equations is nonnegative whenever $$ |\Im\{f_i\}|\le \Re\{f_i\}\frac{1-c_i}{ci} .$$ As this condition is always satisfied for $z\in D$ and we have defined $q_{i,0}=0$, we can conclude that \eqref{eq:extbound1} and $\Re\{q_{i,n}\}\ge 0$ hold for all $n$.

Additionally, we have from \eqref{eq:extcond1} and \eqref{eq:extcond2} that
\begin{align}\label{eq:useful_bound}
 \Re\{f_i\}\Re\{q_{i,{n-1}}\} + \Im\{f_i\}\Im\{q_{i,{n-1}}\} = \frac{\LB \Re\{f_i\}^2 + \Im\{f_i\}^2\RB  \LSB(1-c_i)\bar{c}_i + \frac1N\sum_{l=1}^{n_i} \frac{1 +p_{il}\Re\{q_{n-2}\}}{|1+p_{il}q_{n-2}|^2}\RSB }{\left|(1-c_i)\bar{c}_i + \frac1N\sum_{l=1}^{n_i}\frac{1}{1+q_{il}x_{n-2}} \right|^2}\ge 0.
\end{align}

Until here, we have proved that $\{q_{i,n}\}$ is a sequence of bounded analytic functions on $z\in D$ (the analyticity follows from the fact that $f_i$ is analytic on $\CC\setminus\RR_+$ and $q_{i,n}$ is a rational function with no pole in $D$). Let us now focus on the negative real axis, i.e., $z<0$, which lies in the interior of $D$. Here, the following holds
\begin{align}
 q_{i,{n+1}}-q_{i,{n}} = \LB q_{i,n} - q_{i,{n-1}}\RB f_i \frac{\frac1N\sum_{l=1}^{n_i} \frac{1}{\LB 1+p_{il}q_{i,n}\RB\LB 1+p_{il}q_{i,{n-1}}\RB}}{\LSB(1-c_i)\bar{c}_i + \frac1N\sum_{l=1}^{n_i}\frac{1}{1+p_{il}q_{i,{n}}}\RSB\LSB(1-c_i)\bar{c}_i + \frac1N\sum_{l=1}^{n_i}\frac{1}{1+p_{il}q_{i,{n-1}}}\RSB}.
\end{align}
As $f_i$ and all terms in the fraction of the right-hand side of the last equation are nonnegative, the differences $ q_{i,{n+1}}-q_{i,{n}}$ and $q_{i,n} - q_{i,{n-1}}$ have the same sign. Thus, $\{q_{i,n}\}$ is either monotonically increasing or decreasing. Since $\{q_{i,n}\}$ is also bounded, it must converge. This implies by Vitali's convergence theorem that $\{q_{i,n}\}$ converges uniformly on all closed subsets of $D$ and that this limit is an analytic function. Call this limit $q_i = \lim_{n} q_{i,n}$.

We now define $\tilde{f}_{i,n}$ by the quantities $f_i$ and $q_{i,n}$:
\begin{align}\label{eq:ftilde}
	\tilde{f}_{i,n} = \frac1{f_i}\frac1N\sum_{l=1}^{n_i} \frac{p_{il}q_{i,{n}}}{1+p_{il}q_{i,{n}}}.
\end{align}
Clearly, $\{\tilde{f}_{i,n}\}$ is a sequence of analytic bounded functions, converging for $z\in D$ to
\begin{align*}
	\tilde{f}_{i} \triangleq \frac1{f_i}\frac1N\sum_{l=1}^{n_i} \frac{p_{il}q_i}{1+p_{il}q_i}.
\end{align*}
With the above definition, $q_{i,{n+1}}$ satisfies
\begin{align}
	q_{i,{n+1}} &= \frac{f_i}{(1-c_i)\bar{c}_i + \frac1N\sum_{l=1}^{n_i}\frac1{1+p_{il}q_{i,n}}}\nonumber \\
&= \frac{f_i}{\bar{c}_i - \frac1N\sum_{l=1}^{n_i}\frac{p_{il}q_{i,n}}{1+p_{il}q_{i,n}}}\nonumber \\
&= \frac{f_i}{\bar{c}_i - f_i\tilde{f}_{i,n}}.
\end{align}
Thus, we can write, from \eqref{eq:ftilde},
\begin{align}
	\label{eq:algo_fin}
	\tilde{f}_{i,n+1} &= \frac1N\sum_{l=1}^{n_i} \frac{p_{il}}{(\bar{c}_i-f_i\tilde{f}_{i,n})\LB 1 + \frac{p_{il}f_i}{\bar{c}_i-f_i\tilde{f}_{i,n}}\RB} = \frac1N\sum_{l=1}^{n_i} \frac{p_{il}}{\bar{c}_i-f_i\tilde{f}_{i,n} + p_{il}f_i}.
\end{align}

As a consequence, the restriction of $\tilde{f}_i$ to $z<0$ is identical to $\bar{f}_i$ and the fixed-point algorithm defined by \eqref{eq:algo_fin} with $\tilde{f}_{i,0}=0$ converges to $\tilde{f}_i$ for $z\in D$. From this point on, we therefore extend the definition of $\bar{f}_i$ to $D$ by $\bar{f}_i(z)=\tilde{f}_i(z)$.

From \eqref{eq:useful_bound} and for $z\in D$, we have
\begin{align*}
 \Re\{\bar{f}_i\} = \frac1N\sum_{l=1}^{n_i} p_{il}\frac{p_{il}|q_i|^2\Re\{f_i\} + \Re\{f_i\}\Re\{q_i\} + \Im\{f_i\}\Im\{q_i\}}{\left|f_i + p_{il}q_if_i \right|^2}\ge 0.
\end{align*}
Since $\Fm=\sum_{k=1}^K\bar{f}_k\Rm_k$ and the matrices $\Rm_k$ are Hermitian nonnegative definite, it follows that $\left|\frac1N\trace\Am\LB\Fm-z\Id_N\RB^{-1}\right|\le \frac{||\Am||}{|x|}$ for $z\in D$.

From the Vitali convergence theorem, the identity theorem, the analyticity of the functions under study, and the fact that $\frac1N\tr\A\left(\B_N-z\I_N\right)^{-1}$ and $\frac1N\tr\A\left(\F-z\I_N\right)^{-1}$ are uniformly bounded on all closed subsets of $z\in D$, we have that the convergence
\begin{align*}
	\frac1N\tr\A\left(\B_N-z\I_N\right)^{-1} - \frac1N\tr \A\left(\sum_{i=1}^K \bar{f}_i \R_i - z\I_N \right)^{-1}\asto 0
\end{align*}
holds true for all $z\in D$. 

Applying the result for $\A = \R_j$, this is in particular 
\begin{equation}
	\label{eq:convfj}
	f_j - \frac1N\tr \R_j\left(\sum_{i=1}^K \bar{f}_i \R_i - z\I_N \right)^{-1}\asto 0
\end{equation}
for $z\in D$, where $\bar{f}_i$ is defined as the above limit. %the unique solution to
%\begin{equation*}
%  x = \frac1N \sum_{l=1}^{n_i}\frac{p_{il}}{\bar{c}_i-f_i x + p_{il}f_i}
%\end{equation*}
%within the set $[0,c_i\bar{c}_i/f_i)$. 
For $\A = \I_N$, this implies
\begin{equation*}
	m_N(z) - \frac1N\tr \left(\sum_{i=1}^K \bar{f}_i \R_i - z\I_N \right)^{-1}\asto 0
\end{equation*}
which finally proves the convergence.

\bigskip
{\bf Step 2: Existence and Uniqueness}

We will now prove the existence and the uniqueness of positive solutions $e_1(z),\ldots,e_K(z)$ for $z<0$ and the convergence of the classical fixed point algorithm to these values. In addition, we will show that the $e_i(z)$ have analytic extensions on $\CC\setminus\RR_+$ which are Stieltjes transforms of finite measures over $\RR_+$ and satisfy the fundamental equations for $z\in D$. We first introduce some notations and useful identities. Until stated otherwise, we assume $z<0$. Note that, similar to the auxiliary variables $\delta_i$ and $q_i$ in Step 1, we can define, for any pair of variables $x_i$ and $\bar{x}_i$, with $\bar{x}_i$ defined as the solution $y$ to $y=\frac1N\sum_{l=1}^{n_i}\frac{p_{il}}{\bar{c}_i-x_iy+x_ip_{il}}$ such that $0\leq y< c_j\bar{c}_i/x_i$, the auxiliary variables $\Delta_1,\ldots,\Delta_K$, with the properties
\begin{align*}
  x_i &= \Delta_i\left((1-c_i)\bar{c}_i+\frac1N\sum_{l=1}^{n_i}\frac1{1+p_{il}\Delta_i} \right) \\ 
  &=\Delta_i\left(\bar{c}_i-\frac1N\sum_{l=1}^{n_i}\frac{p_{il}\Delta_i}{1+p_{il}\Delta_i} \right)
\end{align*}
and
\begin{align}\nonumber
	\bar{c}_i-x_i\bar{x}_i &= (1-c_i)\bar{c}_i+\frac1N\sum_{l=1}^{n_i}\frac1{1+p_{il}\Delta_i} \\ \label{eq:useful_equality}
	&= \bar{c}_i-\frac1N\sum_{l=1}^{n_i}\frac{p_{il}\Delta_i}{1+p_{il}\Delta_i}\ .
\end{align}

First note that mapping between $x_i$ and $\Delta_i$ is unique. This unfolds from noticing, with some abuse of notation,
\begin{equation*}
 \frac{d\,x_i}{d\Delta_i} =\frac{d}{d\Delta_i} \left[\Delta_i\left((1-c_i)\bar{c}_i+\frac1N\sum_{l=1}^{n_i}\frac1{1+p_{il}\Delta_i} \right)\right] = (1-c_i)\bar{c}_i+\frac1N\sum_{l=1}^{n_i}\frac1{(1+p_{il}\Delta_i)^2} > 0
\end{equation*}
and therefore $x_i$ and $\Delta_i$ are one-to-one. Additionally, $x_i$ is a strictly increasing function of $\Delta_i$ with $\Delta_i=0$ for $x_i=0$. This ensures that $\Delta_i>0$ if and only if $x_i>0$.

Secondly, from the definition of $\bar{x}_i$, we have
\begin{align*}
  \bar{c}_i-x_i\bar{x}_i &= \bar{c}_i-x_i\frac1N\sum_{l=1}^{n_i}\frac{p_{il}}{(\bar{c}_i-x_i\bar{x}_i)+p_{il}x_i} \\
  &= \bar{c}_i- \Delta_i\left(\bar{c}_i-\frac1N\sum_{l=1}^{n_i}\frac{p_{il}\Delta_i}{1+p_{il}\Delta_i}\right)\frac1N\sum_{l=1}^{n_i}\frac{p_{il}}{\bar{c}_i-x_i\bar{x}_i+p_{il}\Delta_i\left(\bar{c}_i-\frac1N\sum_{l'=1}^{n_i}\frac{p_{il'}\Delta_i}{1+p_{il'}\Delta_i}\right)}\ .
\end{align*}

Note in particular that, since $x_i\bar{x}_i=\frac1N\sum_{l'=1}^{n_i}\frac{p_{il'}\Delta_i}{1+p_{il'}\Delta_i}$, the above equation simplifies to 
\begin{align*}
  &\bar{c}_i-\Delta_i\left(\bar{c}_i-\frac1N\sum_{l=1}^{n_i}\frac{p_{il}\Delta_i}{1+p_{il}\Delta_i}\right) \frac1N\sum_{l=1}^{n_1}\frac{p_{il}}{\left(\bar{c}_i-\frac1N\sum_{l'=1}^{n_i}\frac{p_{il'}\Delta_i}{1+p_{il'}\Delta_i}\right)+p_{il} \Delta_i\left(\bar{c}_i-\frac1N\sum_{l'=1}^{n_i}\frac{p_{il'}\Delta_i}{1+p_{il'}\Delta_i}\right)}  \\ 
  &=\bar{c}_i-\frac1N\sum_{l=1}^{n_i}\frac{p_{il}\Delta_i}{1+p_{il}\Delta_i}
\end{align*}
and therefore $\bar{c}_i-\frac1N\sum_{l=1}^{n_i}\frac{p_{il}\Delta_i}{1+p_{il}\Delta_i}$ is one of the solutions of the implicit equation in $u$,
\begin{equation*}
  u = \bar{c}_i-x_i\frac1N\sum_{l=1}^{n_i}\frac{p_{il}}{u+p_{il}x_i}\ .
\end{equation*}
Equivalently, writing $u=\bar{c}_i-x_iy$, it follows that $\frac1{x_i}\frac1N\sum_{l=1}^{n_i}\frac{p_{il}\Delta_i}{1+p_{il}\Delta_i}$ is one of the solutions of the equation in $y$
\begin{equation*}
  y = \frac1N\sum_{l=1}^{n_i}\frac{p_{il}}{\bar{c}_i-x_i y+p_{il}x_i}\ .
\end{equation*}

Since 
\begin{equation*}
  x_i\left(\frac1{x_i}\frac1N\sum_{l=1}^{n_i}\frac{p_{il}\Delta_i}{1+p_{il}\Delta_i}\right)< c_i\bar{c}_i
\end{equation*}
this solution lies in $[0,c_i\bar{c}_i/x_i)$ and is exactly equal to $\bar{x}_i$. This proves that the equations in $(x_i,\bar{x}_i)$ can be written under the form of the equations in $(\Delta_i,\bar{x}_i)$, as presented above.

We take the opportunity of the above definitions to notice that, for $x_i>x_i'$ and $\bar{x}_i'$, $\Delta_i'$ defined similarly as $\bar{x}_i$ and $\Delta_i$,
\begin{equation}
	\label{eq:propertyxibxi}
	x_i\bar{x}_i-x_i'\bar{x}_i' = \frac1N\sum_{l=1}^{n_i}\frac{p_{il}(\Delta_i-\Delta_i')}{(1+p_{il}\Delta_i)(1+p_{il}\Delta_i')}>0
\end{equation}
whenever $\P_i\neq 0$. Therefore $x_i\bar{x}_i$ is a growing function of $x_i$ (or equivalently of $\Delta_i$). This will turn out to be a useful remark later.

We are now in position to prove the step of uniqueness. Define, for $i\in\{1,\ldots,K\}$, the functions
\begin{equation*}
	h_i:(x_1,\ldots,x_K)\mapsto \frac1N\tr \R_i\left(\sum_{j=1}^K \bar{x}_j\R_j - z\I_N\right)^{-1}
\end{equation*}
with $\bar{x}_j$ the unique solution of the equation in $y$
\begin{equation}
	\label{eq:uniquesoly}
	y=\frac1N\sum_{l=1}^{n_j}\frac{p_{jl}}{\bar{c}_j+x_jp_{jl}-x_jy}
\end{equation}
such that $0\leq  \bar{x}_j < c_j\bar{c}_j/x_j$.

We will prove in the following that the multivariate function $\h=(h_1,\ldots,h_K)$ is a {\it standard function} (or {\it standard interference function}), defined in \cite{YAT95}, as follows:
\begin{definition}
	\label{def:standardfunctions}
A function $\h(x_1,\ldots,x_K)\in \RR^K$ is said to be standard if it fulfills the following conditions:
\begin{enumerate}
	\item {\it Positivity:} for each $j$, if $x_1,\ldots,x_K\geq 0$, then $h_j(x_1,\ldots,x_K)>0$.
	\item {\it Monotonicity:} if $x_1\geq x_1',\ldots,x_K\geq x_K'$, then for all $j$, $h_j(x_1,\ldots,x_K)\geq h_j(x_1',\ldots,x_K')$.
	\item {\it Scalability:} for all $\alpha>1$ and for all $j$, $\alpha h_j(x_1,\ldots,x_K)>h_j(\alpha x_1,\ldots,\alpha x_K)$.
\end{enumerate}
\end{definition}

The important result regarding standard functions, \cite[Theorem 2]{YAT95}, is given as follows:
\begin{theorem}
	\label{th:standardfunctions}
If a $K$-variate function $\h(x_1,\ldots,x_K)$ is standard and there exists $(x_1,\ldots,x_K)$ such that for all $j$, $x_j\geq h_j(x_1,\ldots,x_K)$, then the fixed-point algorithm that consists in setting
\begin{equation*}
	x_j^{(t+1)} = h_j(x_1^{(t)},\ldots,x_K^{(t)})
\end{equation*}
for $t\geq 1$ and for any initial values $x_1^{(0)},\ldots,x_K^{(0)}>0$ converges to the unique jointly positive solution of the system of $K$ equations
\begin{equation*}
x_j = h_j(x_1,\ldots,x_K)
\end{equation*}
with $j\in\{1,\ldots,K\}$.
\end{theorem}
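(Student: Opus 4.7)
The plan is to establish, in order: (i) any positive fixed point of $\h$ is unique, (ii) iterating $\h$ from the hypothesized feasible point yields a componentwise monotone sequence converging to a fixed point $\x^\star$, and (iii) the iteration from an arbitrary positive initial vector $\y^{(0)}$ also converges to $\x^\star$, via a squeeze between scaled copies of $\x^\star$.

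For the uniqueness step I would suppose $\x$ and $\x'$ are two positive fixed points and set $\alpha=\max_j x_j/x'_j$. If $\alpha>1$, then $\alpha\x'\geq \x$ coordinate-wise with equality at the argmax index $j^\star$; monotonicity gives $h_{j^\star}(\alpha\x')\geq h_{j^\star}(\x)=x_{j^\star}$, and scalability gives $\alpha h_{j^\star}(\x')>h_{j^\star}(\alpha\x')$, so chaining yields $\alpha x'_{j^\star}>x_{j^\star}$, contradicting the defining equality at $j^\star$. Swapping the roles of $\x$ and $\x'$ rules out $\alpha<1$, so $\x=\x'$.

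For existence I would take the hypothesized feasible $\x^{(0)}$ with $\x^{(0)}\geq \h(\x^{(0)})=\x^{(1)}$ and use monotonicity inductively to propagate this into $\x^{(t)}\geq \x^{(t+1)}$ for all $t$. Positivity gives $\x^{(t)}>0$ for $t\geq 1$, so the sequence is componentwise decreasing and bounded below by zero, hence converges. Continuity of $\h$ (which holds in every context where the theorem is applied here, as $\h$ is a rational function of its arguments) then identifies the limit $\x^\star$ as a fixed point, and the positivity axiom forces $\x^\star>0$ componentwise.

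For convergence from arbitrary positive $\y^{(0)}$ I would choose $\alpha>1$ and $\beta\in(0,1)$ so that $\beta\x^\star\leq \y^{(0)}\leq \alpha\x^\star$, which is possible because $\x^\star$ has strictly positive components. Scalability together with monotonicity gives $\x^\star\leq \h(\alpha\x^\star)<\alpha\x^\star$ and $\beta\x^\star<\h(\beta\x^\star)\leq \x^\star$, so iterating from $\alpha\x^\star$ produces a monotonically decreasing sequence trapped in $[\x^\star,\alpha\x^\star]$ and iterating from $\beta\x^\star$ a monotonically increasing sequence trapped in $[\beta\x^\star,\x^\star]$; both limits are fixed points of $\h$ and must coincide with $\x^\star$ by step~(i). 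Applying $\h$ repeatedly to the sandwich $\beta\x^\star\leq \y^{(0)}\leq \alpha\x^\star$ then squeezes $\y^{(t)}$ between these two converging sequences, yielding $\y^{(t)}\to \x^\star$. The main obstacle in the plan is precisely this last step: monotonicity alone is powerless to make an arbitrary initial condition forget itself, and it is the scalability axiom---which has no counterpart in standard contraction-type arguments---that allows one to dilate and contract $\x^\star$ into genuine super- and sub-solutions of $\x\mapsto\h(\x)$ and thereby corral any positive $\y^{(0)}$ between two sequences already known to converge to $\x^\star$.
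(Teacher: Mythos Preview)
The paper does not actually prove this theorem: it is simply quoted as \cite[Theorem~2]{YAT95} and used as a black box. Your proposal supplies the classical Yates argument in full, and each of the three steps (uniqueness via the ratio $\alpha=\max_j x_j/x'_j$, existence via the monotone decreasing iteration from a feasible point, and global convergence by sandwiching an arbitrary $\y^{(0)}$ between $\beta\x^\star$ and $\alpha\x^\star$) is correctly structured and matches the standard proof in the interference-function literature.

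One small remark: you invoke continuity of $\h$ to pass from ``the monotone sequence converges'' to ``the limit is a fixed point,'' and you justify this by noting that $\h$ is rational in the applications at hand. That is fine for the purposes of this paper, but the Yates argument does not actually need continuity. From $x^{(t)}\downarrow \x^\star$ and monotonicity you get $\x^\star\geq \h(\x^\star)$ directly; for the reverse inequality, pick any $\alpha>1$, observe that eventually $x^{(t)}\leq \alpha\x^\star$, hence $x^{(t+1)}=\h(x^{(t)})\leq \h(\alpha\x^\star)<\alpha\h(\x^\star)$ by scalability, and let $t\to\infty$ then $\alpha\downarrow 1$. This closes the loop using only the three axioms in Definition~\ref{def:standardfunctions}.
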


In order to prove that there exist $x_1,\dots,x_K$ such that $x_j\ge h_j(x_1,\ldots,x_K)$ for all $j$, it is sufficient to notice that $h_j(x_1,\ldots,x_K)\le R/|z|$ for all $j$. Thus, for $x_j\ge R/|z|$ for all $j$, $x_j\ge h_j(x_1,\ldots,x_K)$ holds for all $j$.  Therefore, by showing that $\h\triangleq (h_1,\ldots,h_K)$ is a standard function, we will prove that the classical fixed point algorithm converges to the unique set of positive solutions $e_1,\ldots,e_K$, when $z<0$.

The positivity condition is straightforward as $\bar{x}_i$ is positive for $x_i$ positive and therefore $h_j(x_1,\ldots,x_K)$ is always positive whenever $x_1,\ldots,x_K$ are nonnegative.

The scalability is also rather direct. Let $\alpha>1$, then
\begin{align*}
	&\alpha h_j(x_1,\ldots,x_K) - h_j(\alpha x_1,\ldots,\alpha x_K) \nonumber \\ 
	&= \frac1N\tr \R_j\left(\sum_{k=1}^K\frac{\bar{x}_k}{\alpha}\R_k-\frac{z}{\alpha}\I_N\right)^{-1} - \frac1N\tr \R_j\left(\sum_{k=1}^K \bar{x}^{(\alpha)}_k\R_k- z\I_N\right)^{-1}
\end{align*}
where we denoted $\bar{x}^{(\alpha)}_j$ the unique solution to \eqref{eq:uniquesoly} within $[0,c_j\bar{c}_j/(\alpha x_j))$ with $x_j$ replaced by $\alpha x_j$.
From Lemma \ref{lem:traceinequ}, it suffices to show that
\begin{equation*}
	\sum_{k=1}^K\left[\bar{x}^{(\alpha)}_k - \frac{\bar{x}_k}{\alpha}\right]\R_k + \left[ z -\frac{z}{\alpha} \right]\I_N
\end{equation*}
is positive definite. Since $\alpha x_i> x_i$, we have from the property \eqref{eq:propertyxibxi} that
\begin{equation*}
	\alpha x_k\bar{x}^{(\alpha)}_k - x_k \bar{x}_k > 0
\end{equation*}
or equivalently
\begin{equation*}
	\bar{x}^{(\alpha)}_k-\frac{\bar{x}_k}{\alpha} > 0.
\end{equation*}
Along with $1-1/\alpha>0$ and $z<0$, this ensures that $\alpha h_j(x_1,\ldots,x_K) > h_j(\alpha x_1,\ldots,\alpha x_K)$.

The monotonicity requires some more calculus. This unfolds from considering $\bar{x}_i$ as a function of $\Delta_i$, by verifying that $\frac{d}{d\Delta_i}\bar{x}_i$ is negative.
\begin{align*}
  \frac{d}{d\Delta_i}\bar{x}_i &= \frac{1}{\Delta_i^2}\left(1-\frac{\bar{c}_i}{\bar{c}_i-\frac1N\sum_{l=1}^{n_i}\frac{p_{il}\Delta_i}{1+p_{il}\Delta_i}}\right)+ \frac{\bar{c}_i}{\Delta_i^2}\left(\frac{\frac1N\sum_{l=1}^{n_i}\frac{p_{il}\Delta_i}{(1+p_{il}\Delta_i)^2}}{\left(\bar{c}_i-\frac1N\sum_{l=1}^{n_i}\frac{p_{il}\Delta_i}{1+p_{il}\Delta_i}\right)^2}\right)\\
&= \frac{1}{\Delta_i^2\left(\bar{c}_i-\frac1N\sum_{l=1}^{n_i}\frac{p_{il}\Delta_i}{1+p_{il}\Delta_i}\right)^2} \left[-\frac1N\left(\sum_{l=1}^{n_i}\frac{p_{il}\Delta_i}{1+p_{il}\Delta_i}\right)\left(\bar{c}_i-\frac1N\sum_{l=1}^{n_i}\frac{p_{il}\Delta_i}{1+p_{il}\Delta_i}\right) +\frac{\bar{c}_i}N\sum_{l=1}^{n_i}\frac{p_{il}\Delta_i}{(1+p_{il}\Delta_i)^2}  \right]\\
&=\frac{1}{\Delta_i^2\left(\bar{c}_i-\frac1N\sum_{l=1}^{n_i}\frac{p_{il}\Delta_i}{1+p_{il}\Delta_i}\right)^2} \left[\left(\frac1N\sum_{l=1}^{n_i}\frac{p_{il}\Delta_i}{1+p_{il}\Delta_i}\right)^2 -\frac{\bar{c}_i}N\sum_{l=1}^{n_i}\frac{p_{il}\Delta_i}{1+p_{il}\Delta_i}+\frac{\bar{c}_i}N\sum_{l=1}^{n_i}\frac{p_{il}\Delta_i}{(1+p_{il}\Delta_i)^2} \right]\\
&= \frac{1}{\Delta_i^2\left(\bar{c}_i-\frac1N\sum_{l=1}^{n_i}\frac{p_{il}\Delta_i}{1+p_{il}\Delta_i}\right)^2} \left[\left(\frac1N\sum_{l=1}^{n_i}\frac{p_{il}\Delta_i}{1+p_{il}\Delta_i}\right)^2 -\frac{\bar{c}_i}N\sum_{l=1}^{n_i}\frac{(p_{il}\Delta_i)^2}{(1+p_{il}\Delta_i)^2} \right].
\end{align*}
From the Cauchy-Schwarz inequality, we have
\begin{align}
  \label{eq:CS}
  \left(\sum_{l=1}^{n_i}\frac1N\frac{p_{il}\Delta_i}{1+p_{il}\Delta_i}\right)^2\leq \sum_{k=1}^{n_i}\frac{1}{N^2}\sum_{l=1}^{n_i}\frac{(p_{il}\Delta_i)^2}{(1+p_{il}\Delta_i)^2}=c_i\bar{c}_i \frac1N\sum_{l=1}^{n_i}\frac{(p_{il}\Delta_i)^2}{(1+p_{il}\Delta_i)^2} <\frac{\bar{c}_i}N\sum_{l=1}^{n_i}\frac{(p_{il}\Delta_i)^2}{(1+p_{il}\Delta_i)^2}
\end{align}
which is sufficient to conclude that $\frac{d}{d\Delta_i}\bar{x}_i<0$.
Since $\Delta_i$ is an increasing function of $x_i$, we have that $\bar{x}_i$ is a decreasing function of $x_i$, i.e., $\frac{d}{dx_i}\bar{x}_i<0$. Therefore, for two sets $x_1,\ldots,x_K$ and $x_1',\ldots,x_K'$ of positive values such that $x_j>x_j'$, defining $\bar{x}_j'$ equivalently as $\bar{x}_j$ for the terms $x_j'$, we have $\bar{x}_k' > \bar{x}_k$. Therefore, from Lemma \ref{lem:traceinequ}, we finally have
\begin{align}
	h_j(x_1,\ldots,x_K) - h_j(x_1',\ldots,x_K') &= \frac1N\tr \R_j\left(\sum_{k=1}^K \bar{x}_k \R_k-z\I_N\right)^{-1} - \frac1N \tr \R_j\left(\sum_{k=1}^K \bar{x}'_k\R_k- z\I_N\right)^{-1} \label{eq:hj_inc} > 0.
\end{align}
This proves the monotonicity condition and, finally, that $\h=(h_1,\ldots,h_K)$ is a standard function.

It follows from Theorem \ref{th:standardfunctions} that $(e_1,\ldots,e_K)$ is uniquely defined and that the classical fixed-point algorithm converges to this solution from any initialization point (remember that, at each step of the algorithm, the set $\bar{e}_1,\ldots,\bar{e}_K$ must be evaluated, possibly thanks to a further fixed-point algorithm). 

We will now show that $e_i(z)$ has an analytic extension on $z\in\CC\setminus\RR_+$ which is the Stieltjes transform of a finite measure supported by $\RR_+$. For this proof, consider the matrices $\P_{[p],i}\in \CC^{n_ip}$ and $\H_{[p],i}\in\CC^{Np\times N_ip}$ for all $i$ defined as the Kronecker products $\P_{[p],i}\triangleq \P_i\otimes \I_p$, $\H_{[p],i}\triangleq \H_i \otimes \I_p$, such that $\P_{[p],i}$ and $\R_{[p],i}=\H_{[p],i}\H_{[p],i}^\herm$ have the same spectral distributions as the matrices $\Pm_i$ and $\Rm_i$, respectively. It is easy to see that the solutions of the implicit equations \eqref{eq:pi} for $z\in\CC\setminus\RR_+$ remain unchanged by substituting the $\P_{[p],i}$ and $\R_{[p],i}$ to the $\P_i$ and $\R_i$, respectively, for any $p$. Denoting similarly $f_{[p],i}$ the $f_i$ adapted to $\P_{[p],i}$ and $\H_{[p],i}$, from the convergence result of Step 1, we can choose $f_{[1],i},f_{[2],i},\ldots$ a sequence of the set of probability one where convergence is ensured as $p$ grows large ($N$ and the $n_i$ are kept fixed). %This sequence is uniformly bounded (by $R/|z|$) in $\CC\setminus \RR_+$, and we can, therefore, extract a converging subsequence $f_{[\phi(p)],i}$ out of it. 
Call $e'_i(z)$ the limit. 

We wish to prove that $e'_i$, seen as a function of $z$, is the {\it Stieltjes transform} of a distribution function, whose restriction to $\RR_-$ matches $e_i$. For this, we prove the defining properties of a Stieltjes transform, provided in Lemma \ref{le:properties_ST}. By Vitali's convergence theorem \cite{TIT39}, $e'_i$ is analytic on $\CC^+$ since $e'_i$ is the limit of a sequence of analytic functions, bounded on every compact of $\CC\setminus\RR_+$. It is clear that for $z\in\CC^+$, $\Im[f_{[p],i}(z)]>0$, $\Im[zf_{[p],i}(z)]>0$ and $|yf_{[p],i}({\bf i}y)|\leq R$ for $y>0$. This implies that for $z\in\CC^+$, $\Im[e'_i(z)]\geq 0$, $z\Im[e'_i(z)]\geq 0$ and $\lim_{y\to\infty} -{\bf i}ye_i'({\bf i}y)\leq R$. In addition, note that, for $z\in\CC^+$,
\begin{equation*}
  \Im[f_{[p],i}] \geq \frac1N\frac{r}{(RP+|z|)^2}\Im[z]>0
\end{equation*}
and 
\begin{equation*}
  \Im[zf_{[p],i}] \geq \frac1N\frac{Kr^2t}{(RP+|z|)^2}\Im[z]>0
\end{equation*}
with $r$ a lower bound on the smallest non-zero eigenvalues of $\R_1,\ldots,\R_K$ (we naturally assume all $\R_k$ non-zero) and $t$ a lower bound on the smallest non-zero eigenvalues of $\T_1,\ldots,\T_K$ (again, none assumed identically zero). Take $z\in\CC^+$ and $\varepsilon<\frac12\min(\frac1N\frac{r}{(RP+|z|)^2}\Im[z],\frac1N\frac{Kr^2t}{(RP+|z|)^2}\Im[z])$. There now exists $p_0$ such that $p\geq p_0$ implies $|\Im[f_{[\phi(p)],i}]-\Im[e'_i]|<\varepsilon/2$ and $|\Im[zf_{[\phi(p)],i}]-\Im[ze'_i]|<\varepsilon/2$, and therefore $\Im[e'_i]>\varepsilon/2$ and $z\Im[e'_i(z)]>\varepsilon/2$ so that $e'_i(z)$ is the Stieltjes transform of a finite measure on $\RR_+$. Moreover, since $e_i'(z)=\lim f_{[p],i}(z)$ on $D$, from \eqref{eq:convfj}, $e_i'(z)$ satisfies the equations \eqref{eq:pi} for all $z\in D$. 

Consider now two sets of Stieltjes transforms $(e'_1(z),\ldots,e'_K(z))$ and $(e''_1(z),\ldots,e''_K(z))$, $z\in\CC\setminus \RR_+$, which are solutions of the fixed-point equation for $z<0$. Since $e_i'(z)=e''_i(z)$ for all $z<0$, and $e_i'(z)-e_i''(z)$ is holomorphic on $\CC\setminus \RR_+$ as the difference of Stieltjes transforms, $e_i'(z)=e_i''(z)$ over $\CC\setminus \RR_+$ \cite{RUD86} by the identity theorem. This therefore proves, in addition to {\it point-wise} uniqueness on the negative half-line, the uniqueness of the {\it Stieltjes transform} solution of the functional implicit equation such that, for $z<0$, $0\leq\bar{e_i}<c_i\bar{c}_i/e_i$ for all $i$. Moreover, this solution satisfies the fundamental equations for $z\in D$.

\bigskip
{\bf Step 3: Convergence of $e_i-f_i$}

For this step, we follow the same approach as in \cite{HAC07}. Denote 
\begin{equation*}
	\varepsilon_{N,i} \triangleq f_i-\frac1N\tr\R_i\left(\sum_{k=1}^K \bar{f}_k\R_k -z\I_N \right)^{-1}
\end{equation*}
and recall the definitions of $f_i$,  $e_i$, $\bar{f}_i$ and $\bar{e}_i$:
\begin{align*}
 f_i &= \frac1N\tr\R_i\left(\B_N-z\I_N\right)^{-1}\\
 e_i &= \frac1N\tr\R_i\left(\sum_{j-1}^K\bar{e}_j\R_j-z\I_N\right)^{-1}\\
\bar{f}_i &= \frac1N\sum_{l=1}^{n_i}\frac{p_{il}}{\bar{c}_i-f_i\bar{f}_i + p_{il}f_i},\qquad \bar{f}_i \in [0,c_i\bar{c}_i/f_i)\\
\bar{e}_i &= \frac1N\sum_{l=1}^{n_i}\frac{p_{il}}{\bar{c}_i-e_i\bar{e}_i + p_{il}e_i},\qquad \bar{e}_i \in [0,c_i\bar{c}_i/e_i)\ .
\end{align*}

From the definitions above, we have the following set of inequalities
\begin{align}\label{eqn:inequ}
 f_i \leq \frac{R}{|z|},\quad e_i \leq \frac{R}{|z|},\quad \bar{f}_i \leq \frac{P}{(1-c_i)\bar{c}_i},\quad \bar{e}_i \le\frac{P}{(1-c_i)\bar{c}_i}\ .
\end{align}

We will show in the following that
\begin{equation}\label{eqn:convergence}
e_i - f_i \asto 0
\end{equation}
for all $i\in\{1,\ldots,N\}$. We start by considering the following differences
\begin{align*}
	f_i-e_i &= \sum_{j=1}^K(\bar{e}_j-\bar{f}_j)\frac1N\tr\R_i\left(\sum_{k=1}^K \bar{e}_k\R_k -z\I_N \right)^{-1}\R_j\left(\sum_{k=1}^K \bar{f}_k\R_k -z\I_N  \right)^{-1} + \varepsilon_{N,i} \\
	\bar{e}_i-\bar{f}_i &=  \frac1N\sum_{l=1}^{n_i}\frac{p_{il}^2(f_i-e_i) -p_{il}\left[f_i\bar{f}_i - e_i\bar{e}_i\right]}{(\bar{c}_i-\bar{e}_ie_i+p_{il}e_i)(\bar{c}_i-\bar{f}_if_i+p_{il}f_i)} \\
	f_i\bar{f}_i - e_i\bar{e}_i&=\bar{f}_i(f_i-e_i) + e_i(\bar{f}_i-\bar{e}_i)\ . 
\end{align*}
For notational convenience, we define the following values
\begin{align*}
 \alpha &\triangleq \sup_i\Exp\left[|f_i-e_i|^4\right]\\
\bar{\alpha} &\triangleq \sup_i\Exp\left[|\bar{f}_i-\bar{e}_i|^4\right]\ .
\end{align*}
It is thus sufficient to show that $\alpha$ is summable in order to prove \eqref{eqn:convergence}. 
By applying \eqref{eqn:inequ} to the absolute of the first difference, we obtain
\begin{align*}
 |f_i-e_i| \leq \frac{KR^2}{|z|^2} \sup_i|\bar{f}_i-\bar{e}_i| +\sup_i |\varepsilon_{N,i}|
\end{align*}
and hence 
\begin{align}\label{eqn:alpha}
 \alpha \le& \frac{8K^4R^8}{|z|^8}\bar{\alpha} + \frac{8C}{N^2}
\end{align}
for some $C>0$ such that $\Exp[\sup_i|\varepsilon_{N,i}|^4]\leq 8K\sup_i\Exp[|\varepsilon_{N,i}|^4]\leq C/N^2$.
Similarly, we have for the third difference
\begin{align*}
 |f_i\bar{f}_i - e_i\bar{e}_i|&\leq |\bar{f}_i||f_i-e_i|+ |e_i||\bar{f}_i-\bar{e}_i|\\
 &\leq \frac{P}{(1-c_+)\bar{c}_-}\sup_i|f_i-e_i| + \frac{R}{|z|}\sup_i|\bar{f}_i-\bar{e}_i|\ .
\end{align*}
This result can be used to upperbound the second difference term, which writes
\begin{align*}
  |\bar{f}_i-\bar{e}_i|&\leq \frac{1}{(1-c_+)^2\bar{c}_-^2}\left(P^2\sup_i|f_i-e_i| +P|f_i\bar{f}_i - e_i\bar{e}_i|\right)\\
  &\leq \frac{1}{(1-c_+)^2\bar{c}_-^2}\left(P^2\sup_i|f_i-e_i| +P\left[\frac{P}{(1-c_+)\bar{c}_-}\sup_i|f_i-e_i| + \frac{R}{|z|}\sup_i|\bar{f}_i-\bar{e}_i| \right]\right)\\
  &\leq \frac{P^2(\bar{c}_-+1)}{(1-c_+)^3\bar{c}_-^3}\sup_i|f_i-e_i| + \frac{RP}{|z|(1-c_+)^2\bar{c}_-^2}\sup_i|\bar{f}_i-\bar{e}_i|\ .
\end{align*}
Hence
\begin{align}\label{eqn:alphabar}
  \bar{\alpha} \leq \frac{8P^8(\bar{c}_-+1)^4}{(1-c_+)^{12}\bar{c}_-^{12}}\alpha + \frac{8R^4P^4}{|z|^4(1-c_+)^8\bar{c}_-^8}\bar{\alpha}\ .
\end{align}
For any $z$ satisfying $|z|>\frac{2RP}{(1-c_+)^2}$, we have $\frac{8R^4P^4}{|z|^4(1-c_+)^8}<1/2$ and thus 
\begin{align*}
  \bar{\alpha} < \frac{16P^8(\bar{c}_-+1)^4}{(1-c_+)^{12}\bar{c}_-^{12}}\alpha\ .
\end{align*}
Plugging this result into \eqref{eqn:alpha} yields
\begin{align*}
 \alpha \leq \frac{128K^4R^8P^8(2-c)^4}{|z|^8(1-c_+)^{12}}\alpha + \frac{8C}{N^2}\ .
\end{align*}
Take $0<\varepsilon<1$. It is easy to check that for $|z|>\frac{128^{1/8}RP\sqrt{K(\bar{c}_-+1)}}{(1-c_+)^{3/2}\bar{c}_-^{3/2}(1-\varepsilon)^{1/8}}$, $\frac{128K^4R^8P^8(\bar{c}_-+1)^4}{|z|^8(1-c_+)^{12}\bar{c}_-^{12}}<1-\varepsilon$ and thus
\begin{align}
 \alpha < \frac{8C}{\varepsilon N^2}\ .
\end{align}
Since $C$ does not depend on $N$, $\alpha$ is clearly summable which, along with Markov inequality and the Borel Cantelli lemma, concludes the proof.

Finally, taking the same steps as previously, we also have
\begin{equation*}
  \Exp\left[ \left|m_N(z) - \bar{m}_N(z)\right|^4\right] \leq \frac{8C}{\varepsilon N^2} 
\end{equation*}
for some $|z|$ large enough. For these $z$, the same conclusion holds: $m_N(z) - \bar{m}_N(z)\asto 0$. From Vitali convergence theorem and the identity theorem, since $f_i$ and $e_i$ are uniformly bounded on all closed sets of $\CC\setminus\RR_+$ and analytic, we finally have that the convergence is true for all $z\in\CC\setminus \RR_+$. The almost sure convergence of the Stieltjes transform implies the almost sure weak convergence of $F_N-\bar{F}_N$ to $0$, uniformly over every compact set of $\RR_+$, which is our final result.

This concludes the proof of Theorem~\ref{th:sumRWT} for surely bounded $\R_i$.

\bigskip
\subsubsection{Almost sure boundedness of $\Vert\R_i\Vert$}
\

To extend Theorem \ref{th:sumRWT} to the case where $\Vert\R_i\Vert$ is only almost surely bounded, we merely apply the Tonelli theorem (Lemma~\ref{le:tonelli}). Call $(\Omega_R,\mathcal F_R,P_R)$ the probability space that generates the sequences of matrices of growing sizes $\{\R_i,1\leq i\leq K,N_i\in\NN\}$, $(\Omega_W,\mathcal F_W,P_W)$ the probability space that generates the sequences of matrices of growing sizes $\{\W_i,1\leq i\leq K,N_i\in\NN\}$, and $(\Omega_R\times \Omega_W,\mathcal F_R\times \mathcal F_W,Q)$ their product space. Denote $A$ the subspace of $\mathcal F_R\times \mathcal F_W$ for which $F_N-\bar{F}_N\to 0$. Then, from Tonelli theorem, Lemma \ref{le:tonelli},
\begin{equation*}
 Q(A)=\int_{\Omega_R\times \Omega_W}1_A(r,w)Q(d(r,w))=\int_{\Omega_R}\int_{\Omega_W}1_A(r,w)P_W(dw) P_R(dr).
\end{equation*}
Take $r$ such that the $\Vert\R_i\Vert$ are all uniformly bounded with growing $N$. Then, from Theorem \ref{th:sumRWT}, for this $r$, $\int_{\Omega_W}1_A(r,w)P_W(dw)=1$. But these $r\in\Omega_R$ belong to a space of probability one, as the intersection of $K$ spaces of probability one, and finally $Q(A)=1$.

\section{Proof of Theorem \ref{th:mutual_info}}
\label{app:mutual_info}

Following for simplicity the notations of Appendix \ref{app:sumRXT}, we use here the variable $e_i(-\sigma^2)$ in place of $a_i(\sigma^2)$.
It is easy to see (e.g. \cite[Definition 3.2]{COUbook}) that, for $F$ a probability distribution function with support in $\RR_+$
\begin{equation*}
	\int_0^\infty \log\left(1+\frac{t}{x}\right)dF(t) = \int_x^\infty \left(-\frac1t+m_F(-t)\right)dF(t)
\end{equation*}
where $m_F(z)$ is the Stieltjes transform of $F$ (this is sometimes called the Shannon-transform in $1/x$). In particular,
\begin{equation*}
	I^{(a)}_N({\sigma^2}) = \frac1N\log\det\left(\I_N + \frac1{{\sigma^2}}\B_N \right) = \int_{{\sigma^2}}^\infty \left(-\frac1t+m_N(-t)\right)dF_N(t).
\end{equation*}

We will first show that the expression $\bar{I}^{(a)}_N({\sigma^2})$ given in Theorem \ref{th:mutual_info} satisfies the same property with $\bar{F}_N$.

For notational simplicity, we will write $e_i=e_i(-{\sigma^2})$ and $\bar{e}_i=\bar{e}_i(-{\sigma^2})$.

First note that the system of equations \eqref{eq:pi} is unchanged if we extend the $\P_i$ matrices into $N_i\times N_i$ diagonal matrices filled with $N_i-n_i$ zero eigenvalues. Therefore, we can assume that all $\P_i$ have size $N_i\times N_i$ although we restrict the measure of eigenvalues of $\P_i$ to have a mass $1-c_i$ in zero. Since this does not alter the equations \eqref{eq:pi}, we have in particular $\bar{e}_i<\bar{c}_i/e_i$ for ${\sigma^2}>0$. 

This being said, $\bar{I}^{(a)}_N$ is given by
  \begin{align*}
	  \bar{I}^{(a)}_N({\sigma^2}) &= \frac1N\log\det\left(\I_N + \frac1{{\sigma^2}}\sum_{i=1}^K \bar{e}_i\R_i\right) + \sum_{i=1}^K \left[\frac1N\log\det\left([\bar{c}_i-e_i\bar{e}_i]\I_N + e_i\P_i \right)-\bar{c}_i\log(\bar{c}_i) \right].
  \end{align*} 

  Calling $\bar{I}$ the function
  \begin{align*}
	  &\bar{I}: (x_1,\ldots,x_K,\bar{x}_1,\ldots,\bar{x}_K,{\sigma^2}) \\ 
	  &\mapsto \frac1N\log\det\left(\I_N + \frac1{{\sigma^2}}\sum_{i=1}^K \bar{x}_i\R_i\right) + \sum_{i=1}^K \left[\frac1N\log\det\left([\bar{c}_i-x_i\bar{x}_i]\I_N + x_i\P_i \right) - \bar{c}_i\log(\bar{c}_i) \right],
  \end{align*}
  we have
  \begin{align*}
	  \frac{\partial \bar{I}}{\partial x_i}(e_1,\ldots,e_K,\bar{e}_1,\ldots,\bar{e}_K,{\sigma^2}) &= \bar{e}_i -\bar{e}_i\frac1N \sum_{l=1}^{N_i}\frac1{\bar{c}_i-e_i\bar{e}_i + e_ip_{il}} \\
	  \frac{\partial \bar{I}}{\partial \bar{x}_i}(e_1,\ldots,e_K,\bar{e}_1,\ldots,\bar{e}_K,{\sigma^2}) &= e_i - e_i\frac1N \sum_{l=1}^{N_i}\frac1{\bar{c}_i-e_i\bar{e}_i + e_ip_{il}}.
  \end{align*}
In order to proceed, note that we can write $\bar{c}_i$ in the following way:
  \begin{align*}
    \bar{c}_i &= \frac1N\sum_{l=1}^{N_i} \frac{\bar{c}_i-e_i\bar{e}_i + e_ip_{il}}{\bar{c}_i-e_i\bar{e}_i + e_ip_{il}} \\ 
     &= (\bar{c}_i-e_i\bar{e}_i)\frac1N \sum_{l=1}^{N_i}\frac1{\bar{c}_i-e_i\bar{e}_i + e_ip_{il}} + \frac1N\sum_{l=1}^{N_i}\frac{e_ip_{il}}{\bar{c}_i-e_i\bar{e}_i + e_ip_{il}} \\ 
     &= (\bar{c}_i-e_i\bar{e}_i)\frac1N \sum_{l=1}^{N_i}\frac1{\bar{c}_i-e_i\bar{e}_i + e_ip_{il}}+e_i\bar{e}_i
  \end{align*}
from which it follows that
\begin{equation*}
  \left(\bar{c}_i - e_i\bar{e}_i\right) \left(1 - \frac1N \sum_{l=1}^{N_i}\frac1{\bar{c}_i-e_i\bar{e}_i + e_ip_{il}} \right) = 0.
\end{equation*}
But we also know that $0\leq \bar{e}_i<\bar{c}_i/e_i$ and therefore $\bar{c}_i - e_i\bar{e}_i>0$. This entails 
\begin{equation}
  \label{eq:sumeq1}
	\frac1N \sum_{l=1}^{N_i}\frac1{\bar{c}_i-e_i\bar{e}_i + e_ip_{il}} = 1.
\end{equation}
From \eqref{eq:sumeq1}, we can then conclude
\begin{align*}
	\frac{\partial \bar{I}}{\partial x_i}(e_1,\ldots,e_K,\bar{e}_1,\ldots,\bar{e}_K,{\sigma^2}) &= 0 \\
	\frac{\partial \bar{I}}{\partial \bar{x}_i}(e_1,\ldots,e_K,\bar{e}_1,\ldots,\bar{e}_K,{\sigma^2}) &= 0.
\end{align*}

We therefore have, from the differentiation chain rule, 
\begin{align*}
	\frac{d}{d{\sigma^2}}\bar{I}^{(a)}_N({\sigma^2}) &= \sum_{i=1}^K \left[\frac{\partial \bar{I}}{\partial e_i}\frac{\partial e_i}{\partial {\sigma^2}} + \frac{\partial \bar{I}}{\partial \bar{e}_i}\frac{\partial \bar{e}_i}{\partial {\sigma^2}}\right] + \frac{\partial \bar{I}}{\partial {\sigma^2}} \\
	&= \frac{\partial \bar{I}}{\partial {\sigma^2}} \\
	&= -\frac1{{\sigma^4}}\sum_{i=1}^K \bar{e}_i \frac1N\tr \R_i\left(\I_N + \frac1{{\sigma^2}}\sum_{j=1}^K\bar{e}_j\R_j\right)^{-1} \\
	&= -\frac1{{\sigma^2}} \frac1N\tr\left[\left(\sum_{i=1}^K \frac1{{\sigma^2}} \bar{e}_i\R_i +\I_N - \I_N \right)\left(\I_N + \frac1{{\sigma^2}}\sum_{j=1}^K\bar{e}_j\R_j\right)^{-1}\right]\\
	&= -\frac1{{\sigma^2}} + \frac1N\tr\left({\sigma^2} \I_N + \sum_{j=1}^K\bar{e}_j\R_j\right)^{-1}
\end{align*}

Recognizing the Stieltjes transform of $\bar{F}_N$, we therefore have, along with the fact that $\bar{I}^{(a)}_N(\infty)=0$,
\begin{equation*}
	\bar{I}^{(a)}_N({\sigma^2}) = \int_{{\sigma^2}}^\infty \left( \frac1t - \frac1{t^2}\bar{m}_N\left(-\frac1t\right)\right) dt
\end{equation*}
and therefore
\begin{equation*}
	\bar{I}^{(a)}_N({\sigma^2}) = \int_0^\infty \log\left(1+\frac{t}{{\sigma^2}}\right)d\bar{F}_N(t).
\end{equation*}

In order to prove the almost sure convergence $I^{(a)}_N({\sigma^2})-\bar{I}^{(a)}_N({\sigma^2})\asto 0$, we simply need to remark that the support of the eigenvalues of $\B_N$ is bounded. Indeed, the non-zero eigenvalues of $\W_i\W_i^\herm$ have unit modulus and therefore $\Vert \B_N\Vert \leq KPR$. Similarly, the support of $\bar{F}_N$ is the support of the eigenvalues of $\sum_{i=1}^K\bar{e}_i\R_i$, which are bounded by $KPR$ as well.

As a consequence, for $\B_1,\B_2,\ldots$ a realization for which $F_N-\bar{F}_N\Rightarrow 0$ (these lie in a space of probability one), we have, from the dominated convergence theorem
\begin{equation*}
	\int_0^\infty \log\left(1 + \frac{t}{{\sigma^2}} \right)d[F_N-\bar{F}_N](t)\to 0 
\end{equation*}
Hence the almost sure convergence of the instantaneous mutual information. 

Because of sure boundedness of $\Vert \B_N\Vert$, an immediate application of the dominated convergence theorem on the probability space $\Omega$ that engenders the sequences of matrices $\B_1(\omega),\B_2(\omega),\ldots$, $\omega\in\Omega$, entails convergence in the first mean as well.

\section{Proof of Theorem \ref{th:SINR_MMSE}}
\label{app:SINR_MMSE}

In this section, we follow closely the derivations of Appendix \ref{app:sumRXT} and use the variable $e_i(-\sigma^2)$ in place of $a_i(\sigma^2)$.
To prove Theorem \ref{th:SINR_MMSE}, we will pursue a similar approach as for the proof of Theorem \ref{th:sumRWT}, but we can now take advantage of all results derived so far.

First denote $d_i$ the unique positive solution, for $e_i>0$, to
\begin{equation*}
  e_i = d_i\left(\bar{c}_i - \frac1N\sum_{l=1}^{n_i}\frac{p_{il}d_i}{1+p_{il}d_i} \right)\ .
\end{equation*}
This solution exists and is unique due to the arguments given in the introduction of Step 2 of the proof of Theorem~\ref{th:sumRWT}.

Similar to the proof of Theroem~\ref{th:mutual_info}, we proceed by extending the matrix $\P_i$ to an $N_i$-dimensional matrix with the last $N_i-n_i$ diagonal entries filled with zeros. This way, we can write
\begin{equation*}
  e_i = d_i \left(\frac1N\sum_{l=1}^{N_i} \left[1 -  \frac{p_{il}d_i}{1+p_{il}d_i}\right]\right) = \frac1N\sum_{l=1}^{N_i}\frac{d_i}{1+p_{il}d_i}.
\end{equation*}

Since $d_i$ is a continuous mapping of $e_i$ and $e_i\leq \frac{P}{|z|}$, it follows that $d_i$ is bounded from above.

Recall now that for $\lim\sup c_i< 1$ for all $i$ and, for some $z_0<0$, we have that $z<z_0$ implies
\begin{equation*}
  \Exp[|f_i-e_i|^4]= \Exp\left[\left|f_i-\frac1N\sum_{l=1}^{N_i}\frac{d_i}{1+p_{il}d_i}\right|^4\right] \leq \frac{C}{N^2}
\end{equation*}
for some constant $C>0$, where $f_i$ is defined in \eqref{eq:fi_def}. Also, from \eqref{eq:deltai_moment},
\begin{equation*}
  \Exp\left[\left|f_i-\frac1N\sum_{l=1}^{N_i}\frac{\delta_i}{1+p_{il}\delta_i}\right|^4\right]\leq \frac{C_1}{N^2}
\end{equation*}
for some $C_1>C$. From these two inequalities, we have
\begin{equation*}
  \Exp\left[\left|\frac1N\sum_{l=1}^{N_i}\frac{\delta_i}{1+p_{il}\delta_i}-\frac1N\sum_{l=1}^{N_i}\frac{d_i}{1+p_{il}d_i}\right|^4\right] \leq \frac{16C_1}{N^2}.
\end{equation*}
Also, from an immediate application of the trace lemma, Lemma \ref{le:trace_Haar}, we remind that
\begin{align*}
  \Exp\left[\left|\w_{il}^\herm \H_i^\herm\left(\B_{(i,l)}-z\I_N\right)^{-1}\H_i \w_{il} - \delta_i \right|^4\right]\leq \frac{C_2}{N^2}
\end{align*}
for some $C_2>C_1$.

Together, this implies that for $z$ small enough and for any $k\in\{1,\ldots,n_k\}$,
\begin{align*}
  &\quad\Exp\left[\left|\frac1N\sum_{l=1}^{N_i}\frac{d_i}{1+p_{il}d_i} - \frac1N\sum_{l=1}^{N_i}\frac{\w_{ik}^\herm \H_i^\herm\left(\B_{(i,k)}-z\I_N\right)^{-1}\H_i \w_{ik}}{1+p_{il}\w_{ik}^\herm \H_i^\herm\left(\B_{(i,k)}-z\I_N\right)^{-1}\H_i \w_{ik}}\right|^4\right] \nonumber \\ 
  \leq&\quad 8 \left[\Exp\left[\left|\frac1N\sum_{l=1}^{N_i}\frac{d_i}{1+p_{il}d_i} - \frac1N\sum_{l=1}^{N_i}\frac{\delta_i}{1+p_{il}\delta_i}\right|^4\right]\right. \nonumber \\ 
  &\qquad+\left. \Exp\left[\left|\frac1N\sum_{l=1}^{N_i}\frac{\delta_i}{1+p_{il}\delta_i} - \frac1N\sum_{l=1}^{N_i}\frac{\w_{ik}^\herm \H_i^\herm\left(\B_{(i,k)}-z\I_N\right)^{-1}\H_i \w_{ik}}{1+p_{il}\w_{ik}^\herm \H_i^\herm\left(\B_{(i,k)}-z\I_N\right)^{-1}\H_i \w_{ik}} \right|^4\right] \right] \\
  =&\quad 8 \left[\Exp\left[\left|\frac1N\sum_{l=1}^{N_i}\frac{d_i}{1+p_{il}d_i} - \frac1N\sum_{l=1}^{N_i}\frac{\delta_i}{1+p_{il}\delta_i}\right|^4\right] \right. \nonumber \\ 
  &\qquad+ \left. \Exp\left[\left|\frac 1N\sum_{l=1}^{N_i}\frac{\delta_i - \w_{ik}^\herm \H_i^\herm\left(\B_{(i,k)}-z\I_N\right)^{-1}\H_i \w_{ik}}{(1+p_{il}\delta_i)(1+p_{il}\w_{ik}^\herm \H_i^\herm\left(\B_{(i,k)}-z\I_N\right)^{-1}\H_i \w_{ik})} \right|^4\right] \right] \\
  \leq&\quad \frac{136C_2}{N^2}\ .
\end{align*}

This ensures that for $z<z_0$,
\begin{equation}
  \label{eq:convdi}
  \frac1N\sum_{l=1}^{N_i}\frac{d_i}{1+p_{il}d_i} - \frac1N\sum_{l=1}^{N_i} \frac{\w_{ik}^\herm \H_i^\herm\left(\B_{(i,k)}-z\I_N\right)^{-1}\H_i \w_{ik}}{1+p_{il}\w_{ik}^\herm \H_i^\herm\left(\B_{(i,k)}-z\I_N\right)^{-1}\H_i \w_{ik}} \asto 0
\end{equation}
irrespectively of the choice of $k$.

Since the function $f:x\mapsto \frac1N\sum_{l=1}^{N_i}\frac{x}{1+p_{il}x}$ is continuous and has positive derivative, it is a one-to-one continuous function. Therefore, for $\B_1,\B_2,\ldots$ a realization such that the convergence of \eqref{eq:convdi} is ensured, we also have by continuity $d_i-\w_{ik}^\herm \H_i^\herm\left(\B_{(i,k)}-z\I_N\right)^{-1}\H_i \w_{ik}\to 0$. Finally,
\begin{equation}\label{eq:diwik}
  d_i-\w_{ik}^\herm \H_i^\herm\left(\B_{(i,k)}-z\I_N\right)^{-1}\H_i \w_{ik}\asto 0.
\end{equation}

Noticing from \eqref{eq:useful_equality} that $d_i=\frac{e_i}{\bar{c}_i-e_i\bar{e}_i}$, we have proved the convergence for $z<z_0$. The Vitali theorem then ensures that the convergence holds true for all $z<0$ since $e_i$ and $\bar{e}_i$ have analytic extensions on a neighborhood of $\RR_-$ (see the proof of Theorem~\ref{th:sumRWT}, Step 1).

Since the quantities $d_i$ and $\w_{ik}^\herm \H_i^\herm\left(\B_{(i,k)}-z\I_N\right)^{-1}\H_i \w_{ik}$ are uniformly bounded for all $N$ (a result that holds surely since we assumed the $\H_i$ deterministic), the dominated convergence theorem also ensures that the convergence holds in the first mean.

In order to prove Corollary \ref{cor:MMSEa} in the almost sure form, we simply invoke the continuous mapping theorem \cite[Theorem 2.3]{vdv} for the function $\phi:x\mapsto \frac1N\sum_{k=1}^K\sum_{i=1}^{n_k}\log(1+p_{ik}x)$ on the convergence \eqref{eq:diwik}. The convergence in the mean sense is obtained using the boundedness of $d_i$ and $\w_{ik}^\herm \H_i^\herm\left(\B_{(i,k)}-z\I_N\right)^{-1}\H_i \w_{ik}$ uniformly on $N$ and hence the boundedness of their image by $\phi$. The dominated convergence theorem then gives the result.

\section{Proof of Theorem~\ref{th:fundeq}}
\label{app:proofs}
\label{app:proof_fundeq}

It was shown in \eqref{eq:uniquesoly} that, for any fixed $b_k({\sigma^2})\ge0$, the following equation in $\bar{b}_k({\sigma^2})$:
\begin{align*}
 \bar{b}_k({\sigma^2}) = \frac1N\trace\Pm_k\Big( b_k({\sigma^2})\Pm_k + \LSB\bar{c}_k-b_k({\sigma^2})\bar{b}_k({\sigma^2})\RSB\Id_{n_k}\Big)^{-1}
\end{align*}
has a unique solution, satisfying $0\leq \bar{b}_k({\sigma^2})< c_k\bar{c}_k/b_k({\sigma^2})$. Thus, $\bar{b}_k({\sigma^2})$ is uniquely determined by $b_k({\sigma^2})$. 
Consider now the following functions for $k\in\{1,\dots,K\}$ and ${\sigma^2}>0$:
\begin{align*}
 h_k(x_1,\dots,x_K) \mapsto \frac1N\sum_{j=1}^{N_k}\frac{\zeta_{kj}({\sigma^2})}{1+\bar{b}_k\zeta_{kj}({\sigma^2})}
\end{align*}
where $\bar{b}_k\in[0,c_k\bar{c}_k/x_k)$ and $\zeta_{kj}({\sigma^2})\ge 0$ are the unique solutions to the following fixed-point equations:
\begin{align}\label{eq:bare}
 \bar{b}_k & = \frac1N\trace\Pm_k\Big( x_k\Pm_k + \LSB\bar{c}_k-x_k\bar{b}_k\RSB\Id_{n_k}\Big)^{-1}\\\label{eq:T}
\zeta_{kj}({\sigma^2}) &=\frac1N\trace\Rm_{kj} \LB\frac{1}{N}\sum_{k=1}^K\sum_{j=1}^{N_k}\frac{\bar{b}_k\Rm_{k,j}}{1+\bar{b}_k\zeta_{kj}({\sigma^2})}+{\sigma^2}\Id_N\RB^{-1}.
\end{align}
Similar to the proof of Theorem~\ref{th:fundamental_eq}, it is now sufficient to prove that the $K$-variate function $\hv: (x_1,\dots,x_K)\mapsto (h_1,\ldots,h_K)$ is a standard function and to apply Theorem~\ref{th:standardfunctions} to conclude on the existence and uniqueness of a solution to $x_k = h_k(x_1,\dots,x_K)$ for all $k$. The associated fixed-point algorithm follows the recursive equations
\begin{align*}
 x_k^{(t+1)} = h_k(x_1^{(t)},\dots,x_K^{(t)}),\qquad k=1,\dots,K
\end{align*}
for $t\ge0$ and for any set of initial values $x_1^{(0)},\dots,x_K^{(0)}>0$, which then converge, as $t\to\infty$, to the fixed-point.

Showing positivity is straightforward: For ${\sigma^2}>0$, we have $\zeta_{kj}({\sigma^2})> 0$ by Theorem~\ref{th:detequcorr} in Appendix~\ref{sec:appendixB} and $\bar{b}_k\ge0$ by its definition. Thus, $h_k(x_1,\dots,x_K)> 0$ for all $x_1,\ldots,x_K>0$.

To prove monotonicity of $h_k(x_1,\dots,x_K)$, we first recall the following result from \eqref{eq:propertyxibxi}. Let $x_k > x_k'$, and consider $\bar{b}_k$ and $\bar{b}_k'$ the corresponding solutions to \eqref{eq:bare}. Then,
\begin{align}
	\label{lem:gk}
 \text{(i)}\ \ \bar{b}_k<\bar{b}_k' \qquad\qquad \text{(ii)}\ \  x_k\bar{b}_k>x_k'\bar{b}_k'.
\end{align}

We now prove a further result. Let ${\sigma^2}>0$ and assume $\bar{b}_k > \bar{b}_k'$. Consider $\zeta_{kj}({\sigma^2})$ and $\zeta_{kj}'({\sigma^2})$ as the unique solutions to \eqref{eq:T} for $\bar{b}_k$ and $\bar{b}_k'$, respectively. Then,
\begin{align}
\label{lem:gkbar}
 \text{(i)}\ \ \zeta_{kj}({\sigma^2})\leq \zeta'_{kj}({\sigma^2}) \qquad\qquad \text{(ii)}\ \ \bar{b}_k\zeta_{kj}({\sigma^2})>\bar{b}_k'\zeta'_{kj}({\sigma^2}).
\end{align}

\begin{IEEEproof}
The proof is based on the consideration of an extended version of the random matrix model assumed in Theorem~\ref{th:detequcorr}. Let us consider the following random matrices $\Hm_k^L\in\CC^{LN\times LN_k}$, given as
\begin{align}\label{eq:extended_model}
 \Hm_k^L = \frac{1}{\sqrt{LN}}\LSB\LB\Rm^L_{k1}\RB^{\frac12}\Zm^L_{k1},\dots,\LB\Rm^L_{kN_k}\RB^{\frac12}\Zm^L_{kN_k}\RSB
\end{align}
where $\Rm^L_{kj}=\diag(\Rm_{kj},\dots,\Rm_{kj})\in\CC^{LN\times LN}$ are block-diagonal matrices consisting of $L$ copies of the matrix $\Rm_{kj}$ and $\Zm^L_{kj}\in\CC^{LN\times L}$ are random matrices composed of i.i.d.\@ entries with zero mean, unit variance and finite moment of order $4+\epsilon$, for some $\epsilon>0$. 
We define the following matrices which will be of repeated use:
\begin{align*}
\tilde{\Bm}^L&=\sum_{k=1}^K\bar{b}_k\Hm_k^L\LB\Hm_k^L\RB\htp, \qquad \tilde{\Bm'}^L=\bar{b}_k'\Hm_k^L\LB\Hm_k^L\RB\htp + \sum_{l=1,l\ne k}^K\bar{b}_l\Hm_l^L\LB\Hm_l^L\RB\htp\\
\Qm&=\LB\tilde{\Bm}^L +{\sigma^2}\Id_{NL}\RB^{-1}, \qquad \Qm'=\LB\tilde{\Bm'}^L +{\sigma^2}\Id_{NL}\RB^{-1}.
\end{align*}
One can verify from Theorem~\ref{th:detequcorr} that for any fixed $N,N_1,\dots,N_K$, the following limit holds:
\begin{align*}
 \frac{1}{LN}\trace\Rm_{kj}^L\LB\tilde{\Bm}^L +{\sigma^2}\Id_{NK}\RB^{-1}  \xrightarrow[L\to\infty]{\text{a.s}}\zeta_{kj}({\sigma^2}).
\end{align*}
Thus, any properties of the random quantities on the left-hand side of the previous equation also hold for the deterministic quantities $\zeta_{kj}({\sigma^2})$. We will exploit this fact for the termination of the proof.
The matrices $\tilde{\Bm}^L$ and $ \tilde{\Bm'}^L$ differ only by $\bar{b}_k$. This assumption will be sufficient for the proof since the case $\bar{b}_l>\bar{b}_l'$ for $l\in\{1,\dots,K\}$ follows by simple iteration of the case $\bar{b}_l=\bar{b}_l'$ for $l\neq k$ and $\bar{b}_k>\bar{b}_k'$.

To prove (i), it is now sufficient to show that, for any $L$,
\begin{align*}
 \frac1N\trace\Rm^L_{k,j}\LB\Qm-\Qm'\RB <0.
\end{align*}
By Lemma~\ref{lem:traceinequ}, this is equivalent to proving $\LB\Qm\RB^{-1}-\LB\Qm'\RB^{-1}\succ0$, which is straightforward since
\begin{align*}
 \LB\Qm\RB^{-1}-\LB\Qm'\RB^{-1} &= \tilde{\Bm}^L - \tilde{\Bm'}^L = (\bar{b}_k-\bar{b}_k')\Hm_k^L\LB\Hm_k^L\RB\htp \succ 0.
\end{align*}
Thus,
\begin{align*}
	\frac1{NL}\trace\Rm^L_{k,j}\LB\Qm-\Qm'\RB \xrightarrow[L\to\infty]{\text{a.s}}\zeta_{kj}({\sigma^2})-\zeta'_{kj}({\sigma^2})\leq 0
\end{align*}
since $\zeta_{kj}({\sigma^2})$ and $\zeta'_{kj}({\sigma^2})$ do not depend on $L$.

For (ii), we need to show that
\begin{align*}
 \bar{b}_k\frac{1}{LN}\trace\Rm_{kj}^L\Qm - \bar{b}_k'\frac{1}{LN}\trace\Rm_{kj}^L\Qm'>0.
\end{align*}
Similarly to the previous part of the proof, it is sufficient to show that $\LB\bar{b}_k\Qm\RB^{-1}-\LB\bar{b}_k'\Qm'\RB^{-1}\prec0$. Hence,
\begin{align*}
 \LB\bar{b}_k\Qm\RB^{-1}-\LB\bar{b}_k'\Qm'\RB^{-1} &= \frac{1}{\bar{b}_k}\LB\tilde{\Bm}^L+{\sigma^2}\Id_{NL}\RB - \frac{1}{\bar{b}'_k}\LB\tilde{\Bm'}^L+{\sigma^2}\Id_{NL}\RB\\
&= {\sigma^2}\LB\frac{1}{\bar{b}_k}-\frac{1}{\bar{b}'_k}\RB\Id_{NL} + \LB\frac{1}{\bar{b}_k}-\frac{1}{\bar{b}'_k}\RB\sum_{l=1,l\ne k}^K\bar{b}_l\Hm_l^L\LB\Hm_l^L\RB\htp\\
&\prec 0
\end{align*}
since ${\sigma^2}>0$, $\bar{b}_k>\bar{b}_k'$ and $\bar{b}_l\ge0$ for all $l$.
\end{IEEEproof}\vspace{10pt}

Consider now $(x_1,\dots,x_K)$ and $(x_1',\dots,x_K')$, such that $x_k>x_k'\ \forall k$, and denote by $(\bar{b}_1,\dots,\bar{b}_K$) and $(\bar{b}'_1,\dots,\bar{b}'_K)$ the corresponding solutions to \eqref{eq:bare}. Denote by $\zeta_{kj}({\sigma^2})$ and $\zeta'_{kj}({\sigma^2})$ the unique solutions to \eqref{eq:T} for $(\bar{b}_1,\dots,\bar{b}_K)$ and $(\bar{b}'_1,\dots,\bar{b}'_K)$, respectively.
It follows from \eqref{lem:gk} that $\bar{b}_k<\bar{b}_k'\ \forall k$. Equation~\eqref{lem:gkbar} now implies that $\zeta_{kj}({\sigma^2})\geq \zeta'_{kj}({\sigma^2})$ and $\bar{b}_k\zeta_{kj}({\sigma^2})<\bar{b}_k'\zeta_{kj}'({\sigma^2})$. Combining these results yields
\begin{align*}
 h_k(x_1,\dots,x_K) = \frac1N\sum_{j=1}^{N_k}\frac{\zeta_{kj}({\sigma^2})}{1+\bar{b}_k\zeta_{kj}({\sigma^2})} > \frac1N\sum_{j=1}^{N_k}\frac{\zeta'_{kj}({\sigma^2})}{1+\bar{b}_k'\zeta'_{kj}({\sigma^2})} =h_k(x_1',\dots,x_K')
\end{align*}
which proves monotonicity.

To prove scalability, let $\alpha>1$, and consider the following difference:
\begin{align*}
 \alpha h_k(x_1,\dots,x_K) -  h_k(\alpha x_1,\dots,\alpha x_K) &= \frac1N \sum_{j=1}^{N_k} \frac{\alpha \zeta_{kj}({\sigma^2})}{1+\bar{b}_k\zeta_{kj}({\sigma^2})}-\frac{\zeta^{(\alpha)}_{kj}({\sigma^2})}{1+\bar{b}_k^{(\alpha)}\zeta_{kj}^{(\alpha)}({\sigma^2})}\\
&= \frac1N \sum_{i=1}^{N_k} \frac{\LSB\alpha \zeta_{kj}({\sigma^2}) - \zeta_{kj}^{(\alpha)}({\sigma^2})\RSB+ \zeta_{kj}({\sigma^2})  \zeta_{kj}^{(\alpha)}({\sigma^2})\LSB\alpha\bar{b}_k^{(\alpha)} - \bar{b}_k\RSB}{\LSB1+\bar{b}_k  \zeta_{kj}({\sigma^2})\RSB\LSB1+\bar{b}_k^{(\alpha)} \zeta_{kj}^{(\alpha)}({\sigma^2})\RSB}
 \end{align*}
where we have denoted by $\bar{b}_k^{(\alpha)}$ the solution to \eqref{eq:bare} with $x_k$ replaced by $\alpha x_k$  and by $\zeta_{kj}^{(\alpha)}({\sigma^2})$ the solution to \eqref{eq:T} for  $\bar{b}_k^{(\alpha)}$. We have from \eqref{lem:gk}-(i) that $\bar{b}_k^{(\alpha)}<\bar{b}_k$ and from \eqref{lem:gk}-(ii) that
\begin{align}\label{eq:inqu1}
 \alpha x_k \bar{b}_k^{(\alpha)} > x_k \bar{b}_k \Longleftrightarrow \alpha \bar{b}_k^{(\alpha)} -\bar{b}_k>0.
\end{align}
It remains now to show that also $\alpha \zeta_{kj}({\sigma^2})-\zeta_{kj}^{(\alpha)}({\sigma^2})>0$. 
To this end, consider the following difference:
\begin{align}\nonumber
 \alpha \zeta_{kj}({\sigma^2})-\zeta_{kj}^{(\alpha)}({\sigma^2})& = \frac1N\trace\Rm_{kj}\LB\alpha\Tm({\sigma^2})-\Tm^{(\alpha)}({\sigma^2})\RB
\end{align}
where 
\begin{align*}
 \Tm({\sigma^2})&=\LB\frac{1}{N}\sum_{k=1}^K\sum_{j=1}^{N_k}\frac{\bar{b}_k\Rm_{k,j}}{1+\bar{b}_k\zeta_{kj}({\sigma^2})}+{\sigma^2}\Id_N\RB^{-1}\\
\Tm^{(\alpha)}({\sigma^2})&=\LB\frac{1}{N}\sum_{k=1}^K\sum_{j=1}^{N_k}\frac{\bar{b}_k^{(\alpha)}\Rm_{k,j}}{1+\bar{b}_k^{(\alpha)}\zeta_{kj}^{(\alpha)}({\sigma^2})}+{\sigma^2}\Id_N\RB^{-1}.
\end{align*}
By Lemma~\ref{lem:traceinequ}, it is now sufficient to show that $\LB\Tm^{(\alpha)}(z)\RB^{-1}\succ\LB\alpha\Tm(z)\RB^{-1}$. Write therefore
\begin{align*}
&\ \LB\Tm^{(\alpha)}({\sigma^2})\RB^{-1}- \LB\alpha\Tm({\sigma^2})\RB^{-1}\\=\ &\ {\sigma^2}\LB1-\frac1\alpha\RB\Id_N +\frac1N\sum_{k=1}^K\sum_{j=1}^{N_k}\frac{\LSB\alpha\bar{b}_k^{(\alpha)}-\bar{b}_k\RSB+\bar{b}_k^{(\alpha)}\bar{b}_k\LSB\alpha \zeta_{kj}({\sigma^2})-\zeta_{kj}^{(\alpha)}({\sigma^2})\RSB}{\alpha\LSB1+\bar{b}_k \zeta_{kj}({\sigma^2})\RSB\LSB1+\bar{b}_k^{(\alpha)}\zeta_{kj}^{(\alpha)}({\sigma^2})\RSB}\Rm_{kj}.
\end{align*}
The first summand is positive definite since ${\sigma^2}>0$ and $\alpha>1$. All other terms are also positive definite since $\alpha\bar{b}_k^{(\alpha)}-\bar{b}_k>0$ from \eqref{eq:inqu1} and $\alpha\bar{b}_k^{(\alpha)}\bar{b}_k\zeta_{kj}({\sigma^2})>\bar{b}_k\bar{b}_k^{(\alpha)}\zeta_{kj}^{(\alpha)}({\sigma^2})$, since $\alpha\bar{b}_k^{(\alpha)}>\bar{b}_k$ and $\bar{b}_k\zeta_{kj}({\sigma^2})>\bar{b}_k^{(\alpha)}\zeta_{kj}^{(\alpha)}({\sigma^2})$ by \eqref{lem:gkbar}-(ii) and \eqref{lem:gk}-(i). Since the sum of positive definite matrices is also positive definite, we have $\alpha \zeta_{kj}({\sigma^2})-\zeta_{kj}^{(\alpha)}({\sigma^2})>0$. This terminates the proof of scalability.

Thus, we have shown $\hv: (x_1,\dots,x_K)\mapsto (h_1,\ldots,h_K)$ to be a standard function.
Moreover, from the fixed-point algorithms described in Theorem \ref{th:fundamental_eq} and Theorem~\ref{th:detequcorr}, and the fact that the $\zeta_{kj}$ are bounded (and therefore there exist $x_1,\ldots,x_K$ such that $x_i\geq h_i(x_1,\ldots,x_K)$ for each $i$), we have the following algorithm to compute $\bar{b}_k$ and $\zeta_{kj}({\sigma^2})$:
\begin{align*}
 \bar{b}_k = \lim_{t\to\infty}  \bar{b}_k^{(t)}, \qquad\zeta_{kj}({\sigma^2})= \lim_{t\to\infty}  \zeta_{kj}^{(t)}({\sigma^2})
\end{align*}
where
\begin{align*}
 \bar{b}_k^{(t)} & = \frac1N\trace\Pm_k\Big( x_k\Pm_k + \LSB\bar{c}_k-x_k\bar{b}_k^{(t-1)}\RSB\Id_{n_k}\Big)^{-1}\\
 \zeta_{kj}^{(t)}({\sigma^2}) &= \frac1N\trace\Rm_{kj} \LB\frac{1}{N}\sum_{k=1}^K\sum_{j=1}^{N_k}\frac{\bar{b}_k\Rm_{k,j}}{1+\bar{b}_k\zeta_{kj}^{(t-1)}({\sigma^2})}+{\sigma^2}\Id_N\RB^{-1}
\end{align*}
and $\bar{b}_k^{(0)}$ can take any value in $[0,c_k\bar{c}_k/x_k)$ and $\zeta_{kj}^{(0)}({\sigma^2})=1/{\sigma^2}$ for all $k,j$.

\section{Proof of Theorem~\ref{th:mutinf}}
\label{app:proof_mutinf}
 We begin by proving the following result: 
\begin{align}\label{eq:conv1}
 \max_k|\bar{a}_k({\sigma^2}) - \bar{b}_k({\sigma^2})| &\asto0\\\label{eq:conv2}
 \max_k|a_k({\sigma^2}) - b_k({\sigma^2})| &\asto0
\end{align}
where $\bar{a}_k({\sigma^2})$, $a_k({\sigma^2})$ are defined in Theorem~\ref{th:fundequdet} and $\bar{b}_k({\sigma^2})$, $b_k({\sigma^2})$ are defined in Theorem~\ref{th:fundeq}, assuming that the matrices $\Hm_k$ are random and modeled as described in \eqref{eq:channelmodel}.
For notational simplicity, we will drop from now on the dependence on ${\sigma^2}$. From standard lemmas of matrix analysis, we have
\begin{align}\nonumber
 a_k &= \frac1N\trace \Hm_k\Hm_k\htp\LB\sum_{i=1}^K \bar{a}_{i} \Hm_i\Hm_i\htp +{\sigma^2}\Id_N \RB^{-1}\\\nonumber
&= \frac1N\sum_{j=1}^{N_k} \hv_{kj}\htp\LB\sum_{i=1}^K \bar{a}_{i} \Hm_i\Hm_i\htp +{\sigma^2}\Id_N \RB^{-1} \hv_{kj}\\\nonumber
&=\frac1N\sum_{j=1}^{N_k} \frac{\hv_{kj}\htp\LB\sum_{i=1}^K \bar{a}_{i} \Hm_i\Hm_i\htp - \bar{a}_{k}\hv_{kj}\hv_{kj}\htp +{\sigma^2}\Id_N \RB^{-1} \hv_{kj}}{1+\bar{a}_{k}\hv_{kj}\htp\LB\sum_{i=1}^K \bar{a}_{i} \Hm_i\Hm_i\htp- \bar{a}_{k}\hv_{kj}\hv_{kj}\htp +{\sigma^2}\Id_N \RB^{-1} \hv_{kj}}
\end{align}
where the last step follows from Lemma~\ref{lem:inversion}. If $\bar{a}_i$ were not dependent on $\hv_{kj}$, we could now simply proceed by applying Lemma~\ref{lem:trace} to the individual quadratic forms, i.e.:
\begin{align*}
 \hv_{kj}\htp\LB\sum_{i=1}^K \bar{a}_{i} \Hm_i\Hm_i\htp - \bar{a}_{k}\hv_{kj}\hv_{kj}\htp +{\sigma^2}\Id_N \RB^{-1} \hv_{kj} \asymp \frac1N\trace\Rm_{kj}\LB\sum_{i=1}^K \bar{a}_{i} \Hm_i\Hm_i\htp - \bar{a}_{k}\hv_{kj}\hv_{kj}\htp +{\sigma^2}\Id_N \RB^{-1}
\end{align*}
where, in the following, for $\{a_N\}$ and $\{b_N\}$ two sequences of random variables, we denote $a_N\asymp b_N$ the equivalence relation $a_N-b_N\asto0$ for $N\to\infty$.

However, in order to show that this step is correct, in a similar manner as in the proof of Theorem \ref{th:sumRWT}, we need the following intermediate arguments. Define $\bar{a}_{i,kj}$ and $a_{i,kj}$ as the unique solutions to the following fixed-point equations:
\begin{align*}
 a_{i,{kj}} &= \frac1N\trace \Hm_{i,kj}\Hm_{i,kj}\htp\LB\sum_{l=1}^K \bar{a}_{l,kj} \Hm_{l,kj}\Hm_{l,kj}\htp +{\sigma^2}\Id_N \RB^{-1}\\
\bar{a}_{i,kj} &= \frac1N\trace\Pm_i\LB a_{i,{kj}}\Pm_i + \LSB\bar{c}_k -  a_{i,{kj}} \bar{a}_{i,{kj}}\Id_{n_i}\RSB\RB^{-1}
\end{align*}
for $i\in\{1,\dots,K\}$, where
\begin{align*}
 \Hm_{i,kj} = \begin{cases}
               \Hm_i, & k\ne i\\
	      \left[ \hv_{k1}\cdots\hv_{k{j-1}}\hv_{kj+1}\cdots \hv_{kN_i}\right], & k=i
              \end{cases}.
\end{align*}
Thus, $\bar{a}_{i,kj}$ and $a_{i,kj}$ are independent of $\hv_{kj}$. Following similar steps as in the proof of Theorem~\ref{th:sumRWT} (Step 3), one can show that for $i\in\{1,\dots,K\}$ and all $k,j$,
\begin{align}\label{eq:intconv}
 a_{i,{kj}} - a_{i} \asto 0,\quad  \bar{a}_{i,{kj}} - \bar{a}_{i} \asto 0. 
\end{align}
Thus, we have

\begin{align}\nonumber
&\ \frac1N\sum_{j=1}^{N_k} \frac{\hv_{kj}\htp\LB\sum_{i=1}^K \bar{a}_{i} \Hm_i\Hm_i\htp - \bar{a}_{k}\hv_{kj}\hv_{kj}\htp +{\sigma^2}\Id_N \RB^{-1} \hv_{kj}}{1+\bar{a}_{k}\hv_{kj}\htp\LB\sum_{i=1}^K \bar{a}_{i} \Hm_i\Hm_i\htp- \bar{a}_{k}\hv_{kj}\hv_{kj}\htp +{\sigma^2}\Id_N \RB^{-1} \hv_{kj}}\\\nonumber
\overset{\text{(a)}}{\asymp}&\ \frac1N\sum_{j=1}^{N_k} \frac{\hv_{kj}\htp\LB\sum_{i=1}^K \bar{a}_{i,kj} \Hm_i\Hm_i\htp - \bar{a}_{k,kj}\hv_{kj}\hv_{kj}\htp +{\sigma^2}\Id_N \RB^{-1} \hv_{kj}}{1+\bar{a}_{k}\hv_{kj}\htp\LB\sum_{i=1}^K \bar{a}_{i,kj} \Hm_i\Hm_i\htp- \bar{a}_{k,kj}\hv_{kj}\hv_{kj}\htp +{\sigma^2}\Id_N \RB^{-1} \hv_{kj}}\\\nonumber
\overset{\text{(b)}}{\asymp}&\ \frac1N\sum_{j=1}^{N_k} \frac{\frac1N\trace\Rm_{kj}\LB\sum_{i=1}^K \bar{a}_{i,kj} \Hm_i\Hm_i\htp - \bar{a}_{k,kj}\hv_{kj}\hv_{kj}\htp +{\sigma^2}\Id_N \RB^{-1} }{1+\bar{a}_{k}\frac1N\trace\Rm_{kj}\LB\sum_{i=1}^K \bar{a}_{i,kj} \Hm_i\Hm_i\htp- \bar{a}_{k,kj}\hv_{kj}\hv_{kj}\htp +{\sigma^2}\Id_N \RB^{-1}}\\\nonumber
\overset{\text{(c)}}{\asymp}&\  \frac1N\sum_{j=1}^{N_k} \frac{\frac1N\trace\Rm_{kj}\LB\sum_{i=1}^K \bar{a}_{i} \Hm_i\Hm_i\htp +{\sigma^2}\Id_N \RB^{-1} }{1+\bar{a}_{k}\frac1N\trace\Rm_{kj}\LB\sum_{i=1}^K \bar{a}_{i} \Hm_i\Hm_i\htp +{\sigma^2}\Id_N \RB^{-1}}\\\label{eq:ekconv}
\overset{\text{(d)}}{\asymp}&\ \frac1N\sum_{j=1}^{N_k} \frac{\frac1N\trace\Rm_{kj}\bar{\Tm} }{1+\bar{a}_{k}\frac1N\trace\Rm_{kj}\bar{\Tm}}
\end{align}
where (a) follows from \eqref{eq:intconv}, (b) follows from Lemma~\ref{lem:trace} and Lemma~\ref{lem:convergence_ratios}, (c) is again due to \eqref{eq:intconv} and Lemma~\ref{lem:rank1perturbation}, and (d) follows from an application of Theorem~\ref{th:detequcorr}, where we have defined
\begin{align*}
 \bar{\Tm} = \LB\frac{1}{N}\sum_{k=1}^K\sum_{j=1}^{N_k}\frac{\bar{a}_{k}\Rm_{kj}}{1+\bar{a}_{k}\frac1N\trace\Rm_{kj}\bar{\Tm}}+{\sigma^2}\Id_N\RB^{-1}.
\end{align*}
Note again that Theorem~\ref{th:detequcorr} cannot be directly applied here since the quantities $\bar{a}_i$ depend on the matrices $\Hm_i$. However, it is immediate to show that the result extends in this case, by replacing $\bar{a}_i$ by $\bar{a}_{i,kj}$ at each necessary step of the proof. 

Hence, we can write
\begin{align*}
  a_k & =\frac1N\trace \Hm_k\Hm_k\htp\LB\sum_{i=1}^K \bar{a}_i \Hm_i\Hm_i\htp +{\sigma^2}\Id_N \RB^{-1}= \frac1N\sum_{j=1}^{N_k} \frac{\frac1N\trace\Rm_{kj}\bar{\Tm} }{1+\bar{a}_k\frac1N\trace\Rm_{kj}\bar{\Tm}} + \epsilon_{N,k}
\end{align*}
for some sequences of reals $\epsilon_{N,k}$, satisfying $\epsilon_{N,k}\to0$.

Recall now the following definitions for $k=1,\dots,K$:
\begin{align*}
 a_k & = \frac1N\sum_{j=1}^{N_k} \frac{\frac1N\trace\Rm_{kj}\bar{\Tm} }{1+\bar{a}_k\frac1N\trace\Rm_{kj}\bar{\Tm}} + \epsilon_{N,k}\\
b_k & = \frac1N\sum_{j=1}^{N_k} \frac{\frac1N\trace\Rm_{kj}\Tm }{1+\bar{b}_k\frac1N\trace\Rm_{kj}\Tm} \\
\bar{a}_k & = \frac1N\sum_{j=1}^{n_k}\frac{p_{kj}}{\bar{c}_k - a_k\bar{a}_k + a_k p_{kj}},\qquad 0\le\bar{a}_k<c_k\bar{c}_k/a_k\\
\bar{b}_k & = \frac1N\sum_{j=1}^{n_k}\frac{p_{kj}}{\bar{c}_k - b_k\bar{b}_k + b_k p_{kj}},\qquad 0\le\bar{b}_k<c_k\bar{c}_k/b_k
\end{align*}
where 
\begin{align*}
 \bar{\Tm} &= \LB\frac{1}{N}\sum_{k=1}^K\sum_{j=1}^{N_k}\frac{\bar{a}_{k}\Rm_{kj}}{1+\bar{f}_{N,k}\frac1N\trace\Rm_{kj}\bar{\Tm}}+{\sigma^2}\Id_N\RB^{-1}\\
\Tm &= \LB\frac{1}{N}\sum_{k=1}^K\sum_{j=1}^{N_k}\frac{\bar{b}_k\Rm_{kj}}{1+\bar{b}_k\frac1N\trace\Rm_{kj}\Tm}+{\sigma^2}\Id_N\RB^{-1}.
\end{align*}

Denote $P = \max_k\{\lim\sup\lVert\Pm_k\rVert\}$, $R = \max_m\{\lim\sup\lVert\tilde{\Rm}_{m}\rVert\}$, $c_+ = \max_k\{\lim\sup c_k\}$ and $\bar{c}_- = \min_k\{\lim\inf\bar{c}_k\}$, $\bar{c}_+ = \max_k\{\lim\sup\bar{c}_k\}$. 
Since we are interested in the asymptotic limit $N\to\infty$, we assume from the beginning that $N$ is sufficiently large, so that the following inequalities hold for all $k$:
\begin{align*}
 c_k \leq c_+,\quad \bar{c}_- \leq \bar{c}_k \leq \bar{c}_+,\quad \lVert\Pm_k\rVert\leq P,\quad \lVert\Rm_{kj}\rVert\leq R.
\end{align*}
We then have the following properties:
\begin{align}\label{eq:inequcase1}
 \bar{a}_k \leq \frac{P}{(1-c_+)\bar{c}_-},\quad \bar{b}_k \leq \frac{P}{(1-c_+)\bar{c}_-},\quad b_k\bar{b}_k < c_+\bar{c}_+, \quad a_k\bar{a}_k < c_+\bar{c}_+.
\end{align}
For notational simplicity, we define the following quantities:
\begin{align*}
 \xi = \max_k|a_k-b_k|,\qquad  \bar{\xi} = \max_k|\bar{a}_k-\bar{b}_k|.
\end{align*}
We will show in the sequel that $\xi\asto 0$ and $ \bar{\xi}\asto 0$ as $N\to\infty$.

Consider first the following difference:
\begin{align*}
 \sup_{k,j}\left|\frac1N\trace\Rm_{kj}\LB\Tm-\bar{\Tm}\RB \right|&= \sup_{k,j}\left|\frac1N\trace\Rm_{kj}\Tm\LB\frac1N\sum_{l=1}^K\sum_{m=1}^{N_l}\frac{\bar{a}_l\Rm_{lm}}{1+\bar{a}_l\frac1N\trace\Rm_{lm}\bar{\Tm}}-\frac{\bar{b}_l\Rm_{lm}}{1+\bar{b}_l\frac1N\trace\Rm_{lm}\bar{\Tm}}\RB\bar{\Tm}\right|\\
&= \sup_{k,j}\left|\frac1N\sum_{l=1}^K\sum_{m=1}^{N_l}\frac{\bar{a}_l-\bar{b}_l + \bar{a}_l\bar{b}_l\LB\frac1N\trace\Rm_{lm}\Tm-\frac1N\trace\Rm_{lm}\bar{\Tm}\RB}{\LB1+\bar{a}_l\frac1N\trace\Rm_{lm}\bar{\Tm}\RB\LB1+\bar{b}_l\frac1N\trace\Rm_{lm}\bar{\Tm}\RB} \frac1N\trace\Rm_{kj}\bar{\Tm}\Rm_{lm}\Tm \right|\\
&\leq \frac{R^2}{{\sigma^4}}K\max_k\bar{c}_k\LSB \max_{k}|\bar{a}_k-\bar{b}_k|+\max_k|\bar{a}_k\bar{b}_k|\sup_{k,j}\left|\frac1N\trace\Rm_{kj}\LB\Tm-\bar{\Tm}\RB\right|\RSB\\
&\leq \frac{R^2}{{\sigma^4}}K\bar{c}_+\LSB\bar{\xi}+\frac{P^2}{(1-c_+)^2\bar{c}_-^2}\sup_{k,j}\left|\frac1N\trace\Rm_{kj}\LB\Tm-\bar{\Tm}\RB\right|\RSB
\end{align*}
where the first equality follows from Lemma~\ref{lem:resolvent}. Rearranging the terms yields:
\begin{align}\label{eq:inequ1}
 \sup_{k,j}\left|\frac1N\trace\Rm_{kj}\LB\Tm-\bar{\Tm}\RB\right| \leq  \frac{P^2K\bar{c}_+}{{\sigma^4} - \frac{R^2P^2}{(1-c_+)^2\bar{c}_-^2}}\ \bar{\xi}
\end{align}
for ${\sigma^2}>\frac{RP}{(1-c_+)\bar{c}_-} $.

Consider now the term $\xi=\max_k|a_k-b_k|$:
\begin{align}\nonumber
 \xi &= \max_k\left|\frac1N\sum_{j=1}^{N_k}\frac{\frac1N\trace\Rm_{kj}\LB\bar{\Tm}-\Tm\RB + (\bar{b}_k-\bar{a}_k)\frac1N\trace\Rm_{kj}\frac1N\trace\Rm_{kj}\bar{\Tm}}{\LB1+\bar{a}_k\frac1N\trace\Rm_{kj}\bar{\Tm}\RB\LB1+\bar{b}_k\frac1N\trace\Rm_{kj}\Tm\RB} +\epsilon_{N,k} \right|\\\nonumber
&\leq \bar{c}_+\sup_{kj}\left|\frac1N\trace\Rm_{kj}\LB\Tm-\bar{\Tm}\RB\right| + \bar{c}_+\frac{R^2}{{\sigma^4}}\max_k |\bar{a}_k-\bar{b}_k|+\max_k|\epsilon_{N,k}|\\\nonumber
&\leq \frac{P^2K\bar{c}_+^2}{{\sigma^4} - \frac{R^2P^2}{(1-c_+)^2\bar{c}_-^2}} \bar{\xi} + \frac{\bar{c}_+R^2}{{\sigma^4}} \bar{\xi} + \max_k|\epsilon_{N,k}|\\\label{eq:inequalpha}
&= \LSB\frac{P^2K\bar{c}_+^2}{{\sigma^4} - \frac{R^2P^2}{(1-c_+)^2\bar{c}_-^2}} + \frac{\bar{c}_+R^2}{{\sigma^4}} \RSB \bar{\xi} + \max_k|\epsilon_{N,k}|
\end{align}
where the last inequality follows from \eqref{eq:inequ1}.
Similarly, we have for $\bar{\xi}=\max_k|\bar{a}_k-\bar{b}_k|$:
\begin{align*}
 \bar{\xi} & = \max_k\left|\frac1N\sum_{j=1}^{n_k}p_{kj} \frac{a_k\bar{a}_k-b_k\bar{b}_k+p_{kj}(b_k-a_k)}{(\bar{c}_k-a_k\bar{a}_k +a_k p_{kj})(\bar{c}_k-b_k\bar{b}_k +b_k p_{kj})} \right|\\
&\leq \frac1N\sum_{j=1}^{n_k} \frac{p_{kj}^2\max_k|a_k-b_k|}{(1-c_+)^2\bar{c}_-^2} + p_{kj}\frac{\max_k\LSB \bar{a}_k|a_k-b_k\RSB| + \max_k\LSB b_k|\bar{a}_k-\bar{b}_k|\RSB  }{(1-c_+)^2\bar{c}_-^2}\\
&\le\frac{P^2}{(1-c_+)^2\bar{c}_-^2}\LB1 + \frac{1}{(1-c_+)\bar{c}_-}\RB \xi + \frac{PR\bar{c}_+}{{\sigma^2}(1-c_+)^2\bar{c}_-^2}\bar{\xi}.
\end{align*}
Thus, for ${\sigma^2}\ge\max\left\{ \frac{2PR\bar{c}_+}{(1-c_+)^2\bar{c}_-^2},\frac{RP}{(1-c_+)\bar{c}_-}\right\}$, we have
\begin{align}\label{eq:inequalphabar}
 \bar{\xi}\leq \frac{2P^2}{(1-c_+)^2\bar{c}_-^2}\LB1 + \frac{1}{(1-c_+)\bar{c}_-}\RB \xi.
\end{align}
Replacing \eqref{eq:inequalphabar} in \eqref{eq:inequalpha} leads to
\begin{align*}
 \xi &\le\LSB\frac{P^2K\bar{c}_+^2}{{\sigma^4} - \frac{R^2P^2}{(1-c_+)^2\bar{c}_-^2}} + \frac{\bar{c}_+R^2}{{\sigma^4}} \RSB \frac{2P^2}{(1-c_+)^2\bar{c}_-^2}\LB1 + \frac{1}{(1-c_+)\bar{c}_-}\RB \xi + \max_k|\epsilon_{N,k}|.
\end{align*}
For ${\sigma^2}$ sufficiently large, we therefore have
\begin{align*}
0\le\xi \leq C \epsilon_{N,k}  \asto 0
\end{align*}
for some $C>0$. This implies that $\xi\asto 0$ and, by \eqref{eq:inequalphabar}, that $\bar{\xi} \asto 0$ .
Since $a_k,b_k,\bar{a}_k,\bar{b}_k$ have analytic extensions in a neighborhood of $\RR_-$ (see the Proof of Theorem~\ref{th:sumRWT} for similar arguments) on which they are (almost surely) uniformly bounded, we have from Vitali's convergence theorem \cite{TIT39} that the almost sure convergence holds true for all ${\sigma^2}\in\RR_+$. This terminates the proof.

\subsection{Convergence of the mutual information}
Consider now the first term of $V_N({\sigma^2})$ in Theorem~\ref{th:logdetcor}. Due to the convergence of $\bar{a}_k-\bar{b}_k\asto0$ and the almost sure boundedness of the $\H_k\H_k^\herm$ matrices, it follows that $\lVert\sum_{k=1}^K(\bar{a}_k-\bar{b}_k)\Hm_k\Hm_k\htp\rVert\asto0$, and we can immediately conclude, by convergence mapping arguments, that
\begin{align*}
 \frac1N\log\det\LB\Id_N+\frac{1}{{\sigma^2}}\sum_{k=1}^K\bar{a}_k\Hm_k\Hm_k\htp\RB-\frac1N\log\det\LB\Id_N+\frac{1}{{\sigma^2}}\sum_{k=1}^K\bar{b}_k\Hm_k\Hm_k\htp\RB\asto0.
\end{align*}
Applying Corollary~\ref{cor:logdet} to the second term yields
\begin{align}\label{eq:conv3}
	\frac1N\log\det\LB\Id_N+\frac{1}{{\sigma^2}}\sum_{k=1}^K\bar{b}_k\Hm_k\Hm_k\htp\RB - \bar{V}_N({\sigma^2})\asto0.
\end{align}
Consider now $\bar{I}^{(a)}_N({\sigma^2})$ and $\bar{I}^{(b)}_N({\sigma^2})$ as defined in Theorems~\ref{th:logdetdet} and \ref{th:mutinf}. It follows from \eqref{eq:conv1}, \eqref{eq:conv2} and \eqref{eq:conv3}, that
\begin{align*}
	\bar{I}^{(a)}_N({\sigma^2}) - \bar{I}^{(b)}_N({\sigma^2}) \asto0.
\end{align*}
This implies also that
\begin{align}\label{eq:conv4}
	I^{(b)}_N({\sigma^2}) - \bar{I}^{(b)}_N({\sigma^2})\asto0.
\end{align}
To prove convergence in the mean, we can no longer use the fact that $I^{(b)}_N(\sigma^2)$ is bounded for all $N$ as in Appendix~\ref{app:mutual_info}, which is now untrue. Instead, we will use the same arguments as in \cite{HAC07}. Denote 
\begin{align*}
	\quad	m_N^{(b)}(z) = \frac1N\tr(\B_N-z\I_N)^{-1}, \quad \bar{m}_N^{(b)}(z) = \frac1N\tr\left(\frac1N\sum_{k=1}^K\sum_{j=1}^{N_k} \frac{\bar{b}_k(-z)\R_{k,j}}{1+b_k(-z)\zeta_{kj}(-z)}-z\I_N \right)^{-1}
\end{align*}
where $m_N^{(b)}(z)$ is the Stieltjes transform of $\B_N$. It is easy to see that
\begin{align*}
	\Exp I^{(b)}_N({\sigma^2}) - \bar{I}^{(b)}_N({\sigma^2}) &= \int_{{\sigma^2}}^\infty\left(\left[\frac1\omega - \EE m^{(b)}_N(-\omega)\right] - \left[\frac1\omega - \bar{m}^{(b)}_N(-\omega)\right] \right)d\omega.
\end{align*}
We now apply the argument from \cite[pp. 923]{HAC07} which shows that
\begin{align*}
	&\ \left|\int_{{\sigma^2}}^\infty\left(\left[\frac1\omega - \EE m^{(b)}_N(-\omega)\right] - \left[\frac1\omega - \bar{m}^{(b)}_N(-\omega)\right] \right)d\omega \right| \\ 
	\leq&\ \int_{{\sigma^2}}^{\infty} \frac1{\omega^2} \left(\left|\Exp\int_0^\infty t dF^{(b)}_N(t)\right|+\left|\frac1N \tr \left(\frac1N\sum_{k=1}^K\sum_{j=1}^{N_k} \frac{\bar{b}_k(\omega)\R_{k,j}}{1+b_k(\omega)\zeta_{kj}(\omega)} \right) \right|\right)d \omega
\end{align*}
the right-hand side of which exists for all $N$ and is uniformly bounded by $\frac2{{\sigma^2}}(KPR)$. Since $m^{(b)}_N(-\omega) - \bar{m}^{(b)}_N(-\omega)\asto 0$ (as a consequence of the convergence $\bar{a}_k-\bar{b}_k\asto 0$), the boundedness of $m^{(b)}_N(-\omega)$ then ensures (by dominated convergence) that $\EE m^{(b)}_N(-\omega) - \bar{m}^{(b)}_N(-\omega)\to 0$. Since the integrand tends to zero and is summable independently of $N$, the dominated convergence theorem now ensures that
\begin{equation*}
	\Exp I^{(b)}_N({\sigma^2}) - \bar{I}^{(b)}_N({\sigma^2}) \to 0.
\end{equation*}

\begin{IEEEproof}[Proof of Theorem \ref{th:sinr}]
	The proof follows directly from \eqref{eq:conv1}, \eqref{eq:conv2}, and Theorem \ref{th:SINR_MMSE}.
\end{IEEEproof}

\vspace{20pt}\begin{IEEEproof}[Proof of Corollary \ref{cor:MMSE}]
	The almost sure convergence follows directly from Theorem \ref{th:mutinf} and the continuous mapping theorem \cite[Theorem 2.3]{vdv}. For the convergence in mean, note first that, as a standard result of information theory, $I_N^{(b)}(\sigma^2)-R_N^{(b)}(\sigma^2) \ge 0 $ for all $N$.
Consider now the extended matrix model where $\Hm_k^L\in\CC^{LN\times LN_k}$ is defined in \eqref{eq:extended_model}, $\Pm_k^L=\Pm_k\otimes\Id_L\in\CC^{Ln_k\times Ln_k}$ and $\Wm_k^L\in\CC^{L N_k \times Ln_k}$ is constructed from $Ln_k$ columns of a $LN_k\times LN_k$ random unitary matrix. Denote $I_{N,L}^{(b)}(\sigma^2)$ and $R_{N,L}^{(b)}(\sigma^2)$ the associated mutual information and MMSE sum-rate for this channel model. One can verify that for this model and by Theorem~\ref{th:mutinf} and the convergence of $R_N^{(b)}(\sigma^2)-\bar{R}_N^{(b)}(\sigma^2)$ in the almost sure sense, the following holds
\begin{align*}
 I_{N,L}^{(b)}(\sigma^2) \xrightarrow[L\to\infty]{\text{a.s.}} & \bar{I}_{N}^{(b)}(\sigma^2) \\
R_{N,L}^{(b)}(\sigma^2) \xrightarrow[L\to\infty]{\text{a.s.}} & \bar{R}_{N}^{(b)}(\sigma^2).
\end{align*}
Thus,
\begin{align*}
 I_{N,L}^{(b)}(\sigma^2) - R_{N,L}^{(b)}(\sigma^2) &=  I_{N,L}^{(b)}(\sigma^2) - \bar{I}_{N}^{(b)}(\sigma^2) + \bar{I}_{N}^{(b)}(\sigma^2) - \bar{R}_{N}^{(b)}(\sigma^2) + \bar{R}_{N}^{(b)}(\sigma^2)- R_{N,L}^{(b)}(\sigma^2)\\ &\xrightarrow[L\to\infty]{\text{a.s.}} \bar{I}_{N}^{(b)}(\sigma^2) - \bar{R}_{N}^{(b)}(\sigma^2)
\end{align*}
from which we can conclude that $\bar{I}_{N}^{(b)}(\sigma^2) - \bar{R}_{N}^{(b)}(\sigma^2) \ge 0$ for all $N$. Using this result, it follows that
\begin{align*}
	\left|R_N^{(b)}(\sigma^2)- \bar{R}_N^{(b)}(\sigma^2)\right | & \leq I_N^{(b)}(\sigma^2) + \bar{I}_N^{(b)}(\sigma^2) \leq I_N^{(b)}(\sigma^2)-\bar{I}_N^{(b)}(\sigma^2) + 2\sup_N\bar{I}_N^{(b)}(\sigma^2) \triangleq v_N.
\end{align*}
Since $v_N\asto 2\sup_N\bar{I}_N^{(b)}(\sigma^2)<\infty$ and $\mathbb{E} v_N\to 2\sup_N\bar{I}_N^{(b)}(\sigma^2)$ by Theorem~\ref{th:mutinf}, it finally follows from \cite[Problem 16.4 (a)]{billingsley} that 
\begin{align*}
  \mathbb{E}R_N^{(b)}(\sigma^2)- \bar{R}_N^{(b)}(\sigma^2) \to 0.
\end{align*}
\end{IEEEproof}

\section{Fundamental lemmas}\label{sec:fundlemmas}

\begin{lemma}[Defining properties of Stieltjes transforms, Theorem 3.2 in \cite{COUbook}]
	\label{le:properties_ST}
  If $m$ is a function analytic on $\CC^+$ such that $m(z)\in\CC^+$ if $z\in\CC^+$ and 
  \begin{equation}
    \label{eq:limy}
    \lim_{y\to\infty}-{\bf i}y~m({\bf i}y) = 1
  \end{equation}
  then $m$ is the Stieltjes transform of a distribution function $F$ given by 
  \begin{equation*}
	  F(b)-F(a) = \lim_{y\to 0}\frac1{\pi}\int_a^b\Im [m(x+{\bf i}y)]dx.
  \end{equation*}
  If, moreover, $zm(z)\in\CC^+$ for $z\in\CC^+$, then $F(0^-)=0$, in which case $m$ has an analytic continuation on $\CC\setminus\RR_+$.
\end{lemma}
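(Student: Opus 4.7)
The plan is to deduce this lemma from the classical Nevanlinna--Herglotz representation theorem for Pick functions. That theorem asserts that every analytic $m:\mathbb{C}^+\to\overline{\mathbb{C}^+}$ admits a unique representation
\begin{equation*}
m(z)=\alpha z+\beta+\int_{\mathbb{R}}\frac{1+tz}{t-z}\,d\mu(t),\qquad \alpha\geq 0,\ \beta\in\mathbb{R},
\end{equation*}
where $\mu$ is a finite positive Borel measure on $\mathbb{R}$. I would cite this as a known fact (it is proven in \cite{COUbook}) and take it as the starting point.

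The first step is to exploit the normalization $\lim_{y\to\infty}-iy\,m(iy)=1$ to reduce the representation to a Stieltjes integral against a probability measure. Plugging $z=iy$ and separating real and imaginary parts, the real part of $-iy\,m(iy)$ becomes $\alpha y^2+\int\frac{y^2(1+t^2)}{t^2+y^2}\,d\mu(t)$; convergence to a finite limit forces $\alpha=0$, and then monotone convergence identifies the limit as $\int(1+t^2)\,d\mu(t)=1$. A parallel analysis of the imaginary part fixes the constant term. Using the algebraic identity $\frac{1+tz}{t-z}=\frac{1+t^2}{t-z}-t$, one then rewrites
\begin{equation*}
m(z)=\int_{\mathbb{R}}\frac{1}{t-z}\,dF(t),\qquad dF(t):=(1+t^2)\,d\mu(t),
\end{equation*}
with $F$ a probability distribution; this is exactly the Stieltjes transform representation.

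The second step is the Stieltjes--Perron inversion formula. Since $\tfrac{1}{\pi}\mathrm{Im}\,m(x+iy)=\tfrac{1}{\pi}\int\frac{y\,dF(t)}{(t-x)^2+y^2}$ is the Poisson extension of $F$, Fubini gives
\begin{equation*}
\tfrac{1}{\pi}\int_a^b \mathrm{Im}\,m(x+iy)\,dx=\int_{\mathbb{R}}\tfrac{1}{\pi}\Big[\arctan\tfrac{b-t}{y}-\arctan\tfrac{a-t}{y}\Big]\,dF(t),
\end{equation*}
and the bracketed factor converges boundedly to $\mathbf{1}_{(a,b)}(t)+\tfrac12\mathbf{1}_{\{a,b\}}(t)$ as $y\to 0^+$. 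Bounded convergence then yields $F(b)-F(a)$ at continuity points of $F$, which is the claimed inversion.

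For the final assertion, suppose in addition $zm(z)\in\mathbb{C}^+$ on $\mathbb{C}^+$. Writing $z=x+iy$, $m(z)=u+iv$, one has $\mathrm{Im}[zm(z)]=xv+yu>0$. To show $F((-\infty,0))=0$, I would apply the Nevanlinna--Herglotz representation once more, this time to the Pick function $z\mapsto zm(z)$, and compare boundary measures with the representation of $m$ already obtained: the two representations must be consistent, and the relation $zm(z)=\int\frac{t}{t-z}\,dF(t)-F(\mathbb{R})$ ties their representing measures by the formal identity $dF_{zm}(t)=t\,dF(t)$, which is positive only when $F$ is supported in $[0,\infty)$. Hence $F(0^-)=0$, and $m(z)=\int_{\mathbb{R}_+}\frac{1}{t-z}\,dF(t)$ extends analytically to $\mathbb{C}\setminus\mathbb{R}_+$ by inspection of the integrand.

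I expect the main obstacle to be this last step: rigorously deducing $F((-\infty,0))=0$ from the single inequality $\mathrm{Im}[zm(z)]>0$. The cleanest route is the comparison of Nevanlinna representations sketched above, which avoids delicate boundary-value estimates on $\mathrm{Im}\,m(x+iy)$ as $y\to 0^+$ for $x<0$ that a direct inversion-formula approach would require.
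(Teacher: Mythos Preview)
The paper does not give its own proof of this lemma: it is quoted verbatim as ``Theorem 3.2 in \cite{COUbook}'' and used as a black box in the appendix. So there is nothing to compare your argument against within this paper.

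Your outline is the standard route and is essentially correct. A few remarks on the places you flagged or glossed over. In Step~1, once $\alpha=0$ and $\int(1+t^2)\,d\mu=1$ are established, you silently use $\int|t|\,d\mu<\infty$ to absorb the $-t$ term from the identity $\frac{1+tz}{t-z}=\frac{1+t^2}{t-z}-t$; this is fine since $\int t^2\,d\mu<\infty$ gives it by Cauchy--Schwarz, but it is worth stating. The vanishing of the constant $\beta-\int t\,d\mu$ then follows by looking at the imaginary part of $-iy\,m(iy)$ and using dominated convergence on $\int\frac{-yt}{t^2+y^2}\,dF(t)$ (bounded by $1/2$ in absolute value).

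For the last step your instinct is right that the direct boundary-value route is delicate, and your comparison-of-representations idea works. Concretely: $zm(z)=-1+\int\frac{t}{t-z}\,dF(t)$ is Pick, hence has a Nevanlinna representation with some finite positive measure $\nu$. Applying Stieltjes--Perron inversion to both expressions (Fubini is justified since $\big|\frac{ty}{(t-x)^2+y^2}\big|$ is integrable in $x$ over compacts and in $t$ against $dF$) identifies the boundary measures: $t\,dF(t)=(1+t^2)\,d\nu(t)\geq 0$, forcing $\mathrm{supp}\,F\subset[0,\infty)$. The analytic continuation to $\mathbb{C}\setminus\mathbb{R}_+$ is then immediate from the integral representation. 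This is exactly the argument sketched in your last paragraph, made precise; there is no gap.
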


\begin{lemma}[Resolvent identity]\label{le:res_id}\label{lem:resolvent} For invertible matrices $\A$ and $\B$, we have the following identity:
$$\A^{-1} - \B^{-1} \ = \ \A^{-1}(\B-\A)\B^{-1}\ . $$
\end{lemma}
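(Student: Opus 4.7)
The statement is the classical resolvent identity, so the plan is essentially a one-line algebraic verification; there is no real obstacle. I will simply distribute the product on the right-hand side and cancel.

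Concretely, I would start from the expression $\A^{-1}(\B-\A)\B^{-1}$ and expand it as $\A^{-1}\B\B^{-1} - \A^{-1}\A\B^{-1}$, using the associativity of matrix multiplication (which holds since $\A$ and $\B$ are square and invertible of the same size, so all the indicated products are well defined). Applying $\B\B^{-1} = \Id$ and $\A^{-1}\A = \Id$ then collapses the two terms to $\A^{-1} - \B^{-1}$, which is exactly the left-hand side.

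As an alternative (equivalent) route, I could left-multiply the target identity by $\A$ and right-multiply by $\B$ to reduce it to $\B - \A = \B - \A$, which is a tautology; since $\A$ and $\B$ are invertible, this equivalence is reversible and establishes the original identity. Either presentation takes a single display line, so no technical difficulty arises and the lemma is established without further machinery.
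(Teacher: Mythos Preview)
Your proposal is correct; the paper itself states this lemma without proof, treating it as a standard identity. Your direct expansion of $\A^{-1}(\B-\A)\B^{-1}$ is exactly the obvious verification one would supply, and nothing more is needed.
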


\begin{lemma}[A matrix inversion lemma, Equation (2.2) in \cite{SIL95}]
  \label{le:sil_matrix_inversion}
  \label{lem:inversion}
  Let $\A\in\CC^{N\times N}$ be Hermitian invertible, then for any vector $\x\in\CC^N$ and any scalar $\tau\in\CC$ such that $\A+\tau \x\x^\herm$ is invertible
  \begin{equation*} \x^\herm (\A+\tau \x\x^\herm)^{-1} = \frac{\x^\herm\A^{-1}}{1+\tau \x^\herm \A^{-1}\x}. 
  \end{equation*} 
\end{lemma}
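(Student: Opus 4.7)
The plan is to verify the identity directly by multiplying both sides on the right by the invertible matrix $\A+\tau\x\x^\herm$ and showing equality of the resulting vectors. This avoids any appeal to Sherman--Morrison as a black box and reduces the claim to a one-line algebraic manipulation.

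First, I would note that the scalar $1+\tau\x^\herm\A^{-1}\x$ appearing in the denominator is nonzero under the stated hypotheses. Indeed, by the matrix determinant lemma
\begin{equation*}
\det\bigl(\A+\tau\x\x^\herm\bigr) \;=\; \det(\A)\,\bigl(1+\tau\x^\herm\A^{-1}\x\bigr),
\end{equation*}
so the invertibility of $\A$ and of $\A+\tau\x\x^\herm$ forces $1+\tau\x^\herm\A^{-1}\x\neq 0$, and the right-hand side of the claimed identity is well-defined.

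Next, multiplying the candidate right-hand side $\frac{\x^\herm\A^{-1}}{1+\tau\x^\herm\A^{-1}\x}$ on the right by $\A+\tau\x\x^\herm$ gives
\begin{equation*}
\frac{\x^\herm\A^{-1}\bigl(\A+\tau\x\x^\herm\bigr)}{1+\tau\x^\herm\A^{-1}\x}
\;=\;\frac{\x^\herm+\tau\,(\x^\herm\A^{-1}\x)\,\x^\herm}{1+\tau\x^\herm\A^{-1}\x}
\;=\;\x^\herm,
\end{equation*}
which is exactly what we obtain from $\x^\herm(\A+\tau\x\x^\herm)^{-1}$ after multiplying on the right by $\A+\tau\x\x^\herm$. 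Since $\A+\tau\x\x^\herm$ is invertible, right-multiplying by it is injective on row vectors, and the two expressions must therefore coincide, proving the identity.

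There is no real obstacle here; the only subtlety worth flagging is ensuring that $1+\tau\x^\herm\A^{-1}\x\neq 0$, which is why the matrix determinant lemma step comes first. Everything else is a direct computation.
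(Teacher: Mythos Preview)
Your proof is correct. The paper does not actually prove this lemma; it merely states it with a citation to \cite{SIL95}, so there is no ``paper's own proof'' to compare against. Your direct verification --- checking that $1+\tau\x^\herm\A^{-1}\x\neq 0$ via the matrix determinant lemma and then right-multiplying both sides by the invertible matrix $\A+\tau\x\x^\herm$ --- is the standard elementary argument and is entirely adequate for this result.
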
 

\begin{lemma}[Trace lemma {\cite[Lemma 2.7]{SIL98}}]\label{lem:trace}
 Let $\Am_1,\Am_2,\dots$, with $\Am_N\in\CC^{N\times N}$, be a sequence of matrices with uniformly bounded spectral norm and let $\xv_N=\in\CC^N$ be random vectors of i.i.d.\@ entries with zero mean, variance $1/N$ and eighth order moment of order $\Oc(1/N^4)$, independent of $\Am_N$. Then, as $N\to\infty$,
\begin{align}
 \xv_N\htp\Am_N\xv_N-\frac1N\trace\Am_N\asto 0.
\end{align}
\end{lemma}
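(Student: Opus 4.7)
The plan is to prove the trace lemma by a moment-and-concentration argument. First, writing $\xv_N = (x_1,\ldots,x_N)\tp$ and $\Am_N = (a_{ij})$, a direct computation using $\mathbb{E}[x_i^* x_j] = \delta_{ij}/N$ and the independence of $\xv_N$ and $\Am_N$ gives $\mathbb{E}[\xv_N\htp \Am_N \xv_N \mid \Am_N] = \frac{1}{N}\trace \Am_N$. So it suffices to prove almost sure convergence of the centered quantity $Y_N \triangleq \xv_N\htp \Am_N \xv_N - \frac{1}{N}\trace \Am_N$ to zero.

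Next, I would bound the fourth moment $\mathbb{E}[|Y_N|^4]$. Expanding the sum,
\begin{equation*}
Y_N = \sum_{i \neq j} a_{ij}\, x_i^* x_j + \sum_i a_{ii}\left(|x_i|^2 - \tfrac{1}{N}\right),
\end{equation*}
and the fourth moment becomes a sum of terms of the form $a_{i_1 j_1} a_{i_2 j_2}^* a_{i_3 j_3} a_{i_4 j_4}^*$ weighted by mixed moments of the $x_k$'s. Using the i.i.d.\ zero-mean structure, only index configurations in which every $x_k$-factor appears paired (or in quadruples) survive; the $8$th-moment hypothesis $\mathbb{E}|x_k|^8 = \Oc(N^{-4})$ controls the small number of degenerate ``all equal'' configurations. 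A careful bookkeeping of the surviving terms yields
\begin{equation*}
\mathbb{E}[|Y_N|^4] \ \le\ \frac{C\,\|\Am_N\|^4}{N^2}
\end{equation*}
for some absolute constant $C>0$, uniformly in $N$. Since $\|\Am_N\|$ is assumed uniformly bounded, the right-hand side is $\Oc(N^{-2})$.

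Finally, Markov's inequality gives, for every $\varepsilon>0$, $\Pr(|Y_N|>\varepsilon) \le \mathbb{E}[|Y_N|^4]/\varepsilon^4 = \Oc(N^{-2})$, which is summable. The Borel--Cantelli lemma then yields $Y_N \asto 0$, as desired. If the matrices $\Am_N$ are themselves random (still with a.s.\ uniformly bounded spectral norm and independent of $\xv_N$), the bound above holds conditionally on $\Am_N$ on the probability-one event $\{\sup_N\|\Am_N\|<\infty\}$, and then Fubini/Tonelli together with Borel--Cantelli applied conditionally gives the same conclusion.

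The main obstacle, and really the only nontrivial step, is the combinatorial bookkeeping in the fourth-moment estimate: one must verify that the ``non-Gaussian'' contributions coming from configurations where indices collide are absorbed by the $\Oc(N^{-4})$ eighth-moment assumption on the entries of $\xv_N$, rather than by the weaker variance-only structure. Once this bound is in place, the rest is a standard Markov-plus-Borel--Cantelli routine.
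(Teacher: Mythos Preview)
Your proof sketch is correct and follows the standard route: center, bound the fourth moment via the diagonal/off-diagonal decomposition and the moment assumptions on the entries, then conclude by Markov plus Borel--Cantelli. Note, however, that the paper does not actually supply a proof of this lemma; it is stated as a fundamental tool and attributed to \cite[Lemma~2.7]{SIL98}. Your argument is essentially the one found in that reference (and in the Bai--Silverstein literature more broadly), so there is nothing to compare against in the paper itself.
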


\begin{lemma}[Trace lemma for isometric matrices, \cite{DEB03}]
  \label{le:trace_Haar}
  Let $\W$ be $n<N$ columns of an $N\times N$ Haar matrix and suppose $\w$ is a column of $\W$. Let $\B_N$ be an $N\times N$ random matrix, which is a function of all columns of $\W$ except $\w$ and $B=\sup_N \Vert \B_N \Vert < \infty$, then
  \begin{equation*}
    %\label{eq:Haar_E} 
    \Exp \left[\left\vert \w^\herm \B_N \w -\frac{1}{N-n} \tr({\bf \Pi}\B_N) \right\vert^4\right] \leq \frac{C}{N^2}, 
  \end{equation*}
  where ${\bf \Pi}=\I_N-\W\W^\herm+\w\w^\herm$ and $C$ is a constant which depends only on $B$ and $\frac{n}{N}$. 
\end{lemma}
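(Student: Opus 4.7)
The statement is a concentration bound for the quadratic form $\w^\herm \B_N \w$ where $\w$ lies on the unit sphere of a random subspace. My plan is to condition on the sigma-algebra $\mathcal{G}$ generated by the columns of $\W$ other than $\w$. By the hypothesis, $\B_N$ is $\mathcal{G}$-measurable, and so is the projector $\Pi = \I_N - \W\W^\herm + \w\w^\herm = \I_N - \W_{-\w}\W_{-\w}^\herm$, where $\W_{-\w}$ collects the other $n-1$ columns of $\W$. Conditional on $\mathcal{G}$, the vector $\w$ is uniformly distributed on the unit sphere of the $(N-n+1)$-dimensional subspace $\mathrm{Im}(\Pi)$. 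The standard Gaussian representation then gives
\begin{equation*}
\w \stackrel{d}{=} \frac{\Pi\gv}{\|\Pi\gv\|}, \qquad \gv \sim \mathcal{CN}(\zerov,\I_N),
\end{equation*}
with $\gv$ independent of $\mathcal{G}$, so that
\begin{equation*}
\w^\herm \B_N \w = \frac{\gv^\herm \Pi\B_N\Pi \gv}{\gv^\herm \Pi \gv}.
\end{equation*}

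With this representation in hand, I would invoke standard fourth-moment estimates for centred quadratic forms in complex i.i.d.\ Gaussian vectors (direct moment expansion, or a Hanson--Wright style bound): for any Hermitian $\Am$ with $\|\Am\|\leq \alpha$,
\begin{equation*}
\Exp\LSB |\gv^\herm\Am\gv - \tr(\Am)|^4 \RSB \leq \kappa\, \alpha^2\,(\tr(\Am^2))^2.
\end{equation*}
Applied with $\Am = \Pi\B_N\Pi$ (Hermitian, bounded by $B$, with $\tr((\Pi\B_N\Pi)^2) \leq B^2 \tr(\Pi) \lesssim N$) and separately with $\Am=\Pi$ (bounded by $1$, $\tr(\Pi^2)=\tr(\Pi)\asymp N$), both fourth moments are of order $N^2$ with constants depending only on $B$.

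The main obstacle is the ratio: although numerator and denominator are each well concentrated, a small denominator could in principle amplify fluctuations. I would handle this via the identity
\begin{equation*}
\frac{\gv^\herm \Pi\B_N\Pi \gv}{\gv^\herm\Pi\gv} - \frac{\tr(\Pi\B_N)}{\tr(\Pi)}
= \frac{\gv^\herm \Pi\B_N\Pi\gv - \tr(\Pi\B_N)}{\gv^\herm\Pi\gv} + \frac{\tr(\Pi\B_N)}{\tr(\Pi)}\cdot\frac{\tr(\Pi)-\gv^\herm\Pi\gv}{\gv^\herm\Pi\gv},
\end{equation*}
combined with a truncation on the event $E=\{\gv^\herm\Pi\gv\geq \tfrac{1}{2}\tr(\Pi)\}$. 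On $E$ the denominator is uniformly of order $N-n$, and Cauchy--Schwarz together with the two moment bounds above delivers the desired $C/N^2$ estimate. On $E^c$, sharp Gaussian concentration of the chi-squared-type variable $\gv^\herm\Pi\gv$ around $\tr(\Pi)=N-n+1$ yields an exponentially small probability, easily absorbed against the trivial deterministic bound $|\w^\herm\B_N\w - (N-n)^{-1}\tr(\Pi\B_N)|^4 \lesssim B^4$. Finally, the $\mathcal{O}(1/N)$ discrepancy between the conditional mean $(N-n+1)^{-1}\tr(\Pi\B_N)$ and the centring $(N-n)^{-1}\tr(\Pi\B_N)$ appearing in the statement contributes only $\mathcal{O}(1/N^4)$ to the fourth moment, which is absorbable into $C/N^2$. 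Removing the conditioning on $\mathcal{G}$ via the tower property concludes the proof, with the final constant $C$ depending only on $B$ and on $n/N$.
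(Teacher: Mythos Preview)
The paper does not supply its own proof of this lemma; it is stated in Appendix~\ref{sec:fundlemmas} with a citation to \cite{DEB03} and used as a black box throughout. So there is no ``paper's proof'' to compare against.

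Your argument is correct and is in fact the standard route to such results. The key probabilistic input --- that conditional on the remaining $n-1$ columns $\W_{-\w}$, the column $\w$ is uniform on the unit sphere of $\mathrm{Im}(\Pi)$ --- follows from the left $U(N)$-invariance of the Haar distribution (any unitary fixing $\mathrm{span}(\W_{-\w})$ pointwise acts transitively on that sphere while leaving the conditional law invariant). The Gaussian representation $\w \stackrel{d}{=} \Pi\gv / \lVert\Pi\gv\rVert$ then reduces the problem to fourth-moment control of $\gv^\herm\Pi\B_N\Pi\gv$ and $\gv^\herm\Pi\gv$, which you handle correctly: both have fourth central moment of order $(\tr\Pi)^2 = (N-n+1)^2$, and after dividing by $Y^4 \geq (\tr\Pi)^4/16$ on the good event $E$ you obtain $\Oc((N-n+1)^{-2})$, which becomes $C/N^2$ with $C$ depending on $B$ and on $1-n/N$, exactly as claimed. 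The complementary event has probability $e^{-c(N-n+1)}$ by standard chi-squared concentration and is harmless against the deterministic bound $|\w^\herm\B_N\w| + (N-n)^{-1}|\tr(\Pi\B_N)| \leq B(1+\tfrac{N-n+1}{N-n})$. The $1/N$ shift between $(N-n+1)^{-1}$ and $(N-n)^{-1}$ is indeed negligible at fourth-moment level. One small remark: the lemma does not assume $\B_N$ Hermitian, but your moment bounds go through unchanged for general $\B_N$ via Wick's formula for complex Gaussians, so there is no gap.
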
 

\begin{lemma}[Trace inequality]\label{lem:traceinequ}
 Let $\Am,\Bm,\Rm\in\CC^{N\times N}$, where $\Am$, $\Bm$, and $\Rm$ are nonnegative-definite, satisfying $\Bm\succ \Am$. Then
\begin{align}
 \trace\Rm\LB\Am^{-1} - \Bm^{-1}\RB > 0.
\end{align}
\end{lemma}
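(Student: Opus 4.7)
The plan is to reduce the trace inequality to two independent facts: operator monotonicity of the matrix inverse, and positivity of the trace of a product of positive (semi)definite matrices. Because $\Am^{-1}$ appears, I will implicitly treat $\Am$ as positive definite (otherwise $\Am^{-1}$ is not defined); then $\Bm\succ\Am\succeq 0$ automatically makes $\Bm$ positive definite. I also note that strict positivity of the conclusion requires $\Rm\neq 0$, which is implicit in the statement (otherwise the trace vanishes).

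First I would establish the operator inequality
\begin{equation*}
\Am^{-1}-\Bm^{-1}\succ 0.
\end{equation*}
The cleanest route is a congruence argument: since $\Am\succ 0$, the Hermitian square root $\Am^{1/2}$ is invertible, so $\Bm\succ\Am$ gives
\begin{equation*}
\Am^{-1/2}\Bm\Am^{-1/2}\succ \Am^{-1/2}\Am\Am^{-1/2}=\Id_N.
\end{equation*}
Inversion reverses order among positive definite matrices $\succ \Id_N$, hence $(\Am^{-1/2}\Bm\Am^{-1/2})^{-1}=\Am^{1/2}\Bm^{-1}\Am^{1/2}\prec\Id_N$. Conjugating back by $\Am^{-1/2}$ yields $\Bm^{-1}\prec\Am^{-1}$. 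Alternatively one could invoke the resolvent identity $\Am^{-1}-\Bm^{-1}=\Am^{-1}(\Bm-\Am)\Bm^{-1}$ together with a symmetrization via $\Bm^{-1/2}$, but the congruence approach is the shortest.

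Once $\Xm\triangleq \Am^{-1}-\Bm^{-1}\succ 0$ is in hand, the conclusion follows from the standard fact that, for $\Rm\succeq 0$ and $\Xm\succ 0$,
\begin{equation*}
\trace(\Rm\Xm)=\trace(\Rm^{1/2}\Xm\Rm^{1/2})\ge 0,
\end{equation*}
with equality only when $\Xm^{1/2}\Rm^{1/2}=0$; since $\Xm^{1/2}$ is invertible, this forces $\Rm=0$. Thus whenever $\Rm\neq 0$ (the only non-trivial case), $\trace\Rm(\Am^{-1}-\Bm^{-1})>0$, as claimed.

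There is no real obstacle here; the only delicate points are (i) recognizing that invertibility of $\Am$ is tacitly assumed, and (ii) handling the strictness carefully via the square-root factorization so that the trivial case $\Rm=0$ is excluded. The proof is one short paragraph in final form.
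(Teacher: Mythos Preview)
Your proof is correct and follows essentially the same two-step strategy as the paper: first establish $\Am^{-1}-\Bm^{-1}\succ 0$ from $\Bm\succ\Am$, then conclude positivity of $\trace\Rm(\Am^{-1}-\Bm^{-1})$ from positive semidefiniteness of $\Rm$. The only cosmetic differences are that the paper cites a reference for the operator-monotonicity step (where you give the congruence argument directly) and uses the eigendecomposition of $\Rm$ rather than its square root for the trace step.
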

\begin{IEEEproof}
Note that $\Bm\succ\Am$ implies by \cite[Corollary 7.7.4]{HOR85} $\Bm^{-1}\prec\Am^{-1}$. Thus, for any vector $\xv\in\CC^N$,
\begin{align}
 \xv\htp\LB\Am^{-1}-\Bm^{-1}\RB\xv> 0 .
\end{align}
Consider now the eigenvalue decomposition of the matrix $\Rm=\Um\Lambdam\Um\htp$, where $\Um=\LSB\uv_1,\dots,\uv_{N}\RSB$ and $\Lambdam=\diag(\lambda_1,\dots,\lambda_{N})$. Since $\lambda_i\ge 0\ \forall i$, we have 
\begin{align}
 \trace\Rm\LB\Am^{-1} - \Bm^{-1}\RB &= \sum_{i=1}^N \lambda_i \uv_i\htp\LB\Am^{-1} - \Bm^{-1}\RB\uv_i > 0.
\end{align}
\end{IEEEproof}

\begin{lemma}[Rank-$1$ perturbation lemma \cite{SIL95}]
  \label{le:rank1perturbation}
  \label{lem:rank1perturbation}
  Let $z<0$, $\A\in\CC^{N\times N}$, $\B\in\CC^{N\times N}$ with $\B$ Hermitian nonnegative definite, and $\v\in \CC^N$. Then,
  \begin{equation*}
    \left|\tr\left((\B-z\I_N)^{-1}-(\B+\v\v^\herm-z\I_N)^{-1}\right)\A\right|\leq\frac{\Vert \A \Vert}{|z|}. 
  \end{equation*}
\end{lemma}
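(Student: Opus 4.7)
The plan is to turn the difference of resolvents into a rank-one expression via the Sherman–Morrison identity (which is just Lemma~\ref{le:sil_matrix_inversion} applied to $\A+\tau\xv\xv\htp=\B+\vv\vv\htp-z\I_N$ with $\tau=1$), after which the bound will reduce to a one-line scalar inequality. First I would write
\begin{equation*}
(\B-z\I_N)^{-1}-(\B+\vv\vv\htp-z\I_N)^{-1}\ =\ \frac{(\B-z\I_N)^{-1}\vv\vv\htp(\B-z\I_N)^{-1}}{1+\vv\htp(\B-z\I_N)^{-1}\vv}
\end{equation*}
which follows from the resolvent identity (Lemma~\ref{le:res_id}) together with Lemma~\ref{le:sil_matrix_inversion}. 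Multiplying by $\A$, taking the trace, and using $\tr(\vv\vv\htp\M)=\vv\htp\M\vv$, I obtain the scalar identity
\begin{equation*}
\tr\bigl[\bigl((\B-z\I_N)^{-1}-(\B+\vv\vv\htp-z\I_N)^{-1}\bigr)\A\bigr]\ =\ \frac{\vv\htp(\B-z\I_N)^{-1}\A(\B-z\I_N)^{-1}\vv}{1+\vv\htp(\B-z\I_N)^{-1}\vv}.
\end{equation*}

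Next, set $\Rm\triangleq(\B-z\I_N)^{-1}$. Since $\B$ is Hermitian nonnegative definite and $z<0$, the matrix $\B-z\I_N$ has eigenvalues in $[|z|,\infty)$, so $\Rm$ is Hermitian positive definite with $\|\Rm\|\le 1/|z|$ and $\vv\htp\Rm\vv\ge 0$. The denominator therefore satisfies $1+\vv\htp\Rm\vv\ge 1$. For the numerator, I would bound
\begin{equation*}
|\vv\htp\Rm\A\Rm\vv|\ \le\ \|\A\|\,\|\Rm\vv\|^2\ =\ \|\A\|\,\vv\htp\Rm^2\vv\ \le\ \|\A\|\,\|\Rm\|\,\vv\htp\Rm\vv\ \le\ \frac{\|\A\|}{|z|}\,\vv\htp\Rm\vv.
\end{equation*}

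Combining these two estimates gives
\begin{equation*}
\left|\tr\bigl[\bigl((\B-z\I_N)^{-1}-(\B+\vv\vv\htp-z\I_N)^{-1}\bigr)\A\bigr]\right|\ \le\ \frac{\|\A\|}{|z|}\cdot\frac{\vv\htp\Rm\vv}{1+\vv\htp\Rm\vv}\ \le\ \frac{\|\A\|}{|z|},
\end{equation*}
where the last step uses $t/(1+t)\le 1$ for $t\ge 0$, yielding the claim. There is no real obstacle here: the only subtlety is to ensure the bound on the numerator is scaled by $\vv\htp\Rm\vv$ (rather than by $\|\vv\|^2$), which is what allows the final fraction to be uniformly bounded by~$1$ and thereby eliminates any dependence on $\vv$.
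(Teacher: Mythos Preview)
Your proof is correct. The paper does not actually prove this lemma; it simply states it with a citation to \cite{SIL95} and uses it as a black box, so there is no in-paper argument to compare against. Your Sherman--Morrison reduction followed by the scalar bound $|\vv\htp\Rm\A\Rm\vv|\le\|\A\|\,\|\Rm\|\,\vv\htp\Rm\vv$ and then $t/(1+t)\le 1$ is precisely the standard route (and essentially the argument behind the cited result), and every step is valid.
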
 

\begin{lemma}\cite[Lemma 1]{PEA08}\label{lem:convergence_ratios}
Denote $a_N$, $\overline{a}_N$, $b_N$ and $\overline{b}_N$ four infinite sequences of complex random variables indexed by $N$ and assume $a_N\asymp \overline{a}_N$ and $b_N\asymp \overline{b}_N$. If $|a_N|$, $|\overline{b}_N|$ and/or $|\overline{a}_N|$,$|b_N|$ are uniformly bounded above over $N$ (almost surely), then $a_Nb_N\asymp \overline{a}_N \overline{b}_N$. Similarly, if $|a_N|$, $|\overline{b}_N|^{-1}$ and/or $|\overline{a}_N|$,$|b_N|^{-1}$ are uniformly bounded above over $N$ (almost surely), then $a_N/b_N\asymp \overline{a}_N/ \overline{b}_N$.
\end{lemma}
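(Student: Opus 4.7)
The plan is to handle the two conclusions separately, each by an algebraic identity followed by invoking the almost-sure convergence hypotheses and the boundedness assumptions.

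For the product statement, I would start from the decomposition
\begin{equation*}
a_N b_N - \overline{a}_N\,\overline{b}_N \;=\; a_N(b_N-\overline{b}_N) \,+\, \overline{b}_N(a_N-\overline{a}_N)
\end{equation*}
which handles the first alternative sub-hypothesis (both $|a_N|$ and $|\overline{b}_N|$ almost surely uniformly bounded), since each summand is a product of an a.s.-bounded factor with a factor tending to zero almost surely. The symmetric decomposition
\begin{equation*}
a_N b_N - \overline{a}_N\,\overline{b}_N \;=\; b_N(a_N-\overline{a}_N) \,+\, \overline{a}_N(b_N-\overline{b}_N)
\end{equation*}
then covers the second alternative sub-hypothesis. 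The probability-one event on which the conclusion holds is the intersection of the two probability-one events on which $a_N-\overline{a}_N\to 0$ and $b_N-\overline{b}_N\to 0$, together with the (by hypothesis) probability-one event on which the relevant factor is uniformly bounded; this intersection still has probability one.

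For the ratio statement, I would use the identity
\begin{equation*}
\frac{a_N}{b_N}-\frac{\overline{a}_N}{\overline{b}_N} \;=\; \frac{a_N(\overline{b}_N-b_N) + b_N(a_N-\overline{a}_N)}{b_N\,\overline{b}_N}
\end{equation*}
(and its symmetric counterpart for the second alternative sub-hypothesis). The numerator is handled exactly as in the product case. The new issue is the denominator: we need $|b_N\overline{b}_N|$ to stay bounded away from zero almost surely. One of the two alternative hypotheses gives $|\overline{b}_N|^{-1}$ or $|b_N|^{-1}$ uniformly bounded; the other missing bound follows from $b_N-\overline{b}_N\asto 0$, since if, say, $|\overline{b}_N|\ge c>0$ for all large $N$ on a set of probability one, then $|b_N|\ge c/2$ eventually on (possibly a smaller but still) a set of probability one.

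There is no real obstacle; the only care needed is the usual bookkeeping that the countable intersection of probability-one events remains of probability one, so that all the ingredients (a.s.\ vanishing of $a_N-\overline{a}_N$ and $b_N-\overline{b}_N$, a.s.\ uniform boundedness of the prescribed moduli, and — for the quotient — a.s.\ boundedness away from zero of the denominator for $N$ large) can be simultaneously assumed on a common set of probability one, on which the decomposition above drives the whole expression to $0$.
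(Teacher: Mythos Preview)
Your argument is correct and is the standard way to prove this lemma. Note that the paper does not actually give its own proof of this statement: it simply cites \cite[Lemma~1]{PEA08} and uses the result as a black box. Your decomposition and the handling of the denominator in the ratio case (transferring the lower bound from $|\overline{b}_N|$ to $|b_N|$ via $b_N-\overline{b}_N\asto 0$) are exactly what is needed, and there is nothing to add.
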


\begin{lemma}[Tonelli theorem {\cite[Theorem 18.3]{BIL08}}]
  \label{le:tonelli}
  If $(\Omega,\mathcal F,P)$ and $(\Omega',\mathcal F',P')$ are two probability spaces, then for $f$ an integrable function with respect to the product measure $Q$ on $\mathcal F\times \mathcal F'$,
  \begin{equation*}
    \int_{\Omega\times \Omega'}f(x,y)Q(d(x,y)) = \int_\Omega \left[\int_{\Omega'}f(x,y) P'(dy) \right]P(dx)
  \end{equation*}
  and
  \begin{equation*}
    \int_{\Omega\times \Omega'}f(x,y)Q(d(x,y)) = \int_{\Omega'} \left[\int_{\Omega}f(x,y) P(dy) \right]P'(dx). 
  \end{equation*} 
\end{lemma}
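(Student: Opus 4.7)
The plan is to follow the classical measure-theoretic approach: establish the identity first on a small, concrete class of integrands, then bootstrap to all integrable $f$ using standard approximation theorems. The key subtlety one must handle at every stage is the measurability of the partial integrals $x\mapsto \int_{\Omega'} f(x,y)\,P'(dy)$ and $y\mapsto \int_{\Omega} f(x,y)\,P(dx)$ with respect to the appropriate $\sigma$-algebras.

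First, I would verify the identity for indicator functions of measurable rectangles $A\times B\in \mathcal{F}\times \mathcal{F}'$. Here both iterated integrals and the product integral reduce to $P(A)P'(B)$ by the very definition of the product measure $Q$, and measurability of the partial integrals is immediate (they are constant multiples of indicator functions). Next, I would extend the identity from rectangles to general sets $E\in\mathcal{F}\times \mathcal{F}'$ by the monotone class theorem (or Dynkin's $\pi$--$\lambda$ theorem): the collection $\mathcal{D}$ of sets for which the identity holds contains the $\pi$-system of measurable rectangles, is closed under complements within $\Omega\times\Omega'$ and under countable monotone unions (by monotone convergence for measures). This is the main obstacle, since one must simultaneously verify that the maps $x\mapsto P'(E_x)$ and $y\mapsto P(E^y)$ are measurable for each $E\in\mathcal{F}\times \mathcal{F}'$, where $E_x$ and $E^y$ denote the $x$- and $y$-sections of $E$.

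Once the identity holds for all indicators $1_E$, extension to nonnegative simple functions is immediate by linearity, and extension to arbitrary nonnegative measurable $f$ follows by taking an increasing sequence of simple functions $f_n\uparrow f$ and applying the monotone convergence theorem three times (once on $(\Omega,\mathcal{F},P)$, once on $(\Omega',\mathcal{F}',P')$, and once on $(\Omega\times\Omega',\mathcal{F}\times\mathcal{F}',Q)$). The measurability of the partial integrals $x\mapsto \int f_n(x,y)P'(dy)$ has already been established at the indicator stage, and the monotone limit preserves measurability.

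Finally, for an integrable function $f$ with respect to $Q$, I would decompose $f = f^+ - f^-$ into its positive and negative parts, apply the result already established to $f^+$ and $f^-$ separately (both of which have finite integrals by assumption), and conclude by linearity of the integral. The symmetric statement with the order of integration reversed is obtained by interchanging the roles of $(\Omega,\mathcal{F},P)$ and $(\Omega',\mathcal{F}',P')$ throughout the argument; no new ideas are required. The main technical work, as noted, is the measurability/section argument at the indicator stage, and this is the part of the proof where a direct construction via rectangles does not suffice and the monotone class machinery becomes essential.
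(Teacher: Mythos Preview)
Your outline is the standard measure-theoretic proof of Fubini--Tonelli and is correct. The paper, however, does not prove this lemma at all: it is stated in the appendix of fundamental lemmas with a citation to Billingsley's textbook, and is used as a black box in the subsequent arguments.
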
 

\section{Related results}
\label{sec:appendixB}
\begin{theorem}[{\cite[Theorem 1]{WAG10}}]\label{th:detequcorr}
 Let $\Bm_N=\Xm\Xm\htp$, where $\Xm\in\CC^{N\times n}$ is random. The $j$th column $\xv_{j}$ of $\Xm$ is given as $\xv_{j} = \Rm_j^{\frac12} \zv_{j}$, where the entries of $\zv_{j}\in\CC^{N}$ are i.i.d.\@ with zero mean, variance $1/N$ and finite moment of order $4+\epsilon$, for some common $\epsilon>0$, and $\Rm_{j}\in\CC^{N\times N}$ are Hermitian nonnegative definite matrices. Let $\Dm_N\in\CC^{N\times N}$ be a deterministic Hermitian matrix. Assume that both $\Rm_{j}$ and $\Dm_N$ have uniformly bounded spectral norms (with respect to $N$). Then, as $n,N\to\infty$ such that $0< \lim\inf N/n \leq \lim\sup N/n < \infty$, the following holds for any $z\in\CC\setminus\RR_+$:
\begin{align*}
 \frac1N\trace\Dm_N\LB\Bm_N-z\Id_N\RB^{-1} - \frac1N\trace\Dm_N\Tm_N(z) \asto 0
\end{align*}
where $\Tm_N(z)\in\CC^{N\times N}$ is defined as
\begin{align*}
 \Tm_N(z) = \LB\frac1N\sum_{j=1}^n\frac{\Rm_{j}}{1+\delta_{j}(z)}-z\Id_N\RB^{-1}
\end{align*}
and where $\delta_{1}(z),\dots,\delta_{n}(z)$ are given as the unique solution to the following set of implicit equations:
\begin{align}\label{eq:fixedpoint}
 \delta_{j}(z)=\frac1N\trace\Rm_{j}\LB\frac1N\sum_{j=1}^n\frac{\Rm_{j}}{1+\delta_{j}(z)}-z\Id_N\RB^{-1},\qquad j=1,\dots,n
\end{align}
such that $(\delta_{1}(z),\dots,\delta_{n}(z))\in\Sc^n$. For $z<0$, $\delta_{1}(z),\dots,\delta_{N,n}(z)$ are the unique nonnegative solutions to \eqref{eq:fixedpoint} and can be obtained by a standard fixed-point algorithm with initial values $\delta_{j}^{(0)}(z)=-1/z$ for $j=1,\dots,n$.
 Moreover, let $F_N$ be the empirical spectral distribution (e.s.d.) of $\Bm_N$ and denote by $\bar{F}_N$ the distribution function with Stieltjes transform $\frac1N\trace\Tm_N(z)$. Then, almost surely,
\begin{align*}
 F_N - \bar{F}_N \Rightarrow 0.
\end{align*}
\end{theorem}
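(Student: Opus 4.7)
The plan is to prove Theorem~\ref{th:detequcorr} by the Stieltjes transform method, following the blueprint of Appendix~\ref{app:sumRXT} and \cite{HAC07} but adapted to the per-column correlation structure. First I would dispatch the auxiliary issue of well-posedness. Fix $z<0$ and define the map $\phi:(x_1,\dots,x_n)\mapsto (\phi_1,\dots,\phi_n)$ with $\phi_j(\xv)=\frac1N\tr\Rm_j\bigl(\frac1N\sum_{l=1}^n \frac{\Rm_l}{1+x_l}-z\Id_N\bigr)^{-1}$. A direct check (positivity, monotonicity in each $x_l$ via Lemma~\ref{lem:traceinequ}, and sub-homogeneity) shows $\phi$ is a standard interference function in the sense of Definition~\ref{def:standardfunctions}, so Theorem~\ref{th:standardfunctions} gives a unique nonnegative fixed point, reachable by the iteration with initialization $\delta_j^{(0)}(z)=-1/z$. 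Analyticity of each $\delta_j$ as a Stieltjes transform on $\CC\setminus\RR_+$ then follows by the Vitali/identity-theorem argument already used to extend the $\bar f_i$ in Step~1 of Appendix~\ref{app:sumRXT}.

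The core is a deterministic comparison between $\Qm(z)\triangleq(\Bm_N-z\Id_N)^{-1}$ and $\Tm_N(z)$. Using the resolvent identity
\begin{align*}
\Qm-\Tm_N \;=\; \Qm\bigl[\Tm_N^{-1}-(\Bm_N-z\Id_N)\bigr]\Tm_N \;=\; \Qm\Bigl[\tfrac1N\!\sum_{j=1}^n \tfrac{\Rm_j}{1+\delta_j} - \sum_{j=1}^n \xv_j\xv_j\htp\Bigr]\Tm_N,
\end{align*}
taking trace against $\Dm_N$, and processing each quadratic form via Lemma~\ref{lem:inversion} with $\Bm_{(j)}=\Bm_N-\xv_j\xv_j\htp$ and $\Qm_{(j)}=(\Bm_{(j)}-z\Id_N)^{-1}$, I obtain
\begin{align*}
\tfrac1N\tr\Dm_N(\Qm-\Tm_N)\;=\;\tfrac1N\sum_{j=1}^n\Bigl[\tfrac{1}{1+\delta_j}\tfrac1N\tr \Rm_j\Tm_N\Dm_N\Qm\;-\;\tfrac{\xv_j\htp\Tm_N\Dm_N\Qm_{(j)}\xv_j}{1+\xv_j\htp\Qm_{(j)}\xv_j}\Bigr].
\end{align*}
Applying the trace lemma (Lemma~\ref{lem:trace}, which holds here because $\xv_j$ is independent of $\Bm_{(j)}$) together with the rank-one perturbation lemma (Lemma~\ref{lem:rank1perturbation}), the numerators and the denominators in the right-hand side concentrate on $\frac1N\tr\Rm_j\Tm_N\Dm_N\Qm$ and $1+\alpha_j$ respectively, where $\alpha_j\triangleq\frac1N\tr\Rm_j\Qm$. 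After telescoping, the bracket reduces to a sum of terms proportional to $\alpha_j-\delta_j$ times uniformly bounded quantities (using $\|\Rm_j\|\le R$, $\|\Dm_N\|$ bounded, $\|\Tm_N\|,\|\Qm\|\le 1/|\mathrm{Im}\,z|$ or $1/|z|$), plus a $o(1)$ stochastic error controlled in fourth moment.

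It remains to close the loop on $\varepsilon_N\triangleq\max_j|\alpha_j-\delta_j|$. Taking $\Dm_N=\Rm_k$ and repeating the computation one gets a self-consistent inequality of the form $\varepsilon_N\le K_1(z)\,\varepsilon_N + o(1)$ a.s., where $K_1(z)<1$ once $|z|$ is large enough (or $\mathrm{Im}\,z$ is large enough), so that $\varepsilon_N\asto 0$ on an initial domain; this is the analogue of Step~3 in Appendix~\ref{app:sumRXT}. Substituting back yields $\frac1N\tr\Dm_N(\Qm-\Tm_N)\asto 0$ on that domain, and the Vitali convergence theorem together with uniform boundedness on compact subsets of $\CC\setminus\RR_+$ propagates the convergence to all $z\in\CC\setminus\RR_+$. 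The weak convergence $F_N-\bar F_N\Rightarrow0$ finally comes from Lemma~\ref{le:properties_ST} applied to $\Dm_N=\Id_N$. The main obstacle is the self-consistent control of $\varepsilon_N$: one must carefully track the dependence of all bounds on $z$, the spectral bounds $R$ and $\|\Dm_N\|$, and the denominators $1+\delta_j$ and $1+\alpha_j$, keeping them uniformly away from zero to ensure the contraction constant $K_1(z)$ is strictly less than~$1$ on a non-empty open subset of $\CC\setminus\RR_+$, before invoking analytic continuation.
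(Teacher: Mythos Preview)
The paper does not prove this statement at all: Theorem~\ref{th:detequcorr} is placed in the ``Related results'' appendix and is simply quoted from \cite[Theorem~1]{WAG10} as an external tool, with no proof given here. So there is nothing in the present paper to compare your argument against.

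That said, your sketch is the standard Stieltjes-transform proof for this class of models and is essentially the approach of \cite{WAG10}. The structure---resolvent identity, column-by-column expansion via Lemma~\ref{lem:inversion}, concentration of quadratic forms by Lemma~\ref{lem:trace}, rank-one perturbation via Lemma~\ref{lem:rank1perturbation}, closing the loop on $\max_j|\alpha_j-\delta_j|$ by a contraction argument for large $|z|$, and then Vitali extension---is exactly right. One small remark: your existence/uniqueness step via standard interference functions is the argument used in \emph{this} paper (Appendix~\ref{app:sumRXT}), whereas \cite{WAG10} proves uniqueness by a direct contraction/monotonicity argument on the fixed-point map; either route works. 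Also be careful with the moment hypothesis: the model only assumes a finite $(4+\epsilon)$-moment, so Lemma~\ref{lem:trace} as stated (eighth-moment) is slightly stronger than what you have; in practice one uses a Bai--Silverstein trace lemma with a $(4+\epsilon)$-moment bound yielding $\mathcal O(N^{-2})$ control of the fourth moment of $\xv_j^{\sf H}\Am\xv_j-\frac1N\tr\Am$, which is what \cite{WAG10} does.
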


\vspace{10pt}\begin{theorem}[{\cite{wagnerphd}}]\label{th:logdetcor}
	Under the assumptions of Theorem~\ref{th:detequcorr}, let ${\sigma^2}>0$ and define $V_N({\sigma^2})=\frac1N\log\det\LB\Id_N+\frac{1}{{\sigma^2}}\Bm_N\RB$. Then, as $N,n\to \infty$,
\begin{align*}
	\mathbb{E}V_N({\sigma^2}) - \bar{V}_N({\sigma^2}) \asto 0
\end{align*}
where
\begin{align*}
	\bar{V}_N({\sigma^2}) &= \frac1N\log\det\LB\Id_N  +  \frac{1}{{\sigma^2}}\frac1N\sum_{j=1}^n\frac{\Rm_{j}}{1+\delta_{j}}\RB + \frac1N\sum_{j=1}^n\log\LB1+\delta_{j}\RB - \frac1N\sum_{j=1}^n \frac{\delta_{j}}{1+\delta_{j}}
\end{align*}
and where $\delta_{j}=\delta_{j}(-{\sigma^2})$ for $j=1,\dots,n$ are given by Theorem~\ref{th:detequcorr}.
\end{theorem}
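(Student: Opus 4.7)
The plan is to write both $V_N(\sigma^2)$ and $\bar V_N(\sigma^2)$ as integrals of Stieltjes-transform differences on the negative real half-line, and to invoke Theorem~\ref{th:detequcorr} together with a uniform-integrability argument to close the gap. With $m_{\Bm_N}(z)\triangleq \frac1N\trace(\Bm_N-z\Id_N)^{-1}$ and $\bar m_N(z)\triangleq \frac1N\trace\Tm_N(z)$, the Shannon-transform identity
\begin{equation*}
V_N(\sigma^2)\ =\ \int_{\sigma^2}^\infty\LB\frac{1}{\omega}-m_{\Bm_N}(-\omega)\RB d\omega
\end{equation*}
follows from differentiating $V_N$ in $\sigma^2$ and using $V_N(\infty)=0$. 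The first task is to establish the analogous representation for $\bar V_N$.

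The main step is to compute $\tfrac{d}{d\sigma^2}\bar V_N(\sigma^2)$. Writing $\bar V_N = G_N(\sigma^2,\delta_1,\ldots,\delta_n)$ with $G_N$ defined by the right-hand side of the statement, the $\delta_j$ depend implicitly on $\sigma^2$ through the fixed-point system \eqref{eq:fixedpoint}, so direct differentiation is a priori delicate. The key observation that singles out this particular $G_N$ as the correct functional is that each partial derivative $\partial G_N/\partial\delta_j$ vanishes at the fixed point. Indeed, using the fixed-point relation $\delta_j=\tfrac1N\trace(\Rm_j\Tm_N(-\sigma^2))$, which follows from $\Tm_N(-\sigma^2)^{-1}=\frac{1}{N}\sum_k\Rm_k/(1+\delta_k)+\sigma^2\Id_N$, the three summands of $G_N$ contribute respectively $-\delta_j/[N(1+\delta_j)^2]$, $1/[N(1+\delta_j)]$ and $-1/[N(1+\delta_j)^2]$, which telescope to zero. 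The chain rule therefore reduces $\tfrac{d}{d\sigma^2}\bar V_N$ to the explicit $\sigma^2$-derivative of $G_N$, and a short algebraic manipulation using the same identity $\frac{1}{N}\sum_j\Rm_j/(1+\delta_j)=\Tm_N(-\sigma^2)^{-1}-\sigma^2\Id_N$ yields $\tfrac{d}{d\sigma^2}\bar V_N(\sigma^2)=\bar m_N(-\sigma^2)-1/\sigma^2$. Combined with $\bar V_N(\sigma^2)\to 0$ as $\sigma^2\to\infty$ (immediate from the bound $\delta_j\leq R/\sigma^2\to 0$), this produces the desired integral representation for $\bar V_N$.

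Subtracting the two representations gives
\begin{equation*}
V_N(\sigma^2)-\bar V_N(\sigma^2)\ =\ \int_{\sigma^2}^\infty\LB\bar m_N(-\omega)-m_{\Bm_N}(-\omega)\RB d\omega.
\end{equation*}
Theorem~\ref{th:detequcorr} (applied with $\Dm_N=\Id_N$) provides the pointwise almost-sure convergence of the integrand to zero for each $\omega>0$. To pass the convergence through the integral in expectation, I would use the pointwise bound
\begin{equation*}
0\leq \frac{1}{\omega}-m_{\Bm_N}(-\omega)\ =\ \frac{1}{N\omega}\trace\LSB\Bm_N(\Bm_N+\omega\Id_N)^{-1}\RSB\ \leq\ \frac{\trace\Bm_N}{N\omega^2},
\end{equation*}
together with $\EE\tfrac1N\trace\Bm_N=\tfrac{1}{N^2}\sum_j\trace\Rm_j\leq (n/N)R$, and the analogous envelope $1/\omega-\bar m_N(-\omega)\leq (n/N)R/\omega^2$ obtained from the identity $\Id_N-\omega\Tm_N(-\omega)=\Tm_N(-\omega)\cdot\frac{1}{N}\sum_j\Rm_j/(1+\delta_j)$. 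Both integrands are thus $\Oc(1/\omega^2)$ in expectation uniformly in $N$, so Fubini and dominated convergence (applied after taking expectation of the integrand, which tends to $0$ pointwise) deliver $\EE V_N(\sigma^2)-\bar V_N(\sigma^2)\to 0$.

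The principal obstacle is the partial-derivative cancellation in the second step: all three summands of $G_N$ must be so arranged that their $\delta_j$-derivatives telescope exactly, and this precise balance is what singles out the stated $\bar V_N$ (as opposed to some ad hoc candidate) as the correct deterministic equivalent. Once this algebraic identity is in place, the remainder is a standard integration-of-Stieltjes-transform argument.
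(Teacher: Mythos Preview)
Your proof is correct. The paper itself does not prove Theorem~\ref{th:logdetcor}: it is quoted from an external reference (\cite{wagnerphd}) and placed in the ``Related results'' appendix without argument. That said, your approach is precisely the technique the paper employs for its own analogous statements. In Appendix~\ref{app:mutual_info} (proof of Theorem~\ref{th:mutual_info}) the paper carries out exactly the same partial-derivative cancellation you identify---showing that $\partial\bar I/\partial x_i$ and $\partial\bar I/\partial \bar x_i$ vanish at the fixed point so that the chain rule collapses $\tfrac{d}{d\sigma^2}\bar I_N^{(a)}$ to $\bar m_N(-\sigma^2)-1/\sigma^2$---and in the last part of Appendix~\ref{app:proof_mutinf} the paper uses the same $\Oc(1/\omega^2)$ envelope (attributed there to \cite{HAC07}) together with dominated convergence to pass from almost-sure convergence of Stieltjes transforms to convergence of $\mathbb E I_N-\bar I_N$. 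Your argument is therefore both correct and aligned with the paper's methodology.
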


\vspace{10pt}\begin{corollary}\label{cor:logdet}
 Under the assumptions of Theorem~\ref{th:logdetcor}, assume additionally that the matrices $\Rm_{j}$, $j=1,\dots,n$, are drawn from a finite set of Hermitian nonnegative-definite matrices. Then, as $N,n\to\infty$,
\begin{align}
	V_N({\sigma^2}) - \bar{V}_N({\sigma^2}) \asto 0
\end{align}
where $V_N({\sigma^2})$ and $\bar{V}_N({\sigma^2})$ are defined as in Theorem~\ref{th:logdetcor}.
\end{corollary}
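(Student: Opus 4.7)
The plan is to upgrade the deterministic convergence $\mathbb{E}V_N(\sigma^2) - \bar{V}_N(\sigma^2) \to 0$ furnished by Theorem~\ref{th:logdetcor} to the claimed almost sure convergence. Since both $\mathbb{E}V_N(\sigma^2)$ and $\bar{V}_N(\sigma^2)$ are deterministic, it suffices by the triangle inequality to establish the concentration $V_N(\sigma^2) - \mathbb{E}V_N(\sigma^2) \asto 0$. The natural tool is a column-wise martingale difference decomposition of $V_N$ together with Burkholder's inequality.

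First, I would introduce the filtration $\mathcal{F}_j = \sigma(\xv_1,\dots,\xv_j)$ and the conditional expectations $\mathbb{E}_j[\cdot]=\mathbb{E}[\cdot\mid \mathcal{F}_j]$ with $\mathbb{E}_0=\mathbb{E}$. Writing $V_N - \mathbb{E}V_N = \sum_{j=1}^n \gamma_j$ with $\gamma_j = (\mathbb{E}_j-\mathbb{E}_{j-1})V_N$, the sequence $\{\gamma_j\}$ is a martingale difference sequence. Letting $\Bm_N^{(j)} = \Bm_N - \xv_j\xv_j\htp$, the matrix determinant lemma yields the rank-one identity
\begin{equation*}
V_N - V_N^{(j)} \;=\; \frac{1}{N}\log\LB 1 + \xv_j\htp(\Bm_N^{(j)} + \sigma^2\Id_N)^{-1}\xv_j\RB,
\end{equation*}
where $V_N^{(j)}$ is defined identically to $V_N$ but with the column $\xv_j$ removed. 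Since $V_N^{(j)}$ is independent of $\xv_j$, $(\mathbb{E}_j-\mathbb{E}_{j-1})V_N^{(j)} = 0$, hence $\gamma_j = (\mathbb{E}_j-\mathbb{E}_{j-1})[V_N-V_N^{(j)}]$.

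Second, I would produce a bound of the form $\mathbb{E}|\gamma_j|^4 = O(N^{-4})$. The finite-set assumption on the $\Rm_j$ guarantees a uniform spectral norm bound $\|\Rm_j\|\leq R$, and combined with $\|(\Bm_N^{(j)}+\sigma^2\Id_N)^{-1}\|\leq 1/\sigma^2$ this bounds the quadratic form $\xv_j\htp(\Bm_N^{(j)}+\sigma^2\Id_N)^{-1}\xv_j$ in moments. Concretely, the trace lemma (Lemma~\ref{lem:trace}) together with the $(4+\epsilon)$-moment assumption on the entries of $\zv_j$ gives that this quadratic form differs from the deterministic quantity $\frac{1}{N}\trace\Rm_j(\Bm_N^{(j)}+\sigma^2\Id_N)^{-1}$ by a centered random variable with fourth moment $O(N^{-2})$. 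Inserting the expansion $\log(1+x)-\log(1+y) = \int_y^x \frac{ds}{1+s}$ and controlling the integrand using $\xv_j\htp (\Bm_N^{(j)}+\sigma^2\Id_N)^{-1}\xv_j \geq 0$, one deduces $\mathbb{E}|\gamma_j|^4 \leq C/N^4$ for a constant $C>0$ independent of $j,N$. Burkholder's inequality then gives
\begin{equation*}
\mathbb{E}|V_N - \mathbb{E}V_N|^4 \;\leq\; K\,\mathbb{E}\LB\sum_{j=1}^n|\gamma_j|^2\RB^2 \;\leq\; K\,n\sum_{j=1}^n \mathbb{E}|\gamma_j|^4 \;=\; O(N^{-2}),
\end{equation*}
so by Markov's inequality and the Borel--Cantelli lemma, $V_N-\mathbb{E}V_N \asto 0$, which together with Theorem~\ref{th:logdetcor} yields the claim.

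The main obstacle is the $O(N^{-4})$ moment bound on $\gamma_j$. A naive application of the logarithmic identity produces quantities like $\log(1+R\|\zv_j\|^2/\sigma^2)$ whose direct moment control is insufficient; one needs the finer trace-lemma concentration of the quadratic form to get a usable bound on the increment, and this is precisely where the uniform spectral control afforded by the finite-set assumption is essential. Without that assumption, one would have to truncate the spectrum of the $\Rm_j$ and carry out an additional approximation step before the same Burkholder-plus-Borel--Cantelli machinery could be deployed.
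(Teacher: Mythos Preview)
Your approach is correct in outline and constitutes a genuinely different proof from the paper's. The paper does not use concentration of $V_N$ around its mean at all; instead, it invokes a result from \cite{hoydis2011} stating that under the finite-set assumption on the $\Rm_j$, the spectral norm $\|\Bm_N\|$ is almost surely uniformly bounded. Since $F_N-\bar{F}_N\Rightarrow 0$ almost surely by Theorem~\ref{th:detequcorr} and both distributions then have (almost surely) compact support, integrating the bounded continuous function $\lambda\mapsto\log(1+\lambda/\sigma^2)$ against $F_N-\bar{F}_N$ yields the conclusion directly via \cite[Theorem~25.8~(ii)]{billingsley}.

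Your martingale route is more computational but also more robust: as you have written it, you only use the uniform bound $\|\Rm_j\|\leq R$, which is already part of the hypotheses of Theorem~\ref{th:detequcorr}, so the finite-set condition plays no essential role in your argument and you are in effect proving a stronger statement. The paper's proof, by contrast, genuinely needs the finite-set assumption to secure the almost sure spectral-norm bound on $\Bm_N$; without it, $\log(1+\lambda/\sigma^2)$ is unbounded on the support and the weak-convergence argument breaks down.

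One caveat: the specific bound $\mathbb{E}|\gamma_j|^4=O(N^{-4})$ you invoke goes through Lemma~\ref{lem:trace}, whose hypothesis is an eighth-moment condition on the entries, whereas Theorem~\ref{th:detequcorr} only assumes $(4+\epsilon)$ moments. This is easily repaired: with fourth moments one still has $\mathbb{E}|Q_j-t_j|^2=O(N^{-1})$, hence $\mathbb{E}|V_N-\mathbb{E}V_N|^2=\sum_j\mathbb{E}|\gamma_j|^2=O(N^{-2})$ by the orthogonality of martingale increments, and Chebyshev plus Borel--Cantelli already give $V_N-\mathbb{E}V_N\asto 0$. Burkholder for $p>2$ is not needed.
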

\begin{IEEEproof}
 It was shown in \cite[Proof of Theorem 3]{hoydis2011} that $\Bm_N$ has almost surely uniformly bounded spectral norm as $N,n\to\infty$ if the matrices $\Rm_{j}$ are drawn from a finite set of matrices. Thus, $F_N$ and $\bar{F}_N$ as defined in Theorem~\ref{th:detequcorr} have (almost surely) bounded support. Consider now a set $A\subset\Omega$, $\Omega$ generating the matrices $\Bm_N$, for which $\Bm_N$ has bounded spectral norm, and a set $B\subset\Omega$ for which $F_N - \bar{F}_N \Rightarrow 0$. Since $P(A)=P(B)=P(A\cap B)=1$, it follows from \cite[Theorem 25.8 (ii)]{billingsley}, that, as $N,n\to\infty$
\begin{align}
	\int\log(1+x^{-1}\lambda)dF_N(\lambda) - \int\log(1+x^{-1}\lambda)d\overline{F}_N(\lambda)\asto 0
\end{align}
which is equivalent to stating that $V_N(x) - \bar{V}_N(x) \asto 0$.
\end{IEEEproof}

\bibliography{IEEEconf,IEEEabrv,tutorial_RMT,bibliography}

\end{document}